\documentclass[reqno]{amsart}

\usepackage[T1]{fontenc}
\usepackage[utf8]{inputenc}
\usepackage{times}
\usepackage{amsmath,amssymb,amsfonts,amsthm,mathtools}
\usepackage{mathrsfs}
\usepackage{bm}
\usepackage[top=1.5in,bottom=1.43in,left=1.25in,right=1.25in]{geometry}
\usepackage[colorlinks=true,citecolor=blue,linkcolor=blue,urlcolor=blue]{hyperref}

\allowdisplaybreaks
\newtheorem{theorem}{Theorem}[section]
\newtheorem{lemma}[theorem]{Lemma}
\newtheorem{proposition}[theorem]{Proposition}
\newtheorem{corollary}[theorem]{Corollary}

\theoremstyle{definition}

\theoremstyle{remark}
\newtheorem{remark}[theorem]{Remark}

\renewcommand{\leq}{\leqslant}

\renewcommand{\geq}{\geqslant}

\newcommand{\F}{\mathbb F}
\newcommand{\eqdef}{\triangleq}
\newcommand{\T}{\intercal}
\newcommand{\bch}{\mathrm{BCH}}
\newcommand{\bfs}{\bm{s}}
\newcommand{\bft}{\bm{t}}
\newcommand{\bfu}{\bm{u}}
\DeclareMathOperator{\Gal}{Gal}
\DeclareMathOperator{\Span}{span}

\title[Generalized covering radii of BCH codes]{Asymptotically Tight Bounds for
  Generalized Covering Radii of Binary Primitive BCH Codes at All Higher Orders}

\author{Zeev Vladimir Belinsky}
\address{Faculty of Civil and Environmental Engineering, Technion --
Israel Institute of Technology, Haifa 32000, Israel}
\email{z.belinsky@campus.technion.ac.il}
\author{Aryeh Lev Zabokritskiy (Yohananov)}
\address{Department of Computer Science, Tel-Hai University of Kiryat
Shmona in the Galilee, Israel}
\address{MIGAL -- Galilee Research Institute, P.O. Box 831,
Kiryat Shmona 1101602, Israel}
\email{yuhanalev@telhai.ac.il}
\subjclass[2020]{94B15, 11T71, 11G20}
\keywords{generalized covering radius, BCH codes, algebraic varieties over
finite fields, monodromy, generalized Hamming weights}

\hypersetup{
  pdftitle={Asymptotically Tight Bounds for Generalized Covering Radii of Binary Primitive BCH Codes at All Higher Orders},
  pdfauthor={Zeev Vladimir Belinsky and Aryeh Lev Zabokritskiy (Yohananov)},
  pdfsubject={Coding theory and generalized covering radii of primitive BCH codes},
  pdfkeywords={generalized covering radius, BCH codes, algebraic varieties over finite fields, monodromy, generalized Hamming weights}
}

\begin{document}

\begin{abstract}
We study how few parity-check columns are needed to span several prescribed
syndromes of a binary primitive BCH code of length \(2^m-1\), where \(m\) is
the extension degree.  For the four-error-correcting family, the second generalized covering
radius is exactly \(11\) for \(m\geq55\), and it is either \(11\) or \(12\)
for \(m\geq16\).  For every fixed error parameter, we give an explicit stable
two-value bound for the second radius together with an arithmetic criterion
for exactness.  For every generalized-covering order \(t\geq2\), we also
obtain explicit stable lower and upper bounds.  Under an additional explicit
field-size condition, their additive gap is bounded independently of \(t\) for
every fixed \(e\), so the bounds are asymptotically tight as \(t\) grows.  The
upper bound is the natural common-core count whenever \(2\leq e\leq6\), and in
general differs from it by a correction bounded independently of \(t\).  At
order three this gives stable
intervals for three through six errors.  We further prove that, for three
errors, every three-dimensional syndrome space confined to the highest
coordinate has exact support size ten once \(m\geq18\).  A single self-contained
completion-cover framework supplies both the exact second-order results and
the asymptotically tight bounds at all higher orders.
\end{abstract}

\maketitle

\section{Introduction}
\label{sec:introduction}

The generalized covering radii of a linear code ask how economically several
syndromes can share parity-check columns.  More precisely, \(R_t(C)\) is the
worst-case minimum size of one column set whose binary span contains \(t\)
prescribed syndromes.  These radii were introduced
in~\cite{ElimelechFirerSchwartz2021}, where equivalent coding, geometric, and
scalar-extension formulations were established.  The ordinary covering
radius is the first member of the hierarchy.  A complementary
finite-geometric formulation in terms of \((\rho,t)\)-saturating sets appears
in~\cite{AlfaranoMarinoNeriTrombetti2026}.

For integers \(e\geq1\) and \(m\geq2\), let
\[
 C_{e,m}\eqdef\bch(e,m)
\]
be the binary primitive narrow-sense BCH code of length \(2^m-1\) and designed
distance \(2e+1\).  Thus \(e\) is the error-correction parameter, whereas the
subscript in \(R_t\) is the generalized-covering order.  Over
\(F=\F_{2^m}\), the parity-check column indexed by \(x\in F^*\) is
\[
 h_e(x)\eqdef(x,x^3,\ldots,x^{2e-1})^\T.
\]
Consequently, the problem is to span a binary syndrome space by as few of
these columns as possible.

After adjoining the Frobenius-determined even moments, the one-target
distinct-locator equations belong to the moment-subset-sum framework studied
algorithmically by Lai et al.~\cite[Theorem~1]{LaiMarinoRobinsonWan2020} and
geometrically through diagonal equations by Gottig et
al.~\cite[Theorems~1.2 and~1.4]{GottigPerezPrivitelli2024}.  Our
higher-order problem instead requires several targets to share one locator
core and must also control the collision strata of the ordered-slot model.

The sharpest new second-order statement concerns four errors:
\[
 11\leq R_2(C_{4,m})\leq12\quad(m\geq16),
 \qquad
 R_2(C_{4,m})=11\quad(m\geq55).
\]
For arbitrary fixed \(e\), the same common-core method gives an explicit
stable bracket \(3e-1\leq R_2(C_{e,m})\leq3e\) and a finite-field criterion
for the lower value.  The exact second radius was already known for two and
three errors~\cite{YohananovSchwartz2025,EssayagZabokritskiy2026BCH3}; the new
second-order content begins at four errors.

A complementary precursor introduced a weak second generalized radius for
the triple-error-correcting family and proved the upper bound \(9\) for odd
\(m\geq11\) and even \(m\geq20\)
\cite[Theorem~5.2]{OzbudakOzturk2026Second}.  That result does not determine
the ordinary second generalized radius.

The dedicated second-order framework, including
Theorems~\ref{r2:thm:four-errors}
and~\ref{r2:thm:effective-general-errors}, was obtained by A.L.Z. and
submitted separately before preparation of this unified higher-order version.
It is included here, together with its proofs, so that the present
manuscript is self-contained.  The unconditional \(e=5,6\) thresholds
\(m\geq81,108\) in Corollary~\ref{r2:cor:sample-exact-families} are sharper
consequences of the low-error sign calculation developed for the unified
paper.  The \(e=7,8\) root-free instances in that corollary remain part of
the separately submitted second-order analysis.

We now introduce the few parameters needed to state both the second- and
higher-order results.  Put
\[
 k_e\eqdef2e-3,
 \qquad
 D_e\eqdef\prod_{j=1}^{e}(2j-1),
 \qquad
 g_{e,m}\eqdef\gcd(k_e,2^m-1),
\]
and, for \(t\geq2\), set
\[
 B_{e,t}\eqdef\min\left\{D_e^t,
 (e!)^t\left(1+\frac{(e-1)D_e}{e!}t\right)^{e-1}\right\}.
\]
The first term is the ordinary B\'ezout degree budget; the second bounds the
multihomogeneous degree of the unique dominating component of the simultaneous
incidence variety.  The
second-order arithmetic obstruction is
\[
 G_e(x)\eqdef
 \gcd_{\F_2[x]}\bigl(x^{k_e}+1,(x+1)^{k_e}+1\bigr).
\]
In particular, once the explicit field-size condition holds, the absence of
a root of \(G_e\) in \(\F_{2^m}\) forces the exact value \(3e-1\).  Finally,
define
\[
 \ell_{e,m,t}\eqdef
 \left\lceil\log_2\left(1+\frac{2^t-1}{g_{e,m}}\right)\right\rceil,
 \qquad
 c_{e,m,t}\eqdef
 \left\lceil\log_2\bigl(t-\ell_{e,m,t}+1\bigr)\right\rceil.
\]

For the third generalized radius, {\"O}zbudak and {\"O}zt{\"u}rk
\cite{OzbudakOzturk2026Third} determined the even-degree value for the
double-error-correcting family and left two adjacent possibilities in odd
degree.  Xiong and Yip~\cite{XiongYip2026} subsequently obtained a general
supercode lower bound and stable bounds for every generalized order in that
family.  Their estimates are sharper when \(e=2\).  The next theorem treats
every \(e\geq2\) and every \(t\geq2\) simultaneously; its \(e=2\) clause is
included only for uniformity and makes no sharper claim in that case.

\begin{theorem}[Uniform higher-order bound]
\label{thm:intro-general-order}
Let \(e,t\geq2\).  If
\[
 2^m>\max\{2eB_{e,t}^{2},\,2B_{e,t}^{4}\},
\]
then
\[
 \sum_{j=0}^{t-1}\left\lceil\frac{2e-1}{2^j}\right\rceil
 \leq R_t(C_{e,m})\leq (t+1)e-1+c_{e,m,t}.
\]
For \(2\leq e\leq6\), the upper bound improves to
\[
 R_t(C_{e,m})\leq(t+1)e-1.
\]
If, in addition,
\[
 2^m\geq\frac{2^{t(et-1)}}{(et-1)!},
\]
then the lower bound may be replaced by the maximum of the displayed
Griesmer sum and \(et\).
\end{theorem}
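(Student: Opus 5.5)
The plan has three independent parts: the Griesmer lower bound, the common-core upper bound, and the counting lower bound \(et\).

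\emph{Lower bounds.} For the Griesmer sum I would use the standard link between generalized covering radii and generalized Hamming weights of the dual code. If \(t\) syndromes lie in the span of \(r\) columns, then restricting the code to a suitable coordinate set yields a \(t\)-dimensional subcode of a shortened dual-type code with support at most \(r\). For a \(t\)-dimensional binary code of minimum distance \(d\), the Griesmer bound gives support at least \(\sum_{j<t}\lceil d/2^j\rceil\). The targets are chosen as a \(t\)-dimensional syndrome space whose nonzero elements each need \(2e-1\) columns. Such syndromes exist because the ordinary covering radius \(R_1(C_{e,m})=2e-1\) holds for large \(m\), and a counting argument shows that the \(t\)-dimensional spaces avoiding all syndromes of weight \(\leq 2e-2\) are nonempty. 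The field-size hypothesis is what supplies this count, via Weil-type point counts on the incidence varieties of degree at most \(B_{e,t}\).

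For the bound \(et\) under \(2^m\geq 2^{t(et-1)}/(et-1)!\), I would count. The number of pairs (support of size \(\leq et-1\), \(t\)-tuple of spanned syndromes) is at most \(\binom{2^m-1}{et-1}\cdot 2^{t(et-1)}\). The number of \(t\)-tuples of syndromes, roughly \(2^{met}\) after adjusting for ordering, exceeds this under the displayed condition. So some \(t\)-tuple is not covered by \(et-1\) columns.

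\emph{Upper bound.} For the upper bound I would use the common-core construction. Take a shared locator core \(S\) of size \(e-1\) together with \(e\) additional columns per target in a basis, giving \(e-1+te=(t+1)e-1\) columns. This works if, for each basis syndrome \(s_i\), the equation \(s_i+\sum_{x\in S}h_e(x)=\sum_{y\in T_i}h_e(y)\) with \(|T_i|=e\) is solvable for a single common \(S\). The solvability of all \(t\) equations simultaneously defines an incidence variety. The key input is that this variety has a unique dominating, absolutely irreducible component of multidegree at most \(B_{e,t}\), with the relevant fibre variety \(\mathbb F_{2^m}\)-irreducible. This irreducibility is what the monodromy/Galois argument must give. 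Lang--Weil with explicit constants (degree \(B\), error about \(2B^2\sqrt{q}\)) then guarantees rational points, including points avoiding collision strata, once \(2^m>\max\{2eB^2,2B^4\}\).

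\emph{Expected obstacle.} The main obstacle is the arithmetic obstruction from \(g_{e,m}=\gcd(2e-3,2^m-1)\). The monodromy may then be smaller, and rational points exist only for syndromes in a subgroup of index governed by \(g_{e,m}\). I would handle this by partitioning the \(t\)-dimensional target space. A subspace of dimension \(\ell_{e,m,t}\) can always be placed in good cosets; this is a pigeonhole argument giving \(2^\ell-1\geq(2^t-1)/g\). The remaining \(t-\ell+1\) directions are absorbed by adding \(c_{e,m,t}=\lceil\log_2(t-\ell+1)\rceil\) extra core columns, whose \(2^c\) subset sums shift the obstructed classes into good ones.

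For \(2\leq e\leq6\) I would check directly that \(G_e\) and the sign calculation show the obstruction is absent, so \(c\) can be dropped. This is a finite computation of \(\gcd(x^{k}+1,(x+1)^{k}+1)\) for \(k=1,3,5,7,9\) combined with the low-error sign argument referenced earlier.
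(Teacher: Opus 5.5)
Your skeleton matches the paper's: an $(e-1)$-locator common core with $e$ completing locators per target, a unique absolutely irreducible dominating component of degree at most $B_{e,t}$, and an effective point count. Your $et$ counting argument is exactly the paper's. However, the step you defer (``this irreducibility is what the monodromy/Galois argument must give'') is the substance of the theorem, and your plan for $g_{e,m}$ rests on a wrong picture of the obstruction. Nothing suggests that rational points exist only for syndromes in some subgroup; $g_{e,m}$ obstructs only the \emph{certificate} of irreducibility.

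In the paper, the component is integral because the Artin--Schreier classes of the sign fields $L_{\bfs_i}^{A_e}$ of the $S_e$-completion covers are linearly independent in $K/\wp(K)$. This forces group $S_e^r$ on the compositum. Independence is certified by odd reduced poles. A target with nonzero lower projection has a private pole along its translated Hankel divisor, but this separates it only from targets with a \emph{different} lower projection. All pure targets $\bfs_\gamma$ share one valuation, where the leading coefficient is a fixed factor times $\gamma^{k_e}$.

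Hence, before any point count, you need four reductions.
\begin{enumerate}
\item Split off a maximal independent family of one-column syndromes, since the irreducible covers exist only for admissible targets.
\item Take a basis of the complement $W_0$ consisting of a basis of $\ker(\pi_e|_{W_0})$ together with lifts of a basis of $\pi_e(W_0)$, so the lifted targets have nonzero, pairwise distinct lower projections.
\item Find independent pure parameters with independent $k_e$-th powers. This is a matroid-intersection statement: pigeonholing the power map only gives $k_e$-th powers spanning an $\ell$-dimensional space, not preimages that are themselves independent.
\item Use shared columns $h_e(z_j)$ with distinct labels in $\F_2^c$ to turn the leftover pure targets into admissible targets with nonzero, pairwise distinct lower projections outside $\pi_e(W_0)$, so that they acquire private Hankel poles. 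Your subset-sum idea has the right shape, but this is its actual purpose.
\end{enumerate}
The multihomogeneous B\'ezout bound behind the refined term of $B_{e,t}$ is also asserted rather than derived.

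Your low-error step fails as stated. $G_e$ only tests whether the three nonzero elements of a pure \emph{plane} share a $k_e$-th power, and it is not trivial in this range. For instance, $G_3=x^2+x+1$ has roots whenever $m$ is even (scaled copies of $\F_4$), and $G_5,G_6\neq1$ as well. Moreover, for $t\geq3$ the leading coefficients can be dependent even when $G_e=1$. For $e=4$ and $4\mid m$, any three-space in $\gamma\F_{16}$ has fifth powers in the two-dimensional space $\gamma^5\F_4$. What removes $c_{e,m,t}$ is Lemma~\ref{lem:pure-sign-independence-small-e}: the complete Berlekamp classes, including their lower-order polar terms, are independent for \emph{every} linearly independent $\gamma_1,\ldots,\gamma_u$. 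That requires explicit Laurent expansions, not a gcd.

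Finally, your route to the Griesmer sum needs a $t$-space all of whose nonzero elements require $2e-1$ columns, and a count cannot supply it. Such syndromes are exceptional: for $e\geq3$, sums of $e+1\leq2e-2$ columns already cover almost all of $F^e$. Take instead a pure space $\{\bfs_\gamma^\T:\gamma\in\Gamma\}$ with $\dim\Gamma=t$. Any representation of a nonzero element is a nonzero word of $C_{e-1,m}$, hence has weight at least $2e-1$. Your section-plus-Griesmer argument then goes through using only $m\geq t$ and the rank condition; this is the supercode lemma the paper uses.
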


The cutoff \(e=6\) marks only the range for which the exact sign-class
calculation is completed here; no failure of the bare common-core bound is
asserted for \(e\geq7\).

The count behind the upper bound is simple.  Representing \(t\) targets over
one common core uses \(e-1\) common locators and \(e\) further locators per
target, for a total of
\[
 (e-1)+te=(t+1)e-1.
\]
The main difficulty is to place an arbitrary syndrome space in a position
where one common-locator construction can treat all of its basis targets.
Section~\ref{sec:uniform-order} gives an efficient shared-column reduction to
that situation and returns the resulting support to the original space.  In
particular, \(c_{e,m,t}=0\) when \(g_{e,m}=1\).  For every \(m\),
\[
 c_{e,m,t}\leq
 \left\lceil\log_2\!\min\left\{
 t,\,1+\left\lfloor\log_2(2e-3)\right\rfloor
 \right\}\right\rceil,
\]
so the correction term is independent of \(t\) once \(e\) is fixed.  When
\(t=2\), the construction specializes to the universal \(3e\) bound
proved in the self-contained second-order part of this paper.

Consequently, under the additional counting hypothesis in
Theorem~\ref{thm:intro-general-order}, the additive gap between the lower and
upper bounds is at most \(e-1+c_{e,m,t}=O_e(1)\).  Thus, for fixed \(e\), the
bounds are asymptotically tight as \(t\) grows.

The refined degree term satisfies
\[
 B_{e,t}\leq
 (e!)^t\left(1+\frac{(e-1)D_e}{e!}t\right)^{e-1}.
\]
Consequently, for fixed \(e\), the field exponent sufficient for the upper
bound grows as
\[
 4t\log_2(e!)+O_e(\log t),
\]
rather than the \(4t\log_2D_e+O(1)\) supplied by ordinary B\'ezout.
For \(e=2,3,4\), the refined term first improves the old budget at
orders \(t=6,9,10\), respectively; it therefore leaves the displayed
third-order thresholds unchanged.

For \(3\leq e\leq6\), the power map \(\gamma\mapsto\gamma^{2e-3}\) can fail
to preserve binary independence on the relevant pure syndrome subspaces.  For
example, \(g_{3,m}=3\) when \(m\) is even, and \(g_{4,m}=5\) when \(4\mid m\).
Yet lower-degree terms of the completion polynomials restore the needed
independence, so no extra support is required.  The correction therefore
disappears, and at order three the uniform theorem specializes as follows.

\begin{theorem}[Third-order consequences]
\label{thm:intro-main}
For the triple-error-correcting family,
\[
 10\leq R_3(C_{3,m})\leq11
 \qquad(m\geq48).
\]
For the four-error-correcting family,
\[
 13\leq R_3(C_{4,m})\leq15
 \qquad(m\geq82).
\]
For the five- and six-error-correcting families, respectively,
\[
 17\leq R_3(C_{5,m})\leq19
 \qquad(m\geq120),
\]
and
\[
 20\leq R_3(C_{6,m})\leq23
 \qquad(m\geq162).
\]
\end{theorem}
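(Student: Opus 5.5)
This result is a specialization of Theorem~\ref{thm:intro-general-order} at \(t=3\). I would prove it in three steps: evaluate the two bounds at \(t=3\) for \(e=3,4,5,6\), check that the stated thresholds on \(m\) imply the hypotheses of that theorem, and then decide which lower bound is available.

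\textbf{Upper bound.} Since \(3\leq e\leq6\), the improved clause of Theorem~\ref{thm:intro-general-order} gives \(R_3(C_{e,m})\leq 4e-1\). For \(e=3,4,5,6\) this is \(11,15,19,23\), which are exactly the stated upper bounds.

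\textbf{Lower bound.} The Griesmer sum at \(t=3\) is
\[
 (2e-1)+\lceil(2e-1)/2\rceil+\lceil(2e-1)/4\rceil=(2e-1)+e+\lceil(2e-1)/4\rceil .
\]
For \(e=3,4,5,6\) this gives \(10,13,17,20\), again matching the statement. Hence the counting refinement \(et\) is not needed for these lower bounds. Indeed \(et=3e\) never exceeds the Griesmer sum in this range.

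\textbf{Threshold verification.} This is the only real work. For each \(e\) I need \(2^m>\max\{2eB_{e,3}^2,\,2B_{e,3}^4\}\). Since the problem is symmetric and \(e\geq2\), the binding term is \(2B^4\) as soon as \(B^2\geq e\).

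As remarked after Theorem~\ref{thm:intro-general-order}, the refined multihomogeneous term does not improve on ordinary B\'ezout at order three for \(e=2,3,4\). So for those \(e\), \(B_{e,3}=D_e^3\). For \(e=5,6\) I would compute both terms of the minimum and confirm which is smaller.

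With ordinary B\'ezout:
\begin{itemize}
\item \(D_3=15\), so \(B=3375\) and \(2B^4\approx 2^{47}\cdot 1.85\), which forces \(m\geq48\).
\item \(D_4=105\), so \(B=105^3\) and \(\log_2(2B^4)=1+12\log_2105\approx 81.6\), which gives \(m\geq82\).
\item \(D_5=945\), so \(\log_2(2B^4)=1+12\log_2945\approx 119.6\), which gives \(m\geq120\).
\item \(D_6=10395\), so \(1+12\log_2 10395\approx 161.2\), which gives \(m\geq162\).
\end{itemize}
All four thresholds match the statement. This confirms that for \(e=5,6\) the minimum is also the B\'ezout term at \(t=3\): the alternative is \((e!)^3(1+(e-1)D_e\cdot3/e!)^{e-1}\), which is much larger here.

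The main obstacle is making these numerical comparisons rigorous near the boundary, namely checking \(2^{47}<2\cdot3375^4<2^{48}\) and the analogues for the other cases. I would do this by exact integer bounds on \(\log_2\) of \(15,105,945,10395\), or by direct evaluation.
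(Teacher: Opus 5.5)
Your proposal is correct and follows essentially the paper's own route: the low-error clause of Theorem~\ref{thm:intro-general-order} at $t=3$ gives the upper bound $4e-1$, and Proposition~\ref{prop:lower-bound} gives the Griesmer values $10,13,17,20$. Lemma~\ref{lem:numerical-thresholds} does exactly your fourth-power threshold computation with $B_{e,3}=D_e^3$. One small point: your claim that the matching thresholds ``confirm'' which term of the minimum is smaller is not itself an argument, but you also do not need it, since $B_{e,3}\leq D_e^3$ by definition makes your B\'ezout-based thresholds sufficient.
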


The lower bounds follow from the generalized-supercode principle and the
binary Griesmer bound.  The upper bounds use the complete Berlekamp invariants
of the generic completion polynomials.  This is strictly
stronger than retaining only their leading coefficients: scaled
copies of \(\F_4\) and binary three-spaces inside scaled copies of \(\F_{16}\)
obstruct the cube and fifth-power tests, respectively, but do not obstruct
simultaneous completion.  Here ``complete'' means the full
Artin--Schreier class, including its lower odd polar coefficients; their role
is isolated in Lemma~\ref{lem:pure-sign-independence-small-e}.

The bounds in Theorem~\ref{thm:intro-main} do not determine any of these
stable radii exactly.  They do, however, determine the minimum number of BCH
parity-check columns on every three-dimensional pure syndrome plane for three
errors; here \emph{pure} means confined to the highest syndrome coordinate.
Proposition~\ref{prop:pure-e3-exact-ten} gives exactly 10 for \(m\geq18\), and
Remark~\ref{rem:e3-deep-obstruction} shows that a
possible global value 11 for \(R_3(C_{3,m})\) must come from a nonpure
syndrome space all of whose nonzero syndromes require at least three
parity-check columns.  The exact global values and lower effective
thresholds remain open.

Section~\ref{sec:preliminaries} fixes the syndrome formulation and proves the
general lower bounds.  The self-contained second-order part then gives the
exact four-error result, the universal two-value bound, and the root-free
criterion.  Section~\ref{sec:uniform-order} derives
Theorem~\ref{thm:intro-general-order} from a deferred common-core criterion,
and Sections~\ref{sec:e3}, \ref{sec:e4}, and~\ref{sec:e56} extract the sharper
third-order consequences.  A one-target completion-cover and sign-valuation
engine is developed before Section~\ref{sec:higher-framework}, which proves
the simultaneous higher-order criterion and returns to BCH support.  The
universal Hankel and local sign calculations, the separate ten-locator curve,
and the low-degree sign calculations are placed in the appendices.

\section{Preliminaries and lower bounds}
\label{sec:preliminaries}

Throughout,
\[
 q\eqdef2^m,\qquad F\eqdef\F_q,\qquad F^*\eqdef F\setminus\{0\}.
\]
Recall that \(k_e=2e-3\).

\subsection{BCH columns and common-support distance}

The field-valued parity-check column of \(C_{e,m}\) indexed by \(x\in F^*\)
is
\[
 h_e(x)\eqdef(x,x^3,\ldots,x^{2e-1})^\T\in F^e.
\]
We also put \(h_e(0)=0\).  Let
\[
 \pi_e:F^e\longrightarrow F^{e-1}
\]
delete the highest syndrome coordinate, and write
\[
 \bfs_\gamma^\T\eqdef(0,\ldots,0,\gamma)^\T
\]
for a pure highest-coordinate syndrome.  A nonzero syndrome is called
\emph{one-column} if it equals \(h_e(z)\) for some \(z\in F^*\).
Fix an ordered \(\F_2\)-basis \(\mathcal B\) of \(F\), and let
\[
 \operatorname{coord}_{\mathcal B,e}:F^e\longrightarrow\F_2^{em}
\]
be the componentwise coordinate isomorphism, with its \(e\) blocks in the
order displayed in \(h_e\).  Define
\[
 H_{e,m}\eqdef
 \bigl(\operatorname{coord}_{\mathcal B,e}(h_e(x))\bigr)_{x\in F^*}
 \in\F_2^{em\times(q-1)}.
\]
Thus the columns are indexed by \(F^*\), and the \(x\)-column is the binary
coordinate expansion of \(h_e(x)\).  Changing \(\mathcal B\) applies an
invertible binary row operation to each coordinate block, so all support
quantities below are independent of this choice.  Under the standard condition
\begin{equation}
\label{eq:rank-condition}
 2e-1\leq2^{\lceil m/2\rceil},
\end{equation}
the binary cyclotomic cosets of \(1,3,\ldots,2e-1\) are distinct and have
size \(m\), so \(H_{e,m}\) has rank \(em\) and is a full parity-check matrix
for \(C_{e,m}\)
\cite[Lemmas~8 and~9]{AlyKlappeneckerSarvepalli2007}.  Every field vector in
\(F^e\) is then a binary syndrome.  Condition~\eqref{eq:rank-condition}
holds throughout the ranges of Theorems~\ref{thm:intro-general-order}
and~\ref{thm:intro-main}.

For a binary subspace \(W\subseteq F^e\), put
\[
 \mu_e(W)\eqdef
 \min\left\{|I|:I\subseteq F^*,\quad
 W\subseteq\Span_{\F_2}\{h_e(x):x\in I\}\right\}.
\]
For \(t\) syndromes, take \(W\) to be their binary span.  This formulation
makes clear that changing the ordered generating tuple does not change the
minimum.  Consequently,
\begin{equation}
\label{eq:Rt-subspaces}
 R_t(C_{e,m})=
 \max_{\substack{W\subseteq F^e\\ \dim_{\F_2}W\leq t}}\mu_e(W).
\end{equation}

For an ordered tuple \(\mathbf x=(x_1,\ldots,x_r)\in F^r\), its reduced
support is
\[
 E(\mathbf x)\eqdef
 \{x\in F^*:|\{i:x_i=x\}|\text{ is odd}\}.
\]
Then \(\sum_i h_e(x_i)=\sum_{x\in E(\mathbf x)}h_e(x)\), so deleting zero
entries and even repetitions never increases the supporting set.

\subsection{The lower bound}

The generalized supercode lemma of Xiong and Yip
\cite[Lemma~III.1]{XiongYip2026} relates generalized covering radii to
generalized Hamming weights.

\begin{proposition}[Supercode--Griesmer lower bound]
\label{prop:lower-bound}
Let \(e,t\geq2\).  Suppose that~\eqref{eq:rank-condition} holds and
\(m\geq t\).  Then
\[
 R_t(C_{e,m})\geq
 \sum_{j=0}^{t-1}\left\lceil\frac{2e-1}{2^j}\right\rceil.
\]
In particular, for \(t=3\),
\[
 R_3(C_{3,m})\geq10,
 \qquad
 R_3(C_{4,m})\geq13,
 \qquad
 R_3(C_{5,m})\geq17,
 \qquad
 R_3(C_{6,m})\geq20.
\]
\end{proposition}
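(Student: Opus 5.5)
The plan is to combine the generalized supercode lemma of Xiong and Yip \cite[Lemma~III.1]{XiongYip2026} with the binary Griesmer bound for generalized Hamming weights. As I recall the lemma, if \(C\subseteq C'\) are linear codes and \(\dim C'-\dim C\geq t\), then \(R_t(C)\geq d_t(C')\), the \(t\)-th generalized Hamming weight of \(C'\). Equivalently, in syndrome language: pick a \(t\)-dimensional binary syndrome space \(W\) corresponding to a \(t\)-dimensional subcode of \(C'\) modulo \(C\). Any column set \(I\) whose span contains \(W\) then yields \(t\) independent codewords of \(C'\) supported on \(I\), up to adjustments lying in \(C\), so \(|I|\geq d_t(C')\). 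I would apply this via \eqref{eq:Rt-subspaces}.

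For the supercode I take \(C'=C_{e-1,m}\), the BCH code obtained by deleting the last block row \(x^{2e-1}\) of \(H_{e,m}\). Under \eqref{eq:rank-condition} the full matrix has rank \(em\), and its first \(e-1\) blocks have rank \((e-1)m\). Hence \(\dim C'-\dim C_{e,m}=m\geq t\), using the hypothesis \(m\geq t\). Concretely, the relevant syndromes are the pure ones \(\bfs_\gamma\): choose \(W=\{\bfs_\gamma:\gamma\in V\}\) for a \(t\)-dimensional binary subspace \(V\subseteq F\). If \(W\subseteq\Span\{h_e(x):x\in I\}\), lift each basis syndrome \(\bfs_{\gamma_i}\) to a vector \(c_i\in\F_2^{I}\). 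Each \(c_i\) lies in \(C'\), since its first \(e-1\) syndrome blocks vanish, and the \(c_i\) are linearly independent because their last blocks \(\gamma_i\) are. They therefore span a \(t\)-dimensional subcode of \(C'\) with support inside \(I\), giving \(|I|\geq d_t(C')\).

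It remains to bound \(d_t(C_{e-1,m})\). The minimum distance of \(C_{e-1,m}\) is at least its designed distance \(2(e-1)+1=2e-1\); the BCH bound applies to the primitive narrow-sense code. The Griesmer-type bound for generalized Hamming weights (Helleseth--Kløve--Ytrehus), \(d_t\geq\sum_{j=0}^{t-1}\lceil d_1/2^j\rceil\), then gives the displayed sum. That bound follows by induction from \(d_{r+1}\geq d_r+\lceil d_r/2^{r}\rceil\)-type residual-code arguments. The \(t=3\) specializations are plain arithmetic: \(5+3+2=10\), \(7+4+2=13\), \(9+5+3=17\), and \(11+6+3=20\).

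The main obstacle is the bookkeeping that makes the lifted codewords honestly independent and supported within \(I\). Reduced supports and repeated columns are handled by the remark that \(E(\mathbf x)\) never enlarges the support, and the rank condition ensures every field vector is a binary syndrome. If the cited lemma is stated with exactly these hypotheses, this step reduces to a direct citation.
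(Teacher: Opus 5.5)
Your proposal is correct and follows the paper's route: the supercode inequality \(R_t(C_{e,m})\geq d_t(C_{e-1,m})\), the designed distance \(2e-1\) of \(C_{e-1,m}\), and the binary Griesmer bound. Your lifting of a pure space \(\{\bfs_\gamma^\T:\gamma\in V\}\) is a valid self-contained proof of the cited supercode step; it is the same device the paper uses later in Proposition~\ref{prop:pure-e3-exact-ten}.

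One correction is needed. The recursion \(d_{r+1}\geq d_r+\lceil d_r/2^r\rceil\) that you sketch is false: the \([7,3,4]\) simplex code has \(d_2=6\) but \(d_3=7\). Justify the generalized Griesmer bound as the paper does instead. Puncture a \(t\)-dimensional subcode of support size \(d_t\) to its support, and apply the classical Griesmer bound to the resulting binary \([d_t,t,\geq 2e-1]\) code.
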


\begin{proof}
Under~\eqref{eq:rank-condition}, the added cyclotomic coset has size \(m\),
so \(C_{e,m}\) has codimension \(m\geq t\) in \(C_{e-1,m}\).
The same rank formula gives
\[
 \dim C_{e-1,m}=2^m-1-(e-1)m\geq t.
\]
Indeed, the rank condition gives \(e-1<2^{\lceil m/2\rceil-1}\), from
which the inequality is immediate for \(m\geq4\); the remaining case is
\(m=3,e=2\), when \(\dim C_{1,3}=4\).  Thus the \(t\)-th generalized
Hamming weight is defined.
The cited supercode lemma gives
\[
 R_t(C_{e,m})\geq d_t(C_{e-1,m}),
\]
where the right-hand side is the \(t\)-th generalized Hamming weight.  A
\(t\)-dimensional subcode that realizes this weight has minimum distance at
least \(2e-1\), by the BCH designed-distance bound.  The binary Griesmer
bound~\cite{Griesmer1960} gives the displayed sum.  The four numerical
values follow by substituting \(t=3\) and \(e=3,4,5,6\).
\end{proof}

For \(t=2\), the resulting universal lower bound
\(R_2(C_{e,m})\geq3e-1\) was already obtained by
Yohananov and Schwartz~\cite{YohananovSchwartz2025}.

The sphere-covering inequality gives a second lower bound that becomes
stronger as the generalized order grows.

\begin{proposition}[Sphere-covering lower bound]
\label{prop:sphere-lower-bound}
Let \(e,t\geq2\), and suppose that~\eqref{eq:rank-condition} holds.  Then
\[
 \sum_{i=0}^{R_t(C_{e,m})}
 \binom{q-1}{i}(2^t-1)^i\geq q^{et}.
\]
Equivalently, for every integer \(\rho\geq0\),
\[
 \sum_{i=0}^{\rho}\binom{q-1}{i}(2^t-1)^i<q^{et}
 \quad\Longrightarrow\quad
 R_t(C_{e,m})\geq\rho+1.
\]
Consequently, if \(q-1\geq et-1\) and
\[
 q\geq\frac{2^{t(et-1)}}{(et-1)!},
\]
then \(R_t(C_{e,m})\geq et\).
\end{proposition}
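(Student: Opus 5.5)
We prove the displayed inequality by a direct counting argument on spanned subspaces. By~\eqref{eq:rank-condition}, every vector of \(F^e\) is a binary syndrome, so \(F^e\) has \(q^{e}\) elements. Put \(\rho\eqdef R_t(C_{e,m})\).

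\textbf{Step 1: an injective counting map.} Consider ordered \(t\)-tuples \((s_1,\ldots,s_t)\in(F^e)^t\); there are \(q^{et}\) of them. By~\eqref{eq:Rt-subspaces}, the binary span of each tuple satisfies \(\mu_e\le\rho\). Hence there is a set \(I\subseteq F^*\) with \(|I|=i\le\rho\) and binary coefficient vectors \(a_x\in\F_2^t\), for \(x\in I\), such that
\[
 s_j=\sum_{x\in I}(a_x)_j\,h_e(x)\qquad(1\le j\le t).
\]
If some \(a_x=0\), delete \(x\) from \(I\); this keeps the representation valid and does not increase \(|I|\). So we may assume every \(a_x\in\F_2^t\setminus\{0\}\).

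\textbf{Step 2: the count.} The map \((I,(a_x)_{x\in I})\mapsto(s_1,\ldots,s_t)\) is defined on the set of pairs with \(|I|\le\rho\) and \(a_x\ne0\) for all \(x\in I\). By Step~1 it is surjective onto \((F^e)^t\). The number of such pairs is
\[
 \sum_{i=0}^{\rho}\binom{q-1}{i}(2^t-1)^i .
\]
This gives the displayed inequality. The contrapositive is the stated equivalent form: if the sum at some \(\rho\) is less than \(q^{et}\), then \(\rho<R_t(C_{e,m})\). This uses only that the partial sums are monotone in \(\rho\).

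\textbf{Step 3: the consequence.} Take \(\rho=et-1\) and verify
\[
 S\eqdef\sum_{i=0}^{et-1}\binom{q-1}{i}(2^t-1)^i<q^{et}.
\]
Bound each term by \(\binom{q-1}{i}\le\frac{(q-1)^i}{i!}\le\frac{q^i}{i!}\cdot\frac{q-1}{q}\) for \(i\ge1\). The final factor is needed for strictness. Then \(S\) is dominated by a truncated exponential series in \(q(2^t-1)\). Write \(n=et-1\). I would compare \(S\) with
\[
 \frac{(q\,2^t)^{n}}{n!}\Bigl(1+\frac{n}{q2^t}+\cdots\Bigr),
\]
that is, factor out the top term \(\frac{(q2^t)^n}{n!}\) and bound the remaining ratio. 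Using \(2^t-1<2^t\), the top term is less than \(\frac{q^n2^{tn}}{n!}\le q^{n}\cdot q=q^{et}\) by the hypothesis \(q\ge 2^{t(et-1)}/(et-1)!\).

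\textbf{Main obstacle.} The delicate point is absorbing the lower-order terms \(i<n\) while keeping the inequality strict. I would first try to use the slack between \((2^t-1)^i\) and \(2^{ti}\) at \(i=n\), namely the factor \((1-2^{-t})^n\), together with the ratio of consecutive terms. Consecutive terms have ratio at least \((q-n)(2^t-1)/n\), which is large since \(q\) is huge relative to \(n\) under the hypothesis. The lower terms then sum to at most a small fraction of the top term. That fraction is covered by the strict slack \((2^t-1)^n<2^{tn}\), using \(q-1\ge n\) to ensure that the binomial terms are well defined. If this slack proves insufficient in a boundary case, the fallback is to use \(\binom{q-1}{n}\le (q-1)^n/n!\) and spend the factor \(((q-1)/q)^n\) as well.
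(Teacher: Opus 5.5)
Your Steps~1 and~2 are correct. They prove the displayed inequality directly by counting, whereas the paper simply cites the generalized sphere-covering inequality of Elimelech, Firer, and Schwartz.

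The gap is in Step~3, which is still only a plan, and whose key justification is false as stated. You claim that the exact-term ratio $(q-n)(2^t-1)/n$ is large ``since $q$ is huge relative to $n$ under the hypothesis.'' The displayed hypothesis $q\ge2^{t(et-1)}/(et-1)!$ gives nothing of the sort. For $t=2$ and $e\ge5$ its right-hand side is below $1$, so the only explicit constraint left is $q-1\ge n$. Under that constraint alone, the lower-order terms can dominate. At $q-1=n$ one has $S=2^{tn}$, while the top term $(2^t-1)^n=(1-2^{-t})^n2^{tn}$ is a vanishing fraction of it once $n\gg2^t$. Combined with the rank condition, the hypotheses do keep $q$ comfortably above $n$, but you neither invoke this nor carry out the quantitative comparison. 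Relatedly, $q-1\ge et-1$ is not there to make binomial coefficients well defined; its real role is padding.

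The missing idea is the paper's padding (union) bound. Since $q-1\ge n$, every coefficient array with at most $n$ nonzero columns is supported on some $n$-element subset of $F^*$, so
\[
 S\le\binom{q-1}{n}2^{tn}<\frac{q^n2^{tn}}{n!}\le q^{n+1}=q^{et}.
\]
Strictness comes from $(q-1)\cdots(q-n)<q^n$, and no lower-order terms need absorbing.

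Your exponential-series route can also be closed without any largeness of $q$, provided you majorize first. Put $X=q(2^t-1)$. The bound $\binom{q-1}{i}\le q^i/i!$ gives
\[
 S\le\frac{X^n}{n!}\cdot\frac{X}{X-n}\le q^{n+1}(1-2^{-t})^n\frac{X}{X-n}.
\]
Then $(1-2^{-t})^n\le\exp(-n2^{-t})\le2^t/(2^t+n)$ makes the right-hand side strictly less than $q^{n+1}$ as soon as $X>2^t+n$. This holds because $X\ge(n+1)(2^t-1)$ and $t\ge2$. The relevant ratio in this majorant is $X/i\ge X/n$, not the exact-term ratio you wrote.
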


\begin{proof}
The first assertion is the generalized sphere-covering inequality of
Elimelech, Firer, and Schwartz
\cite[Corollary~10 and the proof of Proposition~12]
{ElimelechFirerSchwartz2021}, specialized to a binary code of length
\(q-1\) and redundancy \(em\).

For the explicit consequence, put \(s=et-1\) and suppose that the radius is
at most \(s\).  For a fixed \(s\)-element column set \(I\), a coefficient
array
\[
 (\varepsilon_{j,x})_{1\leq j\leq t,\ x\in I}
 \in\F_2^{t\times I}
\]
represents the \(j\)-th syndrome by
\(\sum_{x\in I}\varepsilon_{j,x}h_e(x)\).  An array supported on fewer than
\(s\) columns can be padded by zero coefficients.  Thus a fixed set \(I\)
produces at most \(2^{ts}\) syndrome \(t\)-tuples, and the total number that
can be covered is at most
\[
 \binom{q-1}{s}2^{ts}
 <\frac{q^s2^{ts}}{s!}
 \leq q^{s+1}=q^{et},
\]
contradicting the first assertion.
\end{proof}

\begin{remark}
For \(m\geq5\), the equality
\(d_3(C_{2,m})=10\)~\cite{VanDerGeerVanDerVlugt1994} recovers the lower
bound used for \(C_{3,m}\).  Likewise, the known equality
\(d_3(C_{3,m})=13\) in the stable range
\cite{VanDerGeerVanDerVlugt1995} recovers the lower bound used for
\(C_{4,m}\).  Thus the Griesmer argument above supplies uniformly the exact
third generalized Hamming weights of the two immediate supercodes; it gives
10 rather than 11 in the first family and 13 rather than a larger lower bound
in the second.
\end{remark}

\section{The second generalized radius}
\label{r2:sec:second-order}

We now state the second-order results and reduce them to a common geometric
engine before turning to higher generalized-covering order.  The shared engine
is developed in Section~\ref{r2:sec:second-order-engine}.  Recall from the
introduction that \(B_{e,2}=D_e^2\), \(k_e=2e-3\), and
\[
 G_e(x)=\gcd_{\F_2[x]}
 \bigl(x^{k_e}+1,(x+1)^{k_e}+1\bigr).
\]
Thus the refined degree budget agrees at order two with the ordinary
B\'ezout budget.
The polynomial \(G_e\) governs the only syndrome planes not separated by the
lower-coordinate projection \(\pi_e\).

The four-error family has an additional direct construction which begins far
below the general algebraic threshold.

\begin{theorem}[Four errors]
\label{r2:thm:four-errors}
For every \(m\geq16\),
\[
 11\leq R_2(C_{4,m})\leq12.
\]
Moreover, for every \(m\geq55\),
\[
 R_2(C_{4,m})=11.
\]
\end{theorem}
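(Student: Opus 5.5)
The lower bound \(11\leq R_2(C_{4,m})\) is immediate from Proposition~\ref{prop:lower-bound} with \(t=2\), \(e=4\): \(\lceil7\rceil+\lceil7/2\rceil=11\). Condition~\eqref{eq:rank-condition} reads \(7\leq2^{\lceil m/2\rceil}\) and holds for \(m\geq16\). So the work is in the two upper bounds. By~\eqref{eq:Rt-subspaces} it suffices to show that every binary plane \(W=\Span\{\bfs,\bft\}\subseteq F^4\) satisfies \(\mu_4(W)\leq12\) for \(m\geq16\), and \(\mu_4(W)\leq11\) for \(m\geq55\).

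For \(\mu_4(W)\leq12\) I would use a direct common-core construction. Choose one common locator set \(S\), then complete \(\bfs\) and \(\bft\) separately. After subtracting \(\sum_{x\in S}h_4(x)\), each target is covered by its own private locators. The total is \(|S|\) plus the two private counts. The relevant target is \(3e=12\), so I aim for \(|S|=4\) and \(4\) private locators per target, or some equivalent split.

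The one-target fact I would rely on is that a generic syndrome of \(C_{4,m}\) is a sum of \(4\) columns once \(m\) is moderately large. This is the classical covering radius \(R_1=2e-1=7\), sharpened: for targets avoiding a small exceptional set, four columns suffice. The common core is used to move both \(\bfs-\sum_S h_4\) and \(\bft-\sum_S h_4\) simultaneously into that good set. Here the freedom in \(S\) is a \(4\)-dimensional family and the bad sets are proper subvarieties, so a point count with a crude Weil or Lang--Weil estimate should give a valid \(S\) once \(q\) exceeds a modest power. The threshold \(m\geq16\) suggests this is done with explicit low-degree curves rather than the general Bézout budget \(B_{4,2}=D_4^2=105^2\). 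I would try a parametrization in which the completion conditions reduce to trace or Artin--Schreier conditions on a curve of small genus, then apply Hasse--Weil.

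For exactness at \(m\geq55\) I would invoke the general second-order machinery, the root-free criterion of Theorem~\ref{r2:thm:effective-general-errors}. It gives \(\mu_e(W)\leq3e-1\) whenever \(2^m\) exceeds \(\max\{2eD_e^2,2D_e^4\}\) and \(G_e\) has no root in \(F\). The field-size condition \(2^m>2\cdot105^4\approx2.4\cdot10^8\) already holds at \(m\geq28\), so the threshold \(55\) must come from a sharper variant. For \(e=4\) we have \(k_4=5\), and
\[
 G_4=\gcd(x^5+1,\,x^5+x^4+x^3+x^2+x)=x^4+x^3+x^2+x+1,
\]
the fifth cyclotomic polynomial. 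Its roots lie in \(F\) exactly when \(4\mid m\). So when \(4\nmid m\) the criterion applies directly. The case \(4\mid m\) is where I expect the main obstacle: the planes on which \(\gamma\mapsto\gamma^5\) fails to separate (scaled copies of \(\F_{16}\) in the highest coordinate) need a separate argument. Following the introduction's remark, I would use the lower-order polar coefficients of the completion polynomials, that is, the full Artin--Schreier class rather than the leading coefficient only. These restore independence, so an \(11\)-element cover still exists. The effective Hasse--Weil count for this refined incidence curve presumably produces the \(m\geq55\) bound, and I would verify it by estimating that curve's genus and degree explicitly.
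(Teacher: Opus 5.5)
\textbf{Verdict: genuine gap.} Your lower bound is the paper's. The twelve-column construction, however, rests on a false one-target fact, and the exactness argument contains two computational slips.

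\textbf{The twelve-column bound.} Over \(F\), at most \(\sum_{i\leq4}\binom{q-1}{i}\approx q^4/24\) of the \(q^4\) syndromes are sums of four columns. So a generic target is \emph{not} a sum of four columns. Whether a residual admits four \(F\)-rational completing locators is a splitting condition on its degree-four completion polynomial. That is a Chebotarev-type condition of density about \(1/4!\), not the complement of a proper subvariety. Hence ``avoid the bad subvarieties by Lang--Weil'' cannot produce a core with \(|S|=4\).

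To make a \(4+4+4\) split work, you would need an absolutely irreducible \(F\)-component of the joint incidence variety, i.e.\ linear disjointness of two \(S_4\)-completion fields over the core function field. That is the heavy engine, and it gives a threshold comparable to \(m\geq55\), not \(m\geq16\).

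The missing idea is to spend the extra locator on each target rather than on the core: use two common locators and five private ones per target.
\begin{itemize}
\item Four moment equations in five unknowns leave a free parameter \(a=e_2\), so the completion polynomials form a pencil \(f_0+af_1\) of quintics.
\item Cohen's degree-five splitting estimate \(120N\geq q-242\sqrt q-600\), positive for \(q\geq2^{16}\), yields a split member whenever \(\alpha\kappa\neq0\).
\item Each target has its own parameter \(a\), so the two completions decouple. The only joint condition is \(\alpha\kappa\neq0\) for both targets, which is genuinely Zariski-open in the common pair.
\item After translating each target to zero first moment, this condition excludes at most four products per target for a fixed sum \(r\neq0\), while split pairs realize \(q/2\) products.
\end{itemize}
This gives \(2+5+5=12\).

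\textbf{Exactness for \(m\geq55\).} You picked the right tool, but two slips created an obstacle that does not exist.
\begin{itemize}
\item Over \(\F_2\) the coefficients \(\binom52\) and \(\binom53\) are even, so \((x+1)^5+1=x^5+x^4+x=x(x^4+x^3+1)\). Since \(x^5+1=(x+1)(x^4+x^3+x^2+x+1)\) and the two quartics are distinct and irreducible, \(G_4=1\). No pure plane is obstructed for any \(m\). The \(\F_{16}\) phenomenon mentioned in the introduction concerns three-spaces at order three, not planes.
\item The threshold involves \(B_{4,2}=D_4^2=11025\), not \(D_4=105\). Then \(2B_{4,2}^4\approx2.95\cdot10^{16}\), and the first power of two above it is exactly \(2^{55}\).
\end{itemize}
So the root-free clause of Theorem~\ref{r2:thm:effective-general-errors} gives \(R_2(C_{4,m})=11\) for all \(m\geq55\) at once. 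That is the paper's argument; no refined curve or lower-polar analysis is needed.
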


The uniform second-order result is as follows.

\begin{theorem}[Effective second-order common-core bound]
\label{r2:thm:effective-general-errors}
Fix \(e\geq2\), put \(q=2^m\), and suppose that
\begin{equation}
\label{r2:eq:general-field-threshold}
 q>\max\{2eB_{e,2}^2,\,2B_{e,2}^4\}.
\end{equation}
Then
\[
 3e-1\leq R_2(C_{e,m})\leq3e.
\]
If \(G_e\) has no root in \(F=\F_q\), then
\[
 R_2(C_{e,m})=3e-1.
\]
For every fixed \(e\geq2\), the last equality holds for infinitely many
extension degrees \(m\).
\end{theorem}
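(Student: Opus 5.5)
The lower bound \(3e-1\leq R_2(C_{e,m})\) is immediate from Proposition~\ref{prop:lower-bound} with \(t=2\). The threshold~\eqref{r2:eq:general-field-threshold} forces \(m\) large enough that \eqref{eq:rank-condition} and \(m\geq2\) hold. So the work is the upper bound and the exactness criterion. By \eqref{eq:Rt-subspaces} it suffices to bound \(\mu_e(W)\) for every binary subspace \(W\subseteq F^e\) of dimension at most two. Dimension one is handled by the ordinary covering radius, which is at most \(e\) in the stable range, the one-target case of the same engine. So let \(W=\{0,\bfs_1,\bfs_2,\bfs_1+\bfs_2\}\).

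\emph{Common-core construction.} I would look for \(e-1\) common locators \(y_1,\ldots,y_{e-1}\), and then, for each basis target \(\bfs_i\), \(e\) further locators \(x_{i,1},\ldots,x_{i,e}\) with
\[
\sum_j h_e(y_j)+\sum_k h_e(x_{i,k})=\bfs_i .
\]
Passing to reduced supports \(E(\cdot)\), this gives a common column set of size at most \(3e-1\) spanning \(W\).

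For fixed \(y\), each target equation is an \(e\)-locator decoding problem: a system in power sums \(p_1,p_3,\ldots,p_{2e-1}\) of the \(x_{i,k}\), where the even moments are determined by Frobenius. Solvability with distinct roots in \(F\) amounts to a completion polynomial splitting over \(F\). I would encode this as an incidence variety \(V\subseteq\mathbb A^{e-1}\times\mathbb A^{2e}\). Its degree is bounded by \(D_e^2=B_{e,2}\) via B\'ezout applied to the power-sum equations \(x^{2j-1}\), \(j\le e\).

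The key step is to show that \(V\) has a geometrically irreducible component dominating the \(y\)-space whose generic fibre has full symmetric-group monodromy. Then Lang--Weil, in the explicit Cafure--Matera form \(|\#V(F)-q^{\dim}|\leq (\delta-1)(\delta-2)q^{\dim-1/2}+5\delta^{13/3}q^{\dim-1}\), or a cruder version calibrated to the stated \(\max\{2eB^2,2B^4\}\), produces a rational point. The collision strata, where locators coincide or vanish, have dimension one less and at most \(eB_{e,2}^2\) points' worth of weight. That is the origin of the \(2eB_{e,2}^2\) term. The monodromy argument is the main obstacle. I would handle it by degenerating to a point where one completion polynomial acquires a simple transposition, via a discriminant computation, together with a Hankel-determinant nondegeneracy that shows the two target fibres are independent.

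\emph{The \(3e\) versus \(3e-1\) dichotomy.} The construction above can fail only when the generic completion polynomials for \(\bfs_1\) and \(\bfs_2\) are forced to share structure. This happens exactly when \(\pi_e(\bfs_1),\pi_e(\bfs_2)\) fail to separate the plane, i.e.\ when \(\bfs_1,\bfs_2\) are pure syndromes \(\bfs_\gamma,\bfs_\delta\) with the leading Artin--Schreier invariant scaling by \(\gamma^{k_e}\). The two targets become dependent precisely when \((\delta/\gamma)^{k_e}\) and \((\delta/\gamma+1)^{k_e}\) both equal \(1\), that is, when \(\delta/\gamma\) is a root of \(G_e\) in \(F\).

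In the non-separated case I would spend one extra column. First pick \(z\) with \(h_e(z)\) moving \(\bfs_2\) off the pure line, then rerun the construction, for \(3e\) columns. If \(G_e\) has no root in \(F\), this case never occurs and \(R_2=3e-1\).

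Finally, \(G_e\) has a root in \(\F_{2^m}\) only if \(m\) is divisible by the degree of one of its finitely many irreducible factors, all of degree at least \(2\), since \(0\) and \(1\) are not roots. Taking \(m\) prime and large then gives infinitely many \(m\) with no root.
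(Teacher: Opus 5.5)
Your skeleton matches the paper's proof. It has the supercode--Griesmer lower bound and one core of $e-1$ locators plus $e$ completing locators per basis target. It then needs linear disjointness of the two completion fields, Cafure--Matera on the unique dominating component of degree at most $B_{e,2}=D_e^2$, the $\gamma^{k_e}$ comparison and $G_e$ for pure planes, one shifted column $h_e(z)$ otherwise, and prime $m$ at the end.

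The gap is that the step carrying all the weight, $L_{\bm a}\cap L_{\bm b}=K$, is only gestured at, and three inputs it needs are missing.

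\emph{Choice of basis.} The completion cover is a connected $S_e$-torsor only for nonzero non-one-column targets; for a one-column target the incidence variety is not irreducible. So one-column elements of $W$ must be split off or avoided. When $\pi_e|_W$ has rank one, you must pair the kernel element with an element outside the kernel. The two elements outside the kernel have equal lower projections, hence the same translated Hankel divisor, and your ``Hankel-determinant nondegeneracy'' then separates nothing. Your ``i.e.\ pure'' is false for such a basis; Lemma~\ref{r2:lem:projection-basis} is what handles this.

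\emph{Mixed targets.} Your degeneration idea can work: a transposition inertia for $L_{\bm a}$ at a place where $L_{\bm b}$ is \'etale forces the joint group to be $S_e\times S_e$ by Goursat. But it needs two facts you do not supply.
(a) The transposition actually occurs. This is the two-escaping-root chart along the translated divisor $D_{\bm a}$, with odd reduced pole $\eta_{\bm a}^{2e-3}D_R^{2e-1}\varpi^{-(2e-1)}$. The condition $\eta_{\bm a}\neq0$ is exactly where non-one-columnness is used.
(b) $L_{\bm b}$ is \'etale there, i.e.\ $D_{\bm a}\neq D_{\bm b}$. This rests on the triviality of the additive translation stabilizer of $V(\Delta_e)$ (Lemma~\ref{r2:lem:universal-hankel}(4)).

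\emph{Pure targets.} Here both covers ramify along the same divisor, so the transposition argument is unavailable. You need the defect identity $\eta_{\bfs_\gamma}=\gamma$ to get leading coefficients $\gamma_i^{k_e}D_R^{2e-1}$. You then need the normal-subgroup argument to pass from $Q_{\bm a}\neq Q_{\bm b}$ to linear disjointness: every proper normal subgroup of $S_e$ lies in $A_e$, and every nontrivial quotient, including $S_4/V_4\simeq S_3$, has a unique index-two subgroup.

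Several side claims are also wrong, though your argument does not really rely on them.
\begin{enumerate}
\item The stable ordinary radius is $R_1(C_{e,m})=2e-1$, not at most $e$; already $\sum_{i\le e}\binom{q-1}{i}<q^e$ rules out $e$. Since $2e-1\le 3e-1$, dimension one is still settled, once you check that the threshold implies the Kavut--Tutdere range.
\item The $2eB_{e,2}^2$ term comes from Cafure--Matera's criterion in dimension $r=e-1$, namely its term $2(r+1)B_{e,2}^2$. It does not come from collision strata. Collisions need no accounting at all, because reduced supports only shrink.
\item Pure targets are not dependent ``precisely when'' $G_e(\delta/\gamma)=0$. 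A root of $G_e$ only defeats the leading-coefficient test. For $2\le e\le 6$ the full Berlekamp classes stay independent, e.g.\ on scaled copies of $\F_4$ when $e=3$ (Lemma~\ref{lem:pure-sign-independence-small-e}). Only the forward direction is needed.
\end{enumerate}
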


The converse is not claimed: a root of \(G_e\) only defeats this particular
basis and sign-class criterion; it is not evidence that
\(R_2(C_{e,m})=3e\).

The next instances are illustrative rather than exhaustive.  The first two
use the stronger low-error clause of the higher-order theorem, while the last
two already follow from the root-free criterion above.

\begin{corollary}[Sample exact families]
\label{r2:cor:sample-exact-families}
One has
\[
 R_2(C_{5,m})=14\qquad(m\geq81),
\]
\[
 R_2(C_{6,m})=17\qquad(m\geq108),
\]
\[
 R_2(C_{7,m})=20\qquad(m\geq138),
\]
and
\[
 R_2(C_{8,m})=23\qquad(m\geq169).
\]
\end{corollary}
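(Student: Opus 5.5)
The plan is to combine the universal lower bound \(R_2(C_{e,m})\geq3e-1\) with the appropriate upper bound \(3e-1\), and then verify that the stated thresholds on \(m\) are exactly the field-size conditions. The lower bound comes from Proposition~\ref{prop:lower-bound} with \(t=2\): the Griesmer sum is \((2e-1)+\lceil(2e-1)/2\rceil=3e-1\). Its hypotheses, \eqref{eq:rank-condition} and \(m\geq2\), are trivial in the ranges considered. This gives the values \(14,17,20,23\) for \(e=5,6,7,8\) as lower bounds, so only the matching upper bound \(3e-1\) remains in each case.

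For \(e=5,6\) I will invoke the low-error clause of Theorem~\ref{thm:intro-general-order} with \(t=2\). It gives \(R_2(C_{e,m})\leq3e-1\) as soon as \(2^m>\max\{2eB_{e,2}^2,2B_{e,2}^4\}\). Since \(B_{e,2}=D_e^2\), the binding term is \(2D_e^8\). We have \(D_5=945\) and \(\log_2 945\approx9.884\), so \(\log_2(2D_5^8)\approx80.07\) and \(m\geq81\) suffices. Similarly \(D_6=10395\) gives \(\log_2(2D_6^8)\approx107.7\), so \(m\geq108\) suffices. The other term \(2eD_e^4\) is far smaller, about \(2^{43}\) and \(2^{57}\) respectively.

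For \(e=7,8\) I will use the root-free clause of Theorem~\ref{r2:thm:effective-general-errors}. The threshold check is the same as before. For \(D_7=135135\) we get \(\log_2(2D_7^8)\approx137.4\), hence \(m\geq138\). For \(D_8=2027025\) we get \(\log_2(2D_8^8)\approx168.6\), hence \(m\geq169\).

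The real content, and the main step, is showing that \(G_e\) has no root in any \(\F_{2^m}\). I will in fact show \(G_e=1\).

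Take \(k=k_e\in\{11,13\}\); in both cases \(k\) is prime and \(2\) is a primitive root modulo \(k\). Hence
\[
x^k+1=(x+1)\Phi_k(x)
\]
with \(\Phi_k\) irreducible of degree \(k-1\). Similarly,
\[
(x+1)^k+1=x\,\Phi_k(x+1),
\]
and \(\Phi_k(x+1)\) is irreducible of the same degree.

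Now consider the linear factors. The root \(x=1\) of \(x^k+1\) is not a root of \((x+1)^k+1\), since that polynomial takes the value \(1\) at \(x=1\). Likewise \(x=0\) is not a root of \(x^k+1\). So \(G_e\) is either \(1\) or \(\Phi_k(x)\), and the latter happens only if \(\Phi_k(x)=\Phi_k(x+1)\).

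To rule this out I compare coefficients. The coefficient of \(x^j\) in \(\Phi_k(x+1)\) is \(\binom{k}{j+1}\bmod 2\), whereas every coefficient of \(\Phi_k(x)\) equals \(1\).
\begin{itemize}
\item For \(k=11\), \(\binom{11}{4}=330\) is even, so the coefficients at \(x^3\) differ.
\item For \(k=13\), \(\binom{13}{2}=78\) is even, so the coefficients at \(x^1\) differ.
\end{itemize}
Hence \(G_7=G_8=1\), which has no root in any field, and the root-free criterion yields \(R_2=3e-1\) throughout the stated ranges.

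The only delicate point is this irreducibility and coefficient comparison. If \(\Phi_k(x+1)\) had coincided with \(\Phi_k(x)\), the criterion would fail for every \(m\) divisible by \(k-1\).
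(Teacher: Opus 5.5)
Your proposal is correct and follows the paper's proof. You use the low-error clause of Theorem~\ref{thm:intro-general-order} at \(t=2\) for \(e=5,6\) and the root-free clause of Theorem~\ref{r2:thm:effective-general-errors} for \(e=7,8\), together with the lower bound \(3e-1\) from Proposition~\ref{prop:lower-bound} and the same threshold computations. The only difference is that you verify \(G_7=G_8=1\) conceptually, through the irreducibility of \(\Phi_{11}\) and \(\Phi_{13}\) over \(\F_2\) and a parity check on binomial coefficients, whereas the paper just cites a direct Euclidean calculation.
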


\subsection{The one-column reduction}
\label{r2:subsec:plane-reductions}

The direct four-error argument and the later common-core construction both
begin by removing one-column targets.  Recall that a nonzero syndrome is
one-column when it equals \(h_e(x)\) for some \(x\in F^*\).

\begin{lemma}[Avoiding one-column basis elements]
\label{r2:lem:one-column-basis}
Let \(W\subseteq F^e\) be a two-dimensional binary syndrome space.  If
\(W\) contains two distinct nonzero one-column syndromes, those two columns
span \(W\).  If \(W\) contains exactly one nonzero one-column syndrome, the
other two nonzero elements form a basis of \(W\), and neither is one-column.
\end{lemma}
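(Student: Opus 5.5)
The lemma is elementary linear algebra over \(\F_2\); the plan is to use only that a two-dimensional binary space has exactly three nonzero elements \(w_1,w_2,w_3=w_1+w_2\), and that any two distinct nonzero elements of \(W\) form a basis.

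For the first assertion, suppose \(h_e(x)\) and \(h_e(y)\) are distinct nonzero one-column syndromes in \(W\), with \(x,y\in F^*\). Distinct syndromes force \(x\neq y\), although only distinctness of the vectors is actually needed. Two distinct nonzero vectors over \(\F_2\) are linearly independent, because their only possible dependence would be equality. Hence \(\Span_{\F_2}\{h_e(x),h_e(y)\}\) is two-dimensional and contained in \(W\), so it equals \(W\). In particular \(\mu_e(W)\leq2\).

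For the second assertion, let \(h_e(z)\) be the unique nonzero one-column element of \(W\). The remaining two nonzero elements \(u,v\) are distinct and nonzero, so they form a basis by the same observation, and \(u+v=h_e(z)\). Neither \(u\) nor \(v\) is one-column: otherwise \(W\) would contain a second one-column syndrome different from \(h_e(z)\), contradicting uniqueness.

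I expect no real obstacle. The one point to state explicitly is that ``distinct'' refers to the syndrome vectors themselves. Since \(x\mapsto h_e(x)\) is injective on \(F^*\) because its first coordinate is \(x\), this is the same as distinct locators. The rank condition~\eqref{eq:rank-condition} is not needed here.
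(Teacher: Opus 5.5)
Your proof is correct and follows the paper's own argument. For the first assertion, two distinct nonzero binary vectors span a two-dimensional subspace of \(W\). For the second, the remaining two nonzero elements are distinct and sum to the unique one-column syndrome, so they form a basis and neither can be one-column.
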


\begin{proof}
The first assertion is immediate.  In the second case, the other two
nonzero elements are distinct and their sum is the unique one-column
syndrome.  They therefore form a basis and cannot themselves be one-column.
\end{proof}

\subsection{Four errors: two common locators and quintic completion}
\label{r2:subsec:four-errors}

The general theorem gives the exact stable value for four errors.  A separate
degree-five argument gives the nearly exact interval much earlier.  Recall
\[
 h_4(x)=(x,x^3,x^5,x^7)^\T.
\]

We shall use the known stable ordinary covering radius
\begin{equation}
\label{r2:eq:ordinary-radius}
 R_1(C_{e,m})=2e-1
 \quad\text{if}\quad
 m>\bigl(2e-2-v_2(e)\bigr)
       \left\lceil\log_2(e+1)\right\rceil,
\end{equation}
where \(v_2(e)\) is the binary valuation
\cite[Theorem~3]{KavutTutdere2019}.

\begin{proposition}[The direct twelve-column bound]
\label{r2:prop:four-upper-twelve}
If \(m\geq16\), then every binary syndrome space
\(W\subseteq F^4\) of dimension at most two satisfies
\[
 \mu_4(W)\leq12.
\]
\end{proposition}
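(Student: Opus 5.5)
The plan is to split by how many one-column elements \(W\) contains, dispose of the easy cases with the ordinary covering radius, and handle the generic case with two shared locators followed by two independent five-locator completions.

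\textbf{Easy cases.} By~\eqref{r2:eq:ordinary-radius} with \(e=4\), \(v_2(4)=2\) and \(\lceil\log_2 5\rceil=3\), we get \(R_1(C_{4,m})=7\) once \(m>12\). So for \(m\geq16\) every syndrome is spanned by at most \(7\) columns. This settles \(\dim W\leq1\).

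For \(\dim W=2\) we use Lemma~\ref{r2:lem:one-column-basis}:
\begin{itemize}
\item If \(W\) contains two one-column elements, then \(\mu_4(W)\leq2\).
\item If \(W\) contains exactly one, say \(h_4(z)\), with remaining elements \(s\) and \(s+h_4(z)\), we cover \(s\) by at most \(7\) columns and adjoin \(z\). This gives \(\mu_4(W)\leq8\).
\end{itemize}
So only the case where no nonzero element of \(W=\{0,\bfs_1,\bfs_2,\bfs_1+\bfs_2\}\) is one-column remains.

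\textbf{Generic case: reduction to a point count.} We look for two common locators \(a,b\in F^*\) and quintuples \(X_1,X_2\subseteq F^*\) with
\[
 \bfs_i+h_4(a)+h_4(b)=\sum_{x\in X_i}h_4(x),\qquad i=1,2 .
\]
The set \(\{a,b\}\cup X_1\cup X_2\) has at most \(12\) elements, and its span contains both \(\bfs_i\). Coincidences among locators only shrink the reduced support \(E(\cdot)\), so they are harmless.

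For a fixed target \(u=(u_1,u_3,u_5,u_7)\), five distinct nonzero locators with these odd power sums correspond to a squarefree polynomial
\[
 \sigma(x)=x^5+\sigma_1x^4+\cdots+\sigma_5
\]
that splits over \(F\) with \(\sigma_5\neq0\). The even power sums are Frobenius-determined (\(S_{2j}=S_j^2\)), so Newton's identities express \(\sigma_1,\dots,\sigma_4\) in terms of \(u\) and one free parameter \(\lambda\) (for instance \(\sigma_5\) or \(\sigma_2\)). This yields a one-parameter quintic family \(P_u(\lambda;x)\).

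The number of \(\lambda\in F\) for which \(P_u(\lambda;x)\) splits into distinct nonzero linear factors is \(q/120+O(\sqrt q)\). I would prove this with the Weil bound on the splitting curve \(\{(\lambda,x_1,\dots,x_5)\}\), after checking that the geometric monodromy of the family is the full \(S_5\) (a discriminant and branch-point argument). The implied constant is explicit in terms of the degree, hence in terms of \(D_4=105\)-type quantities.

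\textbf{Generic case: simultaneous solvability.} We need both targets \(u=\bfs_i+h_4(a)+h_4(b)\) to be completable at once. Fix \(a\) and let \(b\) vary, or let \((a,b)\) vary over \(F^2\). The variety of admissible data
\[
 \bigl(a,b,\lambda_1,X_1,\lambda_2,X_2\bigr)
\]
is then a fibre product of two splitting covers over the \((a,b)\)-plane. We show it has an absolutely irreducible dominating component; Lang--Weil then gives \(\asymp q^{2}/120^2\) points, up to an error \(O(q^{3/2})\). The non-one-column hypothesis on \(\bfs_1,\bfs_2,\bfs_1+\bfs_2\) rules out the degenerate fibres where the two covers could coincide or collapse. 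We then subtract the degenerate loci (repeated roots, a root equal to \(0\), \(a=b\)), each of dimension at most one and hence \(O(q)\) points. This leaves a positive count as soon as \(q\) exceeds an explicit constant.

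\textbf{Main obstacle.} The hard step is the effective, explicit count: proving the \(S_5\) monodromy and the irreducibility of the fibre product uniformly in \((\bfs_1,\bfs_2)\), with error constants small enough that \(q=2^{16}\) already suffices. The general B\'ezout-based threshold is far too weak for this. I would first try a direct approach: fix \(a\) cleverly (e.g. so that \(h_4(a)\) cancels a lower coordinate) and use \(b\) as a single parameter. This reduces the problem to a curve, where Weil's bound with genus estimated via Riemann--Hurwitz for a degree-\(120^2\) cover gives a concrete inequality to check at \(m=16\).
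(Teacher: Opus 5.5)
Your easy cases are fine; covering the case of exactly one one-column element by \(7+1\) columns is even simpler than what the paper does. The generic case, however, is not proved. The \(S_5\) monodromy of each one-target family, the absolute irreducibility of the fibre product, and an effective count valid already at \(q=2^{16}\) are all asserted or deferred, and you flag the last one yourself as the main obstacle. The route you chose is also unlikely to close at this threshold. Lang--Weil or Cafure--Matera constants driven by B\'ezout-type degrees apply only far beyond \(2^{16}\); the paper's own fibre-product machinery for this family starts at \(m\geq55\). The dimension bookkeeping is off as well. The variety of data \((a,b,\lambda_1,X_1,\lambda_2,X_2)\) is four-dimensional, not two-dimensional. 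Fixing \(a\) and varying \(b\) still leaves \(b,\lambda_1,\lambda_2\) free, so it does not reduce to a curve carrying a degree-\(120^2\) cover.

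The missing idea is that no fibre product is needed. Once the common pair is fixed, the two targets are completed by independent choices of their own pencil parameters. So you only need one common pair at which both one-target pencils satisfy the hypotheses of an explicit one-parameter splitting estimate.

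The paper proceeds as follows.
\begin{enumerate}
\item It takes \(a=e_2\) as the parameter and writes the completion polynomial as a pencil \(f_0+af_1\) of degrees \(5\) and \(3\).
\item Using the identities \(\alpha^2f_0+(\alpha^2y^2+\alpha\beta)f_1=\kappa(y+s)\) and \(f_0'f_1+f_0f_1'=\kappa/\alpha\), it shows that whenever \(\alpha\kappa\neq0\), every hypothesis of Cohen's degree-five estimate \(120N\geq q-242\sqrt q-600\) holds. This bound is positive for \(q\geq2^{16}\).
\item It normalizes each target's first moment to zero by a translation. This fixes the pair sum \(r\) and acts bijectively on the pair product \(p\).
\item For fixed \(r\neq0\), the conditions \(\alpha=0\) and \(\kappa(r,p)=0\) exclude at most four products per target.
\item It picks one of the \(q/2\) products of split pairs with sum \(r\) that avoids these at most eight bad values.
\end{enumerate}
Incidentally, you need neither nonzero nor distinct roots, since zeros and repetitions only shrink the reduced support. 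Without an explicit single-target estimate of this kind, together with explicit nondegeneracy conditions on the common pair, your argument does not give the bound for \(m\geq16\).
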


\begin{proof}
If \(\dim W\leq1\), the ordinary-radius formula
\eqref{r2:eq:ordinary-radius} gives \(\mu_4(W)\leq7\).  Suppose that
\(\dim W=2\).  By Lemma~\ref{r2:lem:one-column-basis}, either two columns
already span \(W\), or \(W\) has a basis
\(\bfs^\T,\bft^\T\) in which neither target is one-column.  We treat the
second case.

We seek two common locators \(c_1,c_2\) and five target-specific locators for
each basis element:
\[
 \bfs^\T
 =\sum_{i=1}^{2}h_4(c_i)+\sum_{j=1}^{5}h_4(x_j),
 \qquad
 \bft^\T
 =\sum_{i=1}^{2}h_4(c_i)+\sum_{j=1}^{5}h_4(y_j).
\]
The reduced-support convention then gives a common support of size at most
\[
 2+5+5=12.
\]

The completion step uses the degree-five case of Cohen's polynomial
splitting estimate
\cite[Lemma~3.1 and Eq.~(3.2)]{Cohen1997LengthBCH}.  Let
\(f_0,f_1\in F[y]\) be monic of degrees five and three.  Suppose that they
are coprime, that \(f_0/f_1\) is functionally indecomposable, that every
geometric root of \(f_1\) has odd multiplicity, and that, for every
\(\lambda\in\overline F\), every root of \(f_0+\lambda f_1\) has odd
multiplicity or multiplicity two.  If \(N\) is the number of \(a\in F\)
for which \(f_0+af_1\) has five distinct roots in \(F\), then
\begin{equation}
\label{r2:eq:cohen-five-bound}
 120N\geq q-242\sqrt q-600.
\end{equation}
The right-hand side is positive for \(q\geq2^{16}\).

For either target, write
\[
 \bfu^\T=(\sigma_1,\sigma_3,\sigma_5,\sigma_7)^\T.
\]
For a prospective common pair
\(\mathbf c=(c_1,c_2)\), define the residual moments
\[
 p_d^{[5]}(\bfu,\mathbf c)
 \eqdef \sigma_d+c_1^d+c_2^d,
 \qquad d\in\{1,3,5,7\},
\]
and abbreviate
\[
 (s,u,v,w)\eqdef
 \bigl(p_1^{[5]},p_3^{[5]},p_5^{[5]},p_7^{[5]}\bigr).
\]
Thus a five-locator completion is a choice of five field elements having
these four odd power sums.

Let \(e_i\) denote the elementary symmetric functions of the five completing
locators, and use \(a\eqdef e_2\) as a free parameter.  Put
\[
 \alpha\eqdef u+s^3,\qquad
 \beta\eqdef v+us^2,\qquad
 \gamma\eqdef w+su^2+s^7+vs^2.
\]
The characteristic-two Newton identities give
\[
 e_1=s,\qquad e_3=\alpha+sa,\qquad
 \alpha e_4=\gamma+\beta a,\qquad
 e_5=\beta+\alpha a+se_4.
\]
When \(\alpha\neq0\), the polynomial of the completing locators is the
pencil
\[
 f_a(y)=f_0(y)+af_1(y),
\]
where
\begin{align*}
 f_0(y)
 &=y^5+sy^4+\alpha y^2+\frac{\gamma}{\alpha}y
   +\beta+\frac{s\gamma}{\alpha},\\
 f_1(y)
 &=y^3+sy^2+\frac{\beta}{\alpha}y
   +\alpha+\frac{s\beta}{\alpha}.
\end{align*}
Define the second admissibility scalar
\[
 \kappa\eqdef\alpha\gamma+\beta^2.
\]
A direct calculation gives
\begin{equation}
\label{r2:eq:quintic-coprime-identity}
 \alpha^2f_0+(\alpha^2y^2+\alpha\beta)f_1
 =\kappa(y+s).
\end{equation}
If \(\alpha\kappa\neq0\), the identity and \(f_1(s)=\alpha\) show that
\(f_0\) and \(f_1\) are coprime.  The rational map \(f_0/f_1\) has prime
degree five and is functionally indecomposable.  Moreover,
\(f_1'(y)=y^2+\beta/\alpha\), and substituting its unique geometric root
into \(f_1\) leaves \(\alpha\); hence \(f_1\) is square-free.  A second
direct calculation gives
\[
 f_0'(y)f_1(y)+f_0(y)f_1'(y)=\frac{\kappa}{\alpha}.
\]
The same identity holds with \(f_0\) replaced by
\(f_0+\lambda f_1\), so every member of the pencil is square-free.  Thus
all hypotheses of~\eqref{r2:eq:cohen-five-bound} hold.  Consequently, the
target has a five-locator completion whenever \(\alpha\kappa\neq0\).

It remains to choose one common pair for which this condition holds for both
targets.  Fix a nonzero desired sum \(r\) of the common locators, and denote
their product by \(p\).  We count the products forbidden by each target after
normalizing its first moment to zero.

For an odd-moment vector
\(\boldsymbol\xi=(\xi_1,\xi_3,\xi_5,\xi_7)\), put
\[
 \xi_0\eqdef1,\qquad
 \xi_{2j}\eqdef\xi_j^2\quad(1\leq j\leq3),
\]
recursively, and define the triangular translation map
\[
 (\mathcal T_b\boldsymbol\xi)_d
 \eqdef
 \sum_{\ell=0}^{d}\binom d\ell b^{d-\ell}\xi_\ell,
 \qquad d\in\{1,3,5,7\},
\]
with binomial coefficients reduced modulo two.  Translating all seven
locator slots by \(b\) changes their odd-moment vector by
\(\mathcal T_b\).  In particular,
\[
 \mathcal T_b^2=\operatorname{id},
 \qquad
 \mathcal T_b(h_4(x))=h_4(x+b).
\]
Choosing \(b=\xi_1\) makes the first transformed coordinate zero.  The sum
\(r\) of the two common slots is unchanged, while their product changes by
\[
 p\longmapsto p+br+b^2.
\]
This is a bijection of the possible product values.  The two targets may use
different normalizing translations: after a common original product \(p\)
is chosen, each target receives its image under its own bijection, and the
inverse translations return both pairs to the same original pair.  Thus the
bad products may be counted separately in the two normalized coordinates.

For either target, its normalized syndrome has the form
\[
 (0,\lambda,\nu,\rho)^\T
\]
and is nonzero; otherwise the original target would be one-column.  The power
sums of a pair with sum \(r\) and product \(p\) give
\[
 \alpha=\lambda+rp,\qquad
 \beta=\nu+\lambda r^2+rp^2,\qquad
 \gamma=\rho+\lambda^2r+\nu r^2+rp^3,
\]
and hence
\begin{align}
\label{r2:eq:quintic-bad-product}
 \kappa(r,p)={}&
 \nu^2+\lambda\rho+\lambda^3r+\lambda\nu r^2
 +\lambda^2r^4+\rho rp+\lambda^2r^2p \notag\\
 &{}+\nu r^3p+\lambda rp^3.
\end{align}
For fixed \(r\neq0\), if \(\lambda\neq0\), the condition \(\alpha=0\)
excludes one value of \(p\), and the nonzero cubic
\(\kappa(r,p)\) excludes at most three more.  If
\(\lambda=0\) and \(\nu\neq0\), the two conditions exclude at most two
values.  If \(\lambda=\nu=0\), then \(\rho\neq0\), and only \(p=0\) is
excluded.  Each target therefore forbids at most four products, and the two
targets together forbid at most eight.

The products of split pairs with sum \(r\) are exactly
\[
 \{x(x+r):x\in F\}.
\]
The additive map \(x\mapsto x^2+rx\) has kernel \(\{0,r\}\), so this set
has \(q/2\) elements.  Choose a product outside the bad values, and then a
split pair \(c_1,c_2\) with that sum and product.  Since \(q\geq2^{16}\),
Cohen's estimate supplies five completing locators for each normalized
target.  Applying the inverse translations gives the required two
representations with the same original pair.  Zeros, repetitions, and
further overlaps only reduce the resulting binary support, so
\(\mu_4(W)\leq12\).
\end{proof}

\begin{proof}[Proof of Theorem~\ref{r2:thm:four-errors}]
For \(m\geq16\), Proposition~\ref{r2:prop:four-upper-twelve} gives the upper
bound.  Proposition~\ref{prop:lower-bound} with \(e=4,t=2\) gives the lower
bound \(11\).

For the exact stable range,
\[
 B_{4,2}=D_4^2=(1\cdot3\cdot5\cdot7)^2=11025,
\]
and
\[
 \max\{8B_{4,2}^2,\,2B_{4,2}^4\}=2B_{4,2}^4
 =29549108875781250.
\]
The first power of two larger than this number is \(2^{55}\).  Moreover,
\[
 G_4(x)=
 \gcd_{\F_2[x]}\bigl(x^5+1,(x+1)^5+1\bigr)=1.
\]
The root-free clause of
Theorem~\ref{r2:thm:effective-general-errors} therefore gives
\(R_2(C_{4,m})=11\) for every \(m\geq55\).
\end{proof}

\subsection{The known boundary value}
\label{r2:subsec:elementary-boundary}

The smallest four-error code does not satisfy the full-rank field model, but
it is a repetition code, so its second generalized radius is already covered
by the repetition-code formula.

\begin{proposition}[The case \(m=4\)]
\label{r2:prop:four-errors-m4}
One has
\[
 R_2(C_{4,4})=11.
\]
\end{proposition}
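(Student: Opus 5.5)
The plan is to avoid the field model altogether, since \eqref{eq:rank-condition} fails for \(m=4\) (\(2e-1=7>4\)). Instead I will identify \(C_{4,4}\) explicitly and compute \(R_2\) by hand.

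\emph{Step 1: identify the code.} The zeros of \(C_{4,4}\) are the binary cyclotomic cosets modulo \(15\) of \(1,3,5,7\):
\[
\{1,2,4,8\},\qquad \{3,6,12,9\},\qquad \{5,10\},\qquad \{7,14,13,11\}.
\]
Their union has \(4+4+2+4=14\) elements, so \(C_{4,4}\) has dimension \(1\). Its minimum distance is at least the designed distance \(9\), so its only nonzero codeword must be the all-ones word \(\mathbf 1\). Thus \(C_{4,4}\) is the binary repetition code of length \(n=15\). Its redundancy is \(14\), so every vector in \(\F_2^{14}\) is a syndrome.

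\emph{Step 2: reduce \(R_2\) to a counting problem.} Since \(R_t\) is defined through syndromes, it depends only on the code. Two syndromes correspond to two cosets \(v_1+C\) and \(v_2+C\), with \(v_1,v_2\in\F_2^{15}\). A column set \(I\) spans both syndromes exactly when there are representatives \(v_i+a_i\mathbf 1\), with \(a_i\in\F_2\), both supported in \(I\). So the minimum support needed is
\[
\min_{a_1,a_2\in\F_2}\bigl|\operatorname{supp}(v_1+a_1\mathbf 1)\cup\operatorname{supp}(v_2+a_2\mathbf 1)\bigr|.
\]
Label each coordinate by its pattern \((v_1,v_2)\in\F_2^2\), and let \(n_{00},n_{01},n_{10},n_{11}\) be the pattern counts, which sum to \(15\). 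The choice \((a_1,a_2)\) translates every pattern by the same vector. The union of supports is therefore \(15\) minus the number of coordinates whose translated pattern is \(00\). Choosing \((a_1,a_2)\) equal to the most frequent pattern gives the exact value
\[
15-\max_{\pi}n_\pi .
\]

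\emph{Step 3: evaluate the maximum.} Four nonnegative counts summing to \(15\) have largest count at least \(\lceil 15/4\rceil=4\), so every pair needs at most \(11\) columns. Conversely, choose \(v_1,v_2\) with counts \((4,4,4,3)\), which is possible since \(4+4+4+3=15\). Then every translation leaves at most \(4\) zero patterns, so this pair needs exactly \(11\) columns. Hence \(R_2(C_{4,4})=11\).

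This agrees with the repetition-code formula \(n-\lceil n/2^t\rceil\) of \cite{ElimelechFirerSchwartz2021}, and that formula could alternatively be cited directly. The only genuinely delicate point is Step~1, namely checking that the coset union has size \(14\) and that the dimension-one code is the repetition code. The remainder is an elementary pigeonhole argument.
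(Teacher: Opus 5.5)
Your proof is correct and follows the paper's route: you identify \(C_{4,4}\) as the length-\(15\) repetition code and then evaluate \(R_2\) for that code. The only difference is that the paper cites the formula \(n-\lceil n/2^t\rceil\) from \cite[Proposition~9]{ElimelechWeiSchwartz2022ReedMuller}, remarking that its proof extends verbatim to arbitrary length, whereas you carry out that proof yourself by the four-pattern pigeonhole count for \(n=15\), \(t=2\).

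One justification in Step~1 needs repair. Minimum distance at least \(9\) does not by itself force the unique nonzero codeword to be \(\mathbf 1\). Instead, note that the four cyclotomic cosets exhaust all nonzero residues modulo \(15\), so the generator polynomial is \(1+x+\cdots+x^{14}\). Equivalently, the unique nonzero word of a one-dimensional cyclic code is shift-invariant, hence constant.
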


\begin{proof}
The code \(C_{4,4}\) is the binary repetition code of length \(15\).
The proof of the repetition-code formula in
\cite[Proposition~9]{ElimelechWeiSchwartz2022ReedMuller} applies verbatim to
arbitrary length \(n\) and gives
\[
 R_t(\operatorname{Rep}_2(n))=n-\left\lceil\frac{n}{2^t}\right\rceil.
\]
Substituting \(n=15\) and \(t=2\) gives the claim.
\end{proof}

This isolated boundary value is independent of, and is not used in, the
stable theorem chain below.

\subsection{Exact examples and the higher-order sharpening}
\label{r2:subsec:exact-examples}

\begin{proof}[Proof of Corollary~\ref{r2:cor:sample-exact-families}]
For \(e=5,6\), apply the low-error clause of
Theorem~\ref{thm:intro-general-order} with \(t=2\).  It gives
\[
 R_2(C_{e,m})\leq3e-1
\]
under its field-size hypothesis, while
Proposition~\ref{prop:lower-bound} gives the reverse inequality.  Direct
substitution in
\[
 q>\max\{2eB_{e,2}^2,\,2B_{e,2}^4\},
 \qquad B_{e,2}=D_e^2,
\]
shows that the first permitted extension degrees are \(81\) and \(108\),
respectively.  This proves the first two assertions as consequences of the
general higher-order theorem, without any root restriction.

For the remaining two families, a direct Euclidean calculation gives
\[
 G_7(x)=G_8(x)=1.
\]
The first powers of two satisfying
\eqref{r2:eq:general-field-threshold} occur at extension degrees \(138\) and
\(169\), respectively.  The root-free clause of
Theorem~\ref{r2:thm:effective-general-errors} gives the last two assertions.
\end{proof}

\begin{remark}
\label{r2:rem:small-error-comparison}
The uniform second-order theorem is not intended to replace the sharper
small-error results.  The double-error-correcting family has second radius
five apart from its known exceptional degree, and the triple-error-correcting
family has second radius eight throughout its full-rank range
\cite{YohananovSchwartz2025,EssayagZabokritskiy2026BCH3}.
\end{remark}

\subsection{Projection reductions and the common-core input}
\label{r2:subsec:common-core-input}

We now isolate the remaining elementary choices that place a binary syndrome
plane in the range of the uniform common-core construction.

\begin{lemma}[A basis adapted to the lower projection]
\label{r2:lem:projection-basis}
Suppose that the restriction of \(\pi_e\) to a two-dimensional syndrome
space \(W\) has positive rank.  Then either two one-column syndromes already
span \(W\), or \(W\) has a basis of non-one-column syndromes with different
lower projections.
\end{lemma}

\begin{proof}
If \(\pi_e|_W\) has rank one, choose the nonzero kernel element and any
element outside the kernel.  If it has rank two, every basis has distinct
images.  When \(W\) has exactly one one-column syndrome, use the other two
nonzero elements as in Lemma~\ref{r2:lem:one-column-basis}; their projections
differ by the nonzero lower projection of that column.
\end{proof}

A plane on which \(\pi_e\) vanishes consists of the zero syndrome and three
pure highest-coordinate syndromes
\[
 \bfs_\gamma^\T=(0,\ldots,0,\gamma)^\T.
\]
The following lemma explains the definition of \(G_e\).

\begin{lemma}[The pure-highest arithmetic obstruction]
\label{r2:lem:pure-highest-obstruction}
Let \(U\subseteq F\) be a two-dimensional binary subspace.  It has no basis
\(\gamma_1,\gamma_2\) satisfying
\[
 \gamma_1^{k_e}\neq\gamma_2^{k_e}
\]
if and only if, after scaling its three nonzero elements to
\[
 1,\qquad r,\qquad 1+r,
\]
one has \(G_e(r)=0\).
\end{lemma}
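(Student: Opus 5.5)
The plan is to reduce the statement to a condition on the three nonzero elements of \(U\), and then scale that condition into the normalized form \(1,r,1+r\).

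\emph{Step 1: every pair is a basis.} Write the nonzero elements of \(U\) as \(\gamma,\delta,\gamma+\delta\). Any two distinct nonzero elements of a two-dimensional binary space form a basis. Hence "no basis with \(\gamma_1^{k_e}\neq\gamma_2^{k_e}\)" is equivalent to
\[
 \gamma^{k_e}=\delta^{k_e}=(\gamma+\delta)^{k_e},
\]
that is, the power map \(x\mapsto x^{k_e}\) is constant on \(U\setminus\{0\}\).

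\emph{Step 2: normalize.} Scale by \(\gamma^{-1}\); this is legitimate because \(\gamma\neq0\). The nonzero elements become \(1\), \(r\eqdef\delta/\gamma\) and \(1+r\), with \(r\notin\{0,1\}\). The power map is multiplicative, so \((\gamma x)^{k_e}=\gamma^{k_e}x^{k_e}\), and the constancy condition is unchanged by scaling. The condition therefore becomes
\[
 r^{k_e}=1\quad\text{and}\quad(1+r)^{k_e}=1,
\]
so \(r\) is a common root of \(x^{k_e}+1\) and \((x+1)^{k_e}+1\) in \(F\).

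\emph{Step 3: pass to the gcd.} Over \(\F_2[x]\), a field element is a common root of two polynomials if and only if it is a root of their gcd. By definition that gcd is \(G_e\), so the condition is exactly \(G_e(r)=0\).

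\emph{Step 4: independence of the choice.} The phrase "after scaling its three nonzero elements" suggests the condition should not depend on which element is scaled to \(1\). The other choices give \(r\) replaced by \(1/r\), \(1+r\), and so on. This is harmless because the condition in Step 1 is symmetric in the three elements, so every normalization yields an equivalent criterion. One can also check directly that \(r^{k_e}=(1+r)^{k_e}=1\) forces \(r\neq0,1\). Indeed, \(r=0\) gives \((1+0)^{k_e}=1\) but \(0^{k_e}=0\neq1\), and \(r=1\) gives \(0^{k_e}=0\neq1\). Here \(k_e\geq1\) since \(e\geq2\). So every root of \(G_e\) in \(F\) corresponds to a genuine two-dimensional subspace.

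There is no real obstacle. The only point needing care is Step 1, the observation that every pair of distinct nonzero elements is a basis. This is what lets the existential "some basis" be replaced by "all three values \(x^{k_e}\) agree".
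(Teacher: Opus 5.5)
Your proof is correct and follows the paper's argument: the three bases of \(U\) are the three pairs of its nonzero elements, so the condition is that all three \(k_e\)-th powers agree, and after scaling one element to \(1\) this becomes \(r^{k_e}=(1+r)^{k_e}=1\), i.e.\ \(G_e(r)=0\). Your Step~4 remarks on the choice of normalization and on \(r\notin\{0,1\}\) are harmless additions that the equivalence itself does not need.
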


\begin{proof}
The three possible unordered bases of \(U\) are the three pairs among its
nonzero elements.  Every basis has equal \(k_e\)-th powers precisely when all
three nonzero elements do.  After scaling one element to one, this condition
is
\[
 r^{k_e}=1,\qquad (1+r)^{k_e}=1,
\]
which is equivalent to \(G_e(r)=0\).
\end{proof}

The geometric construction needed below is most naturally stated directly
in the coding language.  Its proof is deferred to
Section~\ref{r2:sec:second-order-engine}.

\begin{proposition}[Second-order common-core plane criterion]
\label{r2:prop:common-core-plane}
Let \(e\geq2\), let \(W\subseteq F^e\) be a two-dimensional binary syndrome
space, and suppose that
\[
 q>\max\{2eB_{e,2}^2,\,2B_{e,2}^4\}.
\]
Assume that either
\begin{enumerate}
\item the restriction of \(\pi_e\) to \(W\) has positive rank; or
\item \(\pi_e(W)=0\) and \(W\) has a basis
      \(\bfs_{\gamma_1}^\T,\bfs_{\gamma_2}^\T\) for which
      \[
       \gamma_1^{2e-3}\neq\gamma_2^{2e-3}.
      \]
\end{enumerate}
Then
\[
 \mu_e(W)\leq3e-1.
\]
\end{proposition}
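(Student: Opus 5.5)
The aim is a common core of \(e-1\) locators together with \(e\) completing locators for each of two basis targets \(\bfs^\T,\bft^\T\), since \((e-1)+2e=3e-1\).

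\textbf{Setup.} In case (1), Lemma~\ref{r2:lem:projection-basis} either finishes immediately (two one-column syndromes span \(W\), so \(\mu_e(W)\le2\)) or gives a basis of non-one-column targets with distinct lower projections \(\pi_e(\bfs)\neq\pi_e(\bft)\). In case (2), we take the pure basis \(\bfs_{\gamma_1},\bfs_{\gamma_2}\) with \(\gamma_1^{2e-3}\neq\gamma_2^{2e-3}\).

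\textbf{Incidence variety.} For a prospective core \(\mathbf c=(c_1,\dots,c_{e-1})\), the residual moments are \(p_d=\sigma_d+\sum_i c_i^d\) for \(d\) odd, \(d\le 2e-1\). An \(e\)-locator completion of a residual vector amounts to a square-free split degree-\(e\) polynomial. Its coefficients are determined by the Newton identities from the \(e\) odd moments together with the even moments fixed by Frobenius. Generalizing the quintic computation, we pass to the variety
\[
V=\{(\mathbf c,\mathbf x,\mathbf y)\}\subseteq \mathbb A^{3e-1}
\]
cut out by the \(2e\) equations \(\sum h_e(c_i)+\sum h_e(x_j)=\bfs\) and \(\sum h_e(c_i)+\sum h_e(y_j)=\bft\). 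This is a system of \(2e\) equations in \(3e-1\) unknowns, so \(V\) has expected dimension \(e-1\).

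\textbf{Main step: a geometrically irreducible component.} We want a component of \(V\), defined over \(F\), that is geometrically irreducible of dimension \(e-1\) and not contained in the collision strata where coordinates coincide or vanish. Its degree is at most \(\prod(2j-1)^2=B_{e,2}\) by Bézout.

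For irreducibility, project to the core \(\mathbf c\). Over a generic core, each fiber is a product of two finite sets, namely the root sets of the two completion polynomials. We show the monodromy of each completion cover is the full symmetric group \(S_e\), as in the Cohen-type hypotheses: coprimality, square-freeness, and indecomposability arguments analogous to identity~\eqref{r2:eq:quintic-coprime-identity}. We also need the two covers to be linearly disjoint, so that the fiber product is irreducible. This is where the hypothesis enters. The two splitting fields could coincide only if the discriminant, or Artin--Schreier sign classes, of the two completion polynomials agree identically in \(\mathbf c\).

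To rule this out, we compare leading coefficients of those sign invariants. In case (1), the distinct lower projections make the invariants differ already in lower-degree terms. In case (2), the pure targets differ only in the top moment, and the leading term of the relevant invariant scales like \(\gamma^{2e-3}\). So \(\gamma_1^{2e-3}\neq\gamma_2^{2e-3}\) is exactly the condition that separates them; this is the role of Lemma~\ref{r2:lem:pure-highest-obstruction}. I expect this linear-disjointness / sign-class separation to be the main obstacle. I would first try to compute the Berlekamp (Artin--Schreier) class of the generic completion polynomial as a function of \(\mathbf c\) and show its two versions differ by a nonconstant non-Artin--Schreier element.

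\textbf{Point count.} Apply a Lang--Weil bound with explicit constants (Cafure--Matera type) to the component. Its number of \(F\)-points is at least roughly \(q^{e-1}-(\delta-1)(\delta-2)q^{e-3/2}-5\delta^{13/3}q^{e-2}\), where \(\delta\le B_{e,2}\). The collision strata have dimension at most \(e-2\), so they contribute at most \(e\,B_{e,2}\,q^{e-2}\) points. Under the condition \(q>\max\{2eB_{e,2}^2,2B_{e,2}^4\}\), the main term wins, leaving an \(F\)-point off the strata. Reading off \(\mathbf c,\mathbf x,\mathbf y\) then gives a set of \(3e-1\) columns whose span contains both targets. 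Using the reduced-support convention, we conclude \(\mu_e(W)\le3e-1\).
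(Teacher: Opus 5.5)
\textbf{Gap: the sign-class separation.} Your outline has the same architecture as the paper's proof: a simultaneous incidence variety over the core space, two completion covers with group \(S_e\), linear disjointness via the quadratic sign fields, the B\'ezout bound \(B_{e,2}\), and Cafure--Matera. But the step you flag as ``the main obstacle'' is the entire content of the proposition, and you give no mechanism for it. ``The invariants differ already in lower-degree terms'' and ``the leading term scales like \(\gamma^{2e-3}\)'' are conclusions, not arguments. Discriminants cannot help in characteristic two, since the discriminant of the completion polynomial is \(\det\mathscr H_e=\Delta_e^2\), always a square. A closed-form Berlekamp class is not practical for general \(e\); the paper computes one only for \(e\leq6\), and for a different purpose.

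The missing idea is a local ramification computation at the translate \(D_{\bfs}\) of the universal Hankel divisor \(V(\Delta_e)\) by \(\pi_e(\bfs^\T)\), pulled back to core space. There \(e-2\) completion roots stay finite and distinct while two escape to infinity. The chart \(P=u^{-1}\), \(s=wu^{e-1}\) for the escaping pair is \'etale with Jacobian \(\eta_{\bfs}D_R^2\). The local quadratic \(z^2+z=P/s^2\) is identified with the global sign field through the transposition of the escaping pair. It has an odd pole of order \(2e-1\) with leading coefficient \(\eta_{\bfs}^{2e-3}D_R^{2e-1}\) (Lemma~\ref{r2:lem:generic-sign-ramification}); this is where the exponent \(2e-3\) comes from. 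Three further inputs are needed:
\begin{enumerate}
\item \(\eta_{\bfs}\neq0\), which uses the non-one-column hypothesis through a dimension count of representations by \(2e-3\) slots.
\item In case (1), distinct lower projections give distinct translated divisors, because \(V(\Delta_e)\) has trivial additive translation stabilizer (proved from \(\Delta_e=\Delta_{e-2}p_{2e-3}+\Phi_e\) and a triangular monomial). Hence the other sign field is unramified there.
\item In case (2), \(\eta_{\bfs_\gamma}=\gamma\) on the common divisor. So the two classes differ by a leading term \((\gamma_1^{2e-3}+\gamma_2^{2e-3})D_R^{2e-1}\varpi^{-(2e-1)}\) that no coboundary removes.
\end{enumerate}
Lemma~\ref{r2:lem:pure-highest-obstruction} plays no role here; it only produces such a basis elsewhere. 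Finally, the fiber product needs \(L_{\bm a}\cap L_{\bm b}=K\), not merely \(L_{\bm a}\neq L_{\bm b}\). Distinct sign fields give this only after noting that every proper normal subgroup of \(S_e\) lies in \(A_e\), and every nontrivial quotient of \(S_e\) has a unique index-two subgroup, including \(S_4/V_4\simeq S_3\). The cases \(e=2,3\) need separate treatment.

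\textbf{Two other steps fail as written.} First, full \(S_e\) monodromy cannot be obtained by transplanting Cohen's pencil hypotheses. With \(e\) completing locators and \(e\) equations there is no pencil parameter, and the completion polynomial varies over an \((e-1)\)-dimensional base. The paper instead combines geometric irreducibility of the ordered incidence variety \(\mathcal V_{\bfs}\) (Vladut--Skorobogatov, whose only exceptions are exactly the zero and one-column targets) with Hankel reconstruction. The latter shows that over \(\det\mathscr H_e\neq0\) each fiber consists exactly of the \(e!\) orderings of distinct roots, so the cover is a connected \(S_e\)-torsor.

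Second, your point count is quantitatively wrong. For \(q\) just above \(2\delta^4\) one has \(5\delta^{13/3}q^{e-2}>q^{e-1}\), so your displayed lower bound is negative and cannot absorb an extra collision-strata term. Use instead the existence statement of Cafure--Matera (their Corollary~7.4), which applies exactly when \(q>\max\{2e\delta^2,2\delta^4\}\) in dimension \(e-1\). Before applying it, check that the closure of the integral fiber product is the unique component dominating the core space, hence Frobenius-stable and defined over \(F\); you asserted this without argument. Avoiding collision strata is unnecessary anyway. Any \(F\)-point of that component satisfies the defining equations, and zeros or repeated locators only shrink the reduced support, so Lemma~\ref{lem:common-core-certificate} applies to every such point.
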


The proposition constructs one common tuple of length \(e-1\) and two
completing tuples of length \(e\).  The forward-referenced common-core certificate,
Lemma~\ref{lem:common-core-certificate}, turns those tuples into the displayed
support bound.  This is the only deferred input used in the proofs below.

\subsection{The uniform second-order theorem}
\label{r2:subsec:uniform-second-order}

\begin{proof}[Proof of Theorem~\ref{r2:thm:effective-general-errors}]
Condition~\eqref{r2:eq:general-field-threshold} implies both the rank
condition~\eqref{eq:rank-condition} and the hypothesis of
\eqref{r2:eq:ordinary-radius}.  For completeness, the latter implication is
immediate when \(e=2\).  For \(e\geq3\), one has
\(B_{e,2}=D_e^2\geq e^e\), and hence
\[
 \bigl(2e-2-v_2(e)\bigr)
 \left\lceil\log_2(e+1)\right\rceil
 <2e(\log_2e+1)<4e\log_2e<m.
\]
Also \(B_{e,2}\geq(2e-1)^2\), so the field-size condition gives
\[
 2^{m/2}=\sqrt q>\sqrt2\,B_{e,2}^2>2e-1.
\]
Thus every element of \(F^e\) is a syndrome, and
Proposition~\ref{prop:lower-bound} with \(t=2\) gives
\[
 R_2(C_{e,m})\geq
 (2e-1)+\left\lceil\frac{2e-1}{2}\right\rceil
 =3e-1.
\]

Let \(W\subseteq F^e\) have dimension at most two.  If
\(\dim W\leq1\), equation~\eqref{r2:eq:ordinary-radius} gives
\[
 \mu_e(W)\leq2e-1.
\]
Suppose therefore that \(\dim W=2\).

If \(\pi_e|_W\) has positive rank, the first case of
Proposition~\ref{r2:prop:common-core-plane} gives
\(\mu_e(W)\leq3e-1\).  Suppose instead that \(\pi_e(W)=0\).  If \(G_e\)
has no root in \(F\), Lemma~\ref{r2:lem:pure-highest-obstruction} supplies
a basis
\(\bfs_{\gamma_1}^\T,\bfs_{\gamma_2}^\T\) with
\[
 \gamma_1^{2e-3}\neq\gamma_2^{2e-3}.
\]
The second case of the common-core plane criterion again gives
\(\mu_e(W)\leq3e-1\).

Without the root-free hypothesis, choose any basis
\(\bfs^\T,\bft^\T\) of \(W\) and any \(z\in F^*\).  The adjusted targets
\[
 \widetilde\bfs^\T\eqdef\bfs^\T+h_e(z),
 \qquad
 \widetilde\bft^\T\eqdef\bft^\T
\]
span a plane whose lower projection has positive rank.  The first case of
Proposition~\ref{r2:prop:common-core-plane} supports the adjusted plane on at
most \(3e-1\) columns.  Restoring the single column \(h_e(z)\) supports the
original plane on at most \(3e\) columns.  Equation~\eqref{eq:Rt-subspaces}
therefore proves
\[
 3e-1\leq R_2(C_{e,m})\leq3e,
\]
and proves equality in the root-free case.

It remains to obtain infinitely many root-free extensions.  Neither zero nor
one is a root of \(G_e\), so \(G_e\) has no linear factor over \(\F_2\).
Let \(d_1,\ldots,d_s\geq2\) be the degrees of its distinct irreducible
factors.  A root of the factor of degree \(d_i\) belongs to
\(\F_{2^m}\) exactly when \(d_i\mid m\).  If \(G_e=1\), every extension is
root-free.  Otherwise, every prime
\(m>\max_i d_i\) is root-free.  Infinitely many such degrees also exceed the
fixed threshold~\eqref{r2:eq:general-field-threshold}.
\end{proof}

\section{The uniform higher-order theorem}
\label{sec:uniform-order}

We now prove the upper bound in Theorem~\ref{thm:intro-general-order}.  By
\eqref{eq:Rt-subspaces}, the coding problem is to support an arbitrary binary
subspace \(W\subseteq F^e\) of dimension \(r\leq t\).  The proof follows one
fixed route.  We first split off the directions of \(W\) that are themselves
BCH columns.  On a complementary space, the lower-coordinate projection
\(\pi_e\) separates a pure kernel from targets with nonzero lower projection.
A matroid-intersection argument retains as many pure basis directions as the
power test can distinguish, and a small set of shared BCH columns regularizes
the remaining directions.

After this elementary preparation, every basis target satisfies the deferred
mixed common-core criterion, Proposition~\ref{prop:higher-order-common-core}.
That criterion, proved in Section~\ref{sec:higher-framework}, realizes all
basis targets using one common \((e-1)\)-tuple and one completing \(e\)-tuple
per target.  Lemma~\ref{lem:common-core-certificate} then returns to the
original support problem with \((r+1)e-1\) columns; finally we restore the
columns split off or used for regularization and maximize over \(W\).

\subsection{Regularizing an arbitrary syndrome space}

Call a syndrome \emph{admissible} if it is nonzero and not one-column.  The
next two lemmas carry out the basis selection, after which we state the
common-core certificate and the deferred criterion.

\begin{lemma}[Shared-column regularization]
\label{lem:regularize-syndrome-space}
Let \(e,r\geq2\), let \(W\subseteq F^e\) have a basis
\(\bfs_1^\T,\ldots,\bfs_r^\T\), and let
\(J\subseteq\{1,\ldots,r\}\) have size \(1\leq b\leq r-1\).  Suppose that
the targets indexed outside \(J\) are admissible and that \(q>2^{2r}\).
Put \(c=\lceil\log_2(b+1)\rceil\).  There are nonzero
\(z_1,\ldots,z_c\in F\) and distinct nonzero labels
\(\varepsilon_i\in\F_2^c\), \(i\in J\), such that, on putting
\[
 A\eqdef\Span_{\F_2}\{h_e(z_1),\ldots,h_e(z_c)\},
 \qquad
 V\eqdef\pi_e(A),
\]
one has \(A\cap W=0\), \(V\cap\pi_e(W)=0\), and the targets
\[
 \widetilde\bfs_i^\T\eqdef
 \begin{cases}
  \bfs_i^\T+\displaystyle\sum_{j=1}^{c}\varepsilon_{i,j}h_e(z_j),
      &i\in J,\\[3pt]
  \bfs_i^\T,&i\notin J,
 \end{cases}
\]
form an admissible basis.  The corrected targets have nonzero pairwise
distinct lower projections, none equal to the lower projection of an
uncorrected target.
\end{lemma}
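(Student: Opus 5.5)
The approach is a counting argument: pick \(z_1,\ldots,z_c\) at random from \(F^*\), and show that fewer than \((q-1)^c\) choices are bad. First fix the labels. Since \(c=\lceil\log_2(b+1)\rceil\), we have \(2^c-1\geq b\), so \(\F_2^c\) has at least \(b\) nonzero vectors. Assign distinct nonzero labels \(\varepsilon_i\), \(i\in J\), arbitrarily. The key structural requirement is
\[
 V\cap\pi_e(W)=0\quad\text{with}\quad\dim_{\F_2}V=c.
\]
This forces \(\dim A=c\) and \(A\cap W=0\). It also gives, for every \(i\in J\),
\[
 \pi_e(\widetilde\bfs_i^\T)=\pi_e(\bfs_i^\T)+v_i,
 \qquad v_i\eqdef\sum_j\varepsilon_{i,j}\pi_e(h_e(z_j))\in V\setminus\{0\}.
\]
So the corrected lower projections lie in distinct nonzero cosets-translates. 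For \(i\neq i'\) in \(J\), the difference of two corrected projections is \(\pi_e(\bfs_i^\T+\bfs_{i'}^\T)+v_i+v_{i'}\), with \(v_i+v_{i'}\neq0\) in \(V\). This is nonzero, because otherwise \(v_i+v_{i'}\in V\cap\pi_e(W)\). The same argument separates a corrected target from an uncorrected one (difference \(\pi_e(\cdot)+v_i\)) and from zero. Linear independence of the \(\widetilde\bfs_i\) follows from \(A\cap W=0\): a vanishing combination forces both its \(W\)-part and its \(A\)-part to vanish, and since the basis \(\bfs_i\) is independent, all coefficients vanish.

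The main work is choosing the \(z_j\) with \(V\) a \(c\)-dimensional complement to \(\pi_e(W)\), and with every corrected target not one-column. I would choose \(z_1,\ldots,z_c\) sequentially. Given \(z_1,\ldots,z_{j-1}\), the bad \(z_j\) are those with
\[
 \pi_e(h_e(z_j))\in \pi_e(W)+\Span\{\pi_e(h_e(z_1)),\ldots,\pi_e(h_e(z_{j-1}))\},
\]
a binary space of size at most \(2^{r+c-1}\). Since \(e\geq2\), the first coordinate of \(\pi_e(h_e(z))\) is \(z\) itself, so the map \(z\mapsto\pi_e(h_e(z))\) is injective. Hence at most \(2^{r+c-1}\leq 2^{2r-1}\) values of \(z_j\) are bad (using \(c\leq b\leq r-1\)).

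Admissibility is the step I expect to be the main obstacle, since "not one-column" is nonlinear. For \(i\in J\), the corrected target fails admissibility if \(\widetilde\bfs_i^\T=h_e(y)\). Looking at the first coordinate and writing \(\sigma\) for the first coordinate of \(\bfs_i\), this means
\[
 y=\sigma+\sum_j\varepsilon_{i,j}z_j .
\]
Pick an index \(j_0\) with \(\varepsilon_{i,j_0}=1\), and choose \(z_{j_0}\) last among those indices. The higher coordinates then give a polynomial condition in \(z_{j_0}\): the second coordinate becomes
\[
 s_{i,3}+\sum_j\varepsilon_{i,j}z_j^3=\Bigl(\sigma+\sum_j\varepsilon_{i,j}z_j\Bigr)^3 .
\]
After the other \(z_j\) are fixed, this is a nonzero polynomial in \(z_{j_0}\) of degree at most \(2\), because the cubic terms cancel and the \(z_{j_0}^2\) coefficient is the nonzero quantity \(\sigma+\sum_{j\neq j_0}\varepsilon_{i,j}z_j\). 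If that quantity vanishes, the degree-one or degree-zero terms must be examined instead.

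This degenerate branch is the delicate point. Here I would instead use that the difference is a nonzero element whose lower projection lies outside \(\pi_e(W)\). Alternatively, I would enlarge the bad set by also excluding the values where that quantity vanishes, which removes at most one more value of \(z_{j_0}\) per target. Each target then excludes \(O(1)\) values, so all conditions together exclude at most \(2^{2r-1}+O(r)<q-1\) values at each step, using \(q>2^{2r}\). Admissibility of the uncorrected targets is a hypothesis. Combining these steps gives the admissible basis with the stated projection properties.
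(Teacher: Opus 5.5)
Your plan follows the paper's proof. You fix distinct nonzero labels and choose \(z_1,\ldots,z_c\) in order, excluding every \(z\) with \(\pi_e(h_e(z))\in\pi_e(W)+V_{j-1}\), where \(V_{j-1}\) is the span of the \(\pi_e(h_e(z_k))\), \(k<j\). You then settle admissibility of \(\widetilde\bfs_i^\T\) at the stage \(j_0\) of the highest nonzero bit of \(\varepsilon_i\). There \(\widetilde\bfs_i^\T=\bfu_i^\T+h_e(z_{j_0})\), with \(\bfu_i^\T\eqdef\bfs_i^\T+\sum_{k<j_0}\varepsilon_{i,k}h_e(z_k)\) already fixed. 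Your deductions of \(A\cap W=0\), of independence, and of the separation of lower projections from \(V\cap\pi_e(W)=0\) are correct.

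The gap is the degenerate branch you flagged, where the first coordinate \(u_1\) of \(\bfu_i^\T\) vanishes. Neither of your remedies works. Excluding extra values of \(z_{j_0}\) cannot help, because \(u_1\) is fixed before \(z_{j_0}\) is chosen. The branch also cannot be avoided. For \(b=1\) one has \(c=1\), \(\varepsilon_i=(1)\) and \(\bfu_i^\T=\bfs_i^\T\). In the application (proof of Theorem~\ref{thm:intro-general-order}) the corrected targets are pure \(\bfs_\gamma^\T\), so \(u_1=u_2=0\) when \(e\geq3\), and your cubic-coordinate condition holds for every \(z_{j_0}\). Your other remedy appeals to a lower projection outside \(\pi_e(W)\). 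But for a single-bit label the lower projection of \(\bfu_i^\T\) is \(\pi_e(\bfs_i^\T)\in\pi_e(W)\), which is \(0\) for pure targets. You also never say how such a fact would exclude \(\widetilde\bfs_i^\T=h_e(y)\).

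The missing step is to use the full vector equation, not only its cubic coordinate. If \(u_1=0\), the first coordinate of \(\bfu_i^\T+h_e(z_{j_0})=h_e(y)\) gives \(y=z_{j_0}\). The equation then collapses to \(\bfu_i^\T=0\), that is, \(\bfs_i^\T\in\Span_{\F_2}\{h_e(z_k):k<j_0\}\). Project to the direct sum \(\pi_e(W)\oplus V_{j_0-1}\) and use independence of the \(\pi_e(h_e(z_k))\). This forces every \(\varepsilon_{i,k}\) with \(k<j_0\) to vanish, and then \(\bfs_i^\T=0\), a contradiction. So this branch excludes no value at all. Each corrected target costs at most two exclusions, namely the roots of \(u_1z^2+u_1^2z+u_1^3+u_2\) when \(u_1\neq0\). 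Your count, for instance \(2^{2r-1}+2b<2^{2r}<q\), then closes as in the paper.
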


\begin{proof}
Write
\[
 \lambda_e(z)\eqdef\pi_e(h_e(z))=(z,z^3,\ldots,z^{2e-3})^\T.
\]
The map \(\lambda_e:F\to F^{e-1}\) is injective.  Choose distinct nonzero
labels \(\varepsilon_i\in\F_2^c\), and let \(g(i)\) be the largest index of
a nonzero coordinate of \(\varepsilon_i\).  We construct the \(z_j\)
successively.  At stage \(j\), put
\[
 V_{j-1}\eqdef
 \Span_{\F_2}\{\lambda_e(z_1),\ldots,\lambda_e(z_{j-1})\}
\]
and exclude every \(z\) for which
\[
 \lambda_e(z)\in\pi_e(W)+V_{j-1}.
\]
There are at most \(2^{\dim\pi_e(W)+j-1}\) such values.  This choice makes
the \(\lambda_e(z_j)\) independent and maintains
\(V_j\cap\pi_e(W)=0\).

For each \(i\in J\) with \(g(i)=j\), set
\[
 \bfu_i^\T\eqdef
 \bfs_i^\T+\sum_{k<j}\varepsilon_{i,k}h_e(z_k).
\]
This syndrome is nonzero: otherwise it would lie in the intersection of
\(W\) with the span of the preceding correction columns, and projection to
the direct sum \(\pi_e(W)\oplus V_{j-1}\) would make both parts zero.
For any fixed nonzero syndrome
\(\bfu^\T=(u_1,u_2,\ldots,u_e)^\T\), at most two values of \(z\) make
\(\bfu^\T+h_e(z)\) zero or one-column.  Indeed, allowing \(w=0\) to
represent the zero adjusted target, the equality
\(\bfu^\T+h_e(z)=h_e(w)\) gives \(w=u_1+z\).  If \(u_1\neq0\), comparison
of cubic coordinates gives the nonzero quadratic
\[
 u_1z^2+u_1^2z+u_1^3+u_2=0.
\]
If \(u_1=0\), the first coordinate gives \(w=z\), and the full equality
would force \(\bfu^\T=0\).  Thus the additional exclusions at stage \(j\)
number at most twice the number of labels with highest bit \(j\).
Since \(c\leq r-1\), the total number excluded at any stage is at most
\[
 2^{2r-2}+2b<2^{2r}<q.
\]
Because \(j=g(i)\) is the highest nonzero bit of \(\varepsilon_i\), this
stage fixes the final adjusted target indexed by \(i\); hence every corrected
target is admissible at the end of the construction.

At the end, projection shows \(A\cap W=0\), so the adjusted targets remain
independent.  Each corrected lower projection is the sum of an element of
\(\pi_e(W)\) and a nonzero vector of \(V\); hence it lies outside
\(\pi_e(W)\).  Distinct labels, the independence of the
\(\lambda_e(z_j)\), and \(V\cap\pi_e(W)=0\) show that corrected lower
projections cannot coincide.  This also separates them from every
uncorrected lower projection and completes the proof.
\end{proof}

The number of basis elements that need correction is controlled by a
standard matroid-intersection argument.

\begin{lemma}[Power-basis defect]
\label{lem:power-basis-defect}
Let \(U\subseteq F\) be a binary subspace of dimension \(u\geq1\), and put
\(g=\gcd(k_e,q-1)\).  There are linearly independent
\(\gamma_1,\ldots,\gamma_\ell\in U\) whose \(k_e\)-th powers are also
linearly independent, where
\[
 \ell\geq
 \left\lceil\log_2\left(1+\frac{2^u-1}{g}\right)\right\rceil.
\]
\end{lemma}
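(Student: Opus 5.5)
The plan is to apply Edmonds' matroid-intersection theorem to two linear matroids on the same ground set, and to estimate the minimizing partition by counting. Take the ground set \(E\eqdef U\setminus\{0\}\), of size \(2^u-1\). Let \(M_1\) be the binary linear matroid on \(E\) given by the vectors themselves, with rank function \(r_1(A)=\dim_{\F_2}\Span A\). Let \(M_2\) be the binary linear matroid given by \(\gamma\mapsto\gamma^{k_e}\), with rank function \(r_2(A)=\dim_{\F_2}\Span\{\gamma^{k_e}:\gamma\in A\}\). A common independent set of size \(\ell\) is exactly a list \(\gamma_1,\ldots,\gamma_\ell\) as in the statement; distinctness is automatic because the \(\gamma_i\) are independent. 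By the matroid-intersection min-max formula, the maximum size of a common independent set equals
\[
 \min_{A\subseteq E}\bigl(r_1(A)+r_2(E\setminus A)\bigr).
\]
It therefore suffices to show that every \(A\subseteq E\) satisfies \(2^{r_1(A)+r_2(E\setminus A)}\geq1+(2^u-1)/g\). Integrality of the rank sum then yields the stated ceiling.

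Fix \(A\), and put \(b=r_1(A)\) and \(a=r_2(E\setminus A)\). First, \(A\) consists of nonzero vectors of a \(b\)-dimensional binary space, so \(|A|\leq2^b-1\). Second, every \(\gamma\in E\setminus A\) has \(\gamma^{k_e}\) equal to a nonzero element of an \(a\)-dimensional binary space \(S\); it is nonzero because \(\gamma\neq0\). The power map \(x\mapsto x^{k_e}\) on the cyclic group \(F^*\) of order \(q-1\) has kernel of size \(\gcd(k_e,q-1)=g\). Hence each nonzero element of \(S\) has at most \(g\) preimages, and \(|E\setminus A|\leq g(2^a-1)\). Adding the two estimates gives
\[
 2^u-1\leq(2^b-1)+g(2^a-1).
\]

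Finally, I will compare the right-hand side with \(g(2^{a+b}-1)\). The difference is
\[
 g(2^{a+b}-1)-g(2^a-1)-(2^b-1)=g2^a(2^b-1)-(2^b-1)\geq0,
\]
since \(g\geq1\). Hence \(2^u-1\leq g(2^{a+b}-1)\), that is, \(2^{a+b}\geq1+(2^u-1)/g\). Taking the minimum over \(A\) gives \(\ell\geq\log_2\bigl(1+(2^u-1)/g\bigr)\), and integrality gives the ceiling.

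The main step needing care is the bookkeeping in the min-max formula: ranks must be computed on the ground set of nonzero vectors, and the fiber bound must use only nonzero powers. Otherwise the \(-1\) terms are lost and the sharp \(1+(2^u-1)/g\) does not appear. A naive greedy construction is not enough here. Its exclusion count is \((2^j-1)(g+1)\), which gives only a bound with \(g+1\) in place of \(g\). That is why I would route the argument through matroid intersection rather than through greedy extension.
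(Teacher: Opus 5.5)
Your proposal is correct and is essentially the paper's proof: the same two vector matroids on \(U\setminus\{0\}\) (given by \(\gamma\) and \(\gamma^{k_e}\)), the same fiber-counting estimate \(2^u-1\leq(2^b-1)+g(2^a-1)\leq g(2^{a+b}-1)\) for each partition, and Edmonds' min--max formula with integrality to obtain the ceiling. The only difference is that your labels \(a\) and \(b\) are swapped relative to the paper's, and you write out the elementary comparison \(g2^a(2^b-1)-(2^b-1)\geq0\) that the paper leaves implicit.
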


\begin{proof}
On the ground set \(U^*=U\setminus\{0\}\), take the two vector matroids
defined respectively by the vectors \(\gamma\) and \(\gamma^{k_e}\).
For a subset \(A\subseteq U^*\), let \(a\) be its rank in the first matroid
and let \(b\) be the rank of \(U^*\setminus A\) in the second.  The power
map on \(F^*\) has fibers of size \(g\), so
\[
 2^u-1\leq(2^a-1)+g(2^b-1)
 \leq g(2^{a+b}-1).
\]
Edmonds' matroid-intersection theorem~\cite{Edmonds1979MatroidIntersection}
identifies the largest common independent-set size with the minimum of
\(a+b\) over these partitions, and the asserted bound follows.
\end{proof}

\subsection{The common-core criterion and the uniform theorem}

After regularization, we seek one \((e-1)\)-tuple used in every
representation and one completing \(e\)-tuple for each target.

\begin{lemma}[Common-core certificate]
\label{lem:common-core-certificate}
Let \(r\geq2\), and let \(W\subseteq F^e\) have a basis
\(\bfs_1^\T,\ldots,\bfs_r^\T\).  Suppose that there are tuples
\(\mathbf z=(z_1,\ldots,z_{e-1})\in F^{e-1}\) and
\(\mathbf x_i=(x_{i,1},\ldots,x_{i,e})\in F^e\), \(1\leq i\leq r\),
such that
\[
 \bfs_i^\T=
 \sum_{u=1}^{e-1}h_e(z_u)+
 \sum_{\ell=1}^{e}h_e(x_{i,\ell})
 \qquad(1\leq i\leq r).
\]
Then \(\mu_e(W)\leq(r+1)e-1\).
\end{lemma}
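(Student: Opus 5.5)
The plan is to take as candidate support the reduced support of the concatenated tuple, namely
\[
 I\eqdef E(\mathbf z,\mathbf x_1,\ldots,\mathbf x_r)\cup\bigcup_{i}E(\mathbf x_i)\cup E(\mathbf z),
\]
or more simply the set of all nonzero entries of \(\mathbf z,\mathbf x_1,\ldots,\mathbf x_r\). The total number of slots is \((e-1)+re=(r+1)e-1\), so this set has at most \((r+1)e-1\) elements. By the definition of \(\mu_e\), it then suffices to show that every basis target \(\bfs_i^\T\) lies in \(\Span_{\F_2}\{h_e(x):x\in I\}\). Since the span is a subspace, containing the basis means containing all of \(W\).

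For each \(i\), I would apply the reduced-support identity \(\sum_\ell h_e(x_\ell)=\sum_{x\in E(\mathbf x)}h_e(x)\), recorded after the definition of \(E\), to the concatenated tuple \((\mathbf z,\mathbf x_i)\) of length \(2e-1\). This identity uses \(h_e(0)=0\) and the fact that \(h_e(x)+h_e(x)=0\) in characteristic two. It shows that \(\bfs_i^\T\) is the sum of \(h_e(x)\) over \(x\in E(\mathbf z,\mathbf x_i)\). That reduced support is a subset of the nonzero entries of \(\mathbf z\) together with those of \(\mathbf x_i\), and hence a subset of \(I\). Therefore \(\bfs_i^\T\) lies in the span of the columns indexed by \(I\).

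To finish, I would note that \(I\subseteq F^*\) has cardinality at most the number of slots, since repeated values across different tuples and zero entries only shrink it. This gives \(\mu_e(W)\leq|I|\leq(r+1)e-1\).

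There is no real obstacle here. The only point needing care is that a shared core entry \(z_u\) may cancel against an equal \(x_{i,\ell}\), or vanish. Such cancellations remove columns from the representation of a particular target but never add columns outside \(I\), so they only help. The hypothesis \(r\geq2\) is not actually used beyond matching the setting in which the lemma is applied.
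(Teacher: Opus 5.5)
Your proof is correct and is essentially the paper's argument. The paper takes the support \(E(\mathbf z)\cup\bigcup_{i}E(\mathbf x_i)\) and bounds its size by the slot count \((e-1)+re\); you use the set of all nonzero entries, which contains each reduced support \(E(\mathbf z,\mathbf x_i)\), and this is an inessential difference.
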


\begin{proof}
All basis syndromes are supported on
\[
 E(\mathbf z)\cup\bigcup_{i=1}^{r}E(\mathbf x_i).
\]
The reduced-support convention gives
\[
 \left|E(\mathbf z)\cup\bigcup_{i=1}^{r}E(\mathbf x_i)\right|
 \leq(e-1)+re=(r+1)e-1.
\]
\end{proof}

For \(r\geq2\), recall the degree budget \(B_{e,r}\) defined in the
introduction.  Both terms in its definition are nondecreasing in \(r\), so
\(B_{e,r}\leq B_{e,t}\) whenever \(r\leq t\).  We shall also use the
elementary bounds
\[
 B_{e,r}\geq D_e^2,
 \qquad
 B_{e,r}\geq(e!)^r.
\]
Indeed, \(D_e/e!\geq1\); for \(e=2\) the first inequality is immediate,
and for \(e\geq3\) it follows from
\((1+2(e-1)D_e/e!)^{e-1}\geq(D_e/e!)^2\).

\begin{proposition}[Mixed common-core criterion]
\label{prop:higher-order-common-core}
Let \(e,r\geq2\), and let \(W\subseteq F^e\) be an \(r\)-dimensional
binary syndrome space with an admissible basis consisting of
\[
 \bfs_{\gamma_1}^\T,\ldots,\bfs_{\gamma_u}^\T,
 \qquad
 \bft_1^\T,\ldots,\bft_v^\T,
 \qquad u+v=r,
\]
where the pure parameters
\(\gamma_1,\ldots,\gamma_u\) are linearly independent and either
\(2\leq e\leq6\) or
\(\gamma_1^{k_e},\ldots,\gamma_u^{k_e}\) are linearly independent.  The
lower projections
\[
 \pi_e(\bft_1^\T),\ldots,\pi_e(\bft_v^\T)
\]
are nonzero and pairwise distinct.  Empty parts of the basis are allowed.  If
\begin{equation}
\label{eq:higher-order-threshold}
 q>\max\{2eB_{e,r}^2,\,2B_{e,r}^4\},
\end{equation}
then
\[
 \mu_e(W)\leq(r+1)e-1.
\]
\end{proposition}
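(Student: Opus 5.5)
By Lemma~\ref{lem:common-core-certificate}, it suffices to produce one common tuple \(\mathbf z\in F^{e-1}\) and, for each basis target, a completing tuple \(\mathbf x_i\in F^e\) such that
\[
 \bfs_i^\T=\sum_u h_e(z_u)+\sum_\ell h_e(x_{i,\ell}).
\]
I would set this up as a point-counting problem on an incidence variety \(X\). The coordinates of \(X\) are the core \(\mathbf z\) together with the \(r\) completing tuples. The equations say that, for each \(i\), the odd power sums of \(\mathbf x_i\) equal the residual moments \(\bfs_i^\T+\sum_u h_e(z_u)\). For a fixed core, each target imposes \(e\) equations on \(e\) unknowns. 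The plan has three steps.
\begin{enumerate}
\item Show that \(X\) has an absolutely irreducible component of the expected dimension \(e-1\) that dominates the core space.
\item Show that this component is defined over \(F\).
\item Bound its degree by \(B_{e,r}\) and apply a Lang--Weil/Cafure--Matera type estimate.
\end{enumerate}
The estimate should yield an \(F\)-rational point off the collision and degenerate strata once \(q>\max\{2eB_{e,r}^2,2B_{e,r}^4\}\). The two branches of \(B_{e,r}\) come from two degree bounds. The term \(D_e^r\) is plain B\'ezout, using degrees \(1,3,\ldots,2e-1\) per target. The term \((e!)^r(1+\ldots)^{e-1}\) comes from a multihomogeneous count: each target fibre over the core has degree \(e!\) in its own block, and the core block has dimension \(e-1\).

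The fibre over a fixed core is a product of \(r\) one-target completion varieties. I would write each of them through Newton identities, as in the quintic argument of Proposition~\ref{r2:prop:four-upper-twelve}. That expresses the completing locators as roots of a pencil of completion polynomials whose coefficients are polynomial in the core's power sums. The fibre is then governed by the Galois group of the product of \(r\) generic splitting fields over the function field of the core space. The dominant component is absolutely irreducible, and has the expected point count, exactly when this group is the full product \(S_e^r\). It is also the setting in which rational points with fully split completions are counted.

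For each single factor, I would show the group is \(S_e\) by exhibiting a transposition (a simple tangency of the discriminant) together with double transitivity. For the product, I would use Goursat's lemma. It suffices to show that no two factors share a nontrivial quotient, and since \(S_e\) has only the sign quotient for \(e\geq5\), this means the quadratic subextensions must be pairwise independent. In characteristic two these quadratic subextensions are Artin--Schreier (Berlekamp) classes. So the task is to compute the Berlekamp invariant of each target's completion polynomial as a function of the core. Then one shows that any nontrivial \(\F_2\)-combination of these invariants is not of the form \(y^2+y\) in the core function field.

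For targets with nonzero, pairwise distinct lower projections, the leading polar terms of the invariants depend on \(\pi_e(\bft_i)\). Distinctness then yields independence, which is how Lemma~\ref{lem:regularize-syndrome-space} was designed to be used. For pure targets \(\bfs_{\gamma_i}\), the leading term behaves like \(\gamma_i^{k_e}\) times a fixed core expression, so linear independence of the \(\gamma_i^{k_e}\) suffices. This matches the obstruction of Lemma~\ref{r2:lem:pure-highest-obstruction}. Mixed combinations would be separated by comparing pole orders.

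The main obstacle is the case \(2\leq e\leq6\) without power independence. There I would carry the full Artin--Schreier class, including the lower odd polar coefficients, and check by explicit low-degree computation that these coefficients are \(\F_2\)-linear in \(\gamma\) itself rather than in \(\gamma^{k_e}\). Independence of the \(\gamma_i\) would then suffice.

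After monodromy, the remaining work is bookkeeping. I would bound the degrees of the collision and degenerate loci, where \(\alpha\kappa=0\) analogues hold or where the core or completion slots coincide. This proceeds as in the product-counting step of Proposition~\ref{r2:prop:four-upper-twelve}, giving a count of order \(eB^2\) excluded points against a main term \(q^{e-1}\). Comparing with the error \(\sim B^2q^{e-3/2}\) produces the stated threshold. A rational point on the good locus then gives the certificate, hence \(\mu_e(W)\leq(r+1)e-1\).
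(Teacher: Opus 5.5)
Your overall architecture matches the paper's. You reduce via Lemma~\ref{lem:common-core-certificate} to an \(F\)-point of the simultaneous incidence variety and obtain a unique geometrically integral dominating component from \(\Gal=S_e^r\). You get that group from independence of the Artin--Schreier sign classes, using the \(\gamma^{k_e}\) leading coefficient for pure targets and full Berlekamp classes for \(2\leq e\leq6\). You finish with the budget \(B_{e,r}\) and Cafure--Matera.

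The genuine gap is the non-pure step. ``The leading polar terms depend on \(\pi_e(\bft_i)\), so distinctness yields independence'' does not say at which valuation the class of \(\bft_i\) ramifies. Moreover, the hypothesis gives only \emph{pairwise distinct} nonzero lower projections, not independent ones. This matters in practice. Lemma~\ref{lem:regularize-syndrome-space} with \(b=3\), \(c=2\) forces the labels \((1,0),(0,1),(1,1)\). It therefore produces corrected targets with lower projections \(\lambda_e(z_1)\), \(\lambda_e(z_2)\), \(\lambda_e(z_1)+\lambda_e(z_2)\). Pairwise distinctness alone gives no reason why the polar coefficients of such classes at a shared valuation cannot sum to zero.

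What is needed is a \emph{private} ramification divisor for each lower projection. This also disposes of mixed pure/non-pure combinations, so pole orders play no role there. In the paper the private divisor is the pullback under \(\mathbf m_e\) of the translated universal Hankel divisor \(V\bigl(\Delta_e(\pi_e(\bft)+\mathbf p)\bigr)\). The argument has four parts.
Since \(\det\mathscr H_e(\bft,\mathbf z)\) involves only the lower residual moments, the completion cover of \(\bft\) is \'etale off this divisor.
The polynomial \(\Delta_e\) is absolutely irreducible with trivial additive translation stabilizer (Lemma~\ref{r2:lem:universal-hankel}), so distinct lower projections give distinct divisors.
The two-escaping-root chart shows that the class has the odd reduced pole \(\eta_{\bft}^{2e-3}D_R^{2e-1}\varpi^{-(2e-1)}\) there.
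The defect satisfies \(\eta_{\bft}\neq0\) by the multiplicity-stratum dimension count, which is where admissibility is used.
A combination containing some \(\bft_j\) then has an odd pole at \(v_{\bft_j}\), where every other class is unramified, including the pure ones (lower projection \(0\)). The same local computation with \(\eta_{\bfs_\gamma}=\gamma\) is what produces your \(\gamma^{k_e}\) coefficient for pure targets.

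There are also several secondary points.

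For \(r\geq3\), pairwise independence of the sign fields, as in your Goursat formulation, is not enough. You need joint \(\F_2\)-independence in \(K/\wp(K)\), as you state later. For \(e=4\) the \(S_3\) quotient of \(S_4\) must also be addressed. Lemma~\ref{lem:higher-field-compositum} handles both.

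For a single factor, your transposition-plus-double-transitivity argument is not justified by anything you cite. The paper gets connectedness from the geometric irreducibility of \(\mathcal V_{\bfs}\) (Proposition~\ref{r2:prop:incidence-top-component}). Hankel reconstruction then shows every fibre over \(U_{\bfs}\) is exactly the \(e!\) orderings. So the cover is a connected \(S_e\)-torsor, with no further group theory needed.

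For \(2\leq e\leq6\), the lower polar coefficients are not in general \(\F_2\)-linear in \(\gamma\); they mix power sums such as \(S_3\), \(S_5\), \(S_6\). The argument is a cascade of cases. An odd pole is finally forced by a power sum \(S_{2^k}=\delta^{2^k}\neq0\) whose coefficient function is not proportional to the others.

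Finally, your closing bookkeeping is unnecessary. Lemma~\ref{lem:common-core-certificate} accepts zero and repeated locators, so any \(F\)-point of the closure of the dominating component suffices. Cafure--Matera's corollary supplies one under exactly the stated threshold. Your excluded-point count is also off: a degenerate locus of dimension \(e-2\) has on the order of \(q^{e-2}\) points, not \(eB^2\).
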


The proof is deferred to Section~\ref{sec:higher-framework}.

We shall also use the known stable ordinary radius.  Kavut and Tutdere
proved that
\[
 R_1(C_{e,m})=2e-1
 \quad\text{when}\quad
 m>\bigl(2e-2-v_2(e)\bigr)\lceil\log_2(e+1)\rceil,
\]
where \(v_2(e)\) is the binary valuation
\cite[Theorem~3]{KavutTutdere2019}, with their error parameter identified with
the present \(e\).  The field-size hypothesis below is
stronger than this condition.  Indeed, for \(e\geq3\), at least \(e/2\) of
the factors in \(D_e\) are at least \(e\), so
\[
 8\log_2D_e\geq4e\log_2e
 >2e(\log_2e+1)
 >\bigl(2e-2-v_2(e)\bigr)\lceil\log_2(e+1)\rceil;
\]
the case \(e=2\) is immediate.

\begin{proof}[Proof of Theorem~\ref{thm:intro-general-order}]
Since \(D_e\geq2e-1\) and \(D_e\geq3\), the field-size hypothesis gives
\[
 2^{\lceil m/2\rceil}\geq\sqrt q>B_{e,t}^2\geq D_e\geq2e-1,
 \qquad
 2^m>2(e!)^{4t}>2^t.
\]
Thus \(m\geq t\) and Condition~\eqref{eq:rank-condition} holds; hence
Proposition~\ref{prop:lower-bound} gives the lower bound.
The field-size hypothesis also gives \(q-1\geq et-1\), so under the additional
counting hypothesis
Proposition~\ref{prop:sphere-lower-bound} gives the asserted \(et\) lower
bound.

For the upper bound, let \(W\subseteq F^e\) have dimension \(r\leq t\).
The case \(r=0\) is immediate, and the cited ordinary-radius result handles
\(r=1\).  Suppose that \(r\geq2\).  Choose a maximal linearly independent
family of one-column syndromes in \(W\), let \(P\) be its span, and choose a
complement \(W_0\), so
\[
 W=P\oplus W_0.
\]
By maximality, \(W_0\) contains no nonzero one-column syndrome, and
\[
 \mu_e(W)\leq\dim P+\mu_e(W_0).
\]
If \(W_0=0\), then \(\mu_e(W)\leq r\).  If \(\dim W_0=1\), the
ordinary-radius formula gives
\[
 \mu_e(W)\leq\dim P+2e-1\leq(r+1)e-1.
\]
Thus both cases satisfy the desired estimate.  Assume that
\(s\eqdef\dim W_0\geq2\).

Let \(U=\ker(\pi_e|_{W_0})\), identified with a binary subspace of \(F\)
through \(\gamma\mapsto\bfs_\gamma^\T\), and put \(u=\dim U\).  If
\(2\leq e\leq6\), choose any basis of \(U\) and complete it by lifting a
basis of \(\pi_e(W_0)\).  Proposition~\ref{prop:higher-order-common-core}
 applies because \(B_{e,s}\leq B_{e,t}\), and gives
\[
 \mu_e(W)\leq\dim P+(s+1)e-1\leq(r+1)e-1.
\]
This proves the sharper low-error assertion.  We may therefore assume
\(e\geq7\).

If
\(u>0\), Lemma~\ref{lem:power-basis-defect} allows a basis of \(U\) to be
chosen so that all but at most
\[
 b_u\eqdef u-
 \left\lceil\log_2\left(1+\frac{2^u-1}{g_{e,m}}\right)\right\rceil
\]
of its elements have linearly independent \(k_e\)-th powers; put \(b_0=0\).
The function
\(b_u\) is nondecreasing in \(u\): the quantity inside the logarithm for
\(u+1\) is at most twice the corresponding quantity for \(u\).  Hence
\[
 b_u\leq t-\ell_{e,m,t}.
\]
Complete this pure basis by lifting a basis of \(\pi_e(W_0)\).  The lifted
targets have nonzero pairwise distinct lower projections, and every chosen
target is admissible because \(W_0\) has no one-column syndrome.

If \(b_u>0\), apply Lemma~\ref{lem:regularize-syndrome-space} only to these
\(b_u\) exceptional pure targets.  Its field-size hypothesis holds because
\[
 q>2B_{e,t}^4\geq2(e!)^{4t}>2^{2t}\geq2^{2s}.
\]
The corrected exceptional targets have nonzero pairwise distinct lower
projections outside \(\pi_e(W_0)\).  Thus the uncorrected pure targets have
independent \(k_e\)-th powers, while every remaining target has a nonzero
lower projection distinct from all the others.  Proposition
 ~\ref{prop:higher-order-common-core}, with \(B_{e,s}\leq B_{e,t}\),
supports the adjusted basis on at most
\((s+1)e-1\) columns.  Adding back the shared correction columns costs at
most
\[
 \left\lceil\log_2(b_u+1)\right\rceil\leq c_{e,m,t}.
\]
The same conclusion holds with no correction when \(b_u=0\).  Therefore,
writing \(a=\dim P=r-s\),
\[
 \mu_e(W)\leq a+(s+1)e-1+c_{e,m,t}
 \leq(r+1)e-1+c_{e,m,t}.
\]
Equation~\eqref{eq:Rt-subspaces} proves the upper bound.
\end{proof}

\begin{remark}[The second- and third-order specializations]
At \(t=2\), the degree budget is \(D_e^2\), the field-size condition is
exactly the universal condition in the second-order theorem above, and
Theorem~\ref{thm:intro-general-order} gives
\[
 3e-1\leq R_2(C_{e,m})\leq3e.
\]
The dedicated second-order argument additionally supplies the sharper
root-free criterion expressed by \(G_e\).  At \(t=3\), the general theorem gives
\[
 R_3(C_{e,m})\leq4e-1+c_{e,m,3},
 \qquad c_{e,m,3}\leq2.
\]
For \(2\leq e\leq6\), the low-error refinement removes \(c_{e,m,3}\) altogether
and gives the numerical bounds in Theorem~\ref{thm:intro-main}.
\end{remark}

\subsection{Third-order thresholds}
\label{subsec:third-order-thresholds}

\begin{lemma}
\label{lem:numerical-thresholds}
Condition~\eqref{eq:higher-order-threshold} with \(r=3\) holds for
\(e=3,4,5,6\) whenever, respectively,
\[
 m\geq48,\qquad m\geq82,\qquad m\geq120,
 \qquad m\geq162.
\]
\end{lemma}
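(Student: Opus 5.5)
The plan is a direct numerical verification. I first identify which term realizes the minimum in \(B_{e,3}\), then observe that \(2B_{e,3}^4\) dominates the maximum, and finally compute \(\log_2(2B_{e,3}^4)\) for each \(e\). Both sides of Condition~\eqref{eq:higher-order-threshold} are fixed numbers once \(e\) and \(r=3\) are fixed, and \(q=2^m\) is increasing in \(m\). So it suffices to find the least \(m\) with \(2^m>\max\{2eB_{e,3}^2,2B_{e,3}^4\}\) and to note that every larger \(m\) also works.

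\emph{Step 1: the minimum in \(B_{e,3}\) is the B\'ezout term.} I would compare \(D_e^3\) with the refined term \((e!)^3(1+3(e-1)D_e/e!)^{2}\) and its analogues for each \(e\):
\begin{itemize}
\item For \(e=3\): \(D_3=15\) and \(D_3^3=3375\), while the refined term is \(6^3\cdot 16^2=55296\).
\item For \(e=4\): \(D_4=105\) and \(D_4^3=1157625\), while the refined term is \(24^3\cdot(40.375)^3\approx9\cdot10^8\).
\item For \(e=5\): \(D_5^3=945^3\approx8.4\cdot10^8\), while the refined term is \(120^3\cdot(95.5)^4\approx1.4\cdot10^{14}\).
\item For \(e=6\): \(D_6=10395\), and a similar comparison gives \(D_6^3\) as the minimum.
\end{itemize}
Hence \(B_{e,3}=D_e^3\) in all four cases. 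This agrees with the introduction's remark that the refined term first helps only at orders \(t=6,9,10\) for \(e=2,3,4\).

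\emph{Step 2: the maximum is \(2B_{e,3}^4\).} Since \(B_{e,3}^2\geq D_e^2\geq e\), we have \(2B_{e,3}^4\geq2eB_{e,3}^2\). The condition therefore reduces to
\[
 m>1+12\log_2D_e .
\]

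\emph{Step 3: the logarithms.} Using \(\log_2 15\approx3.907\), \(\log_2105\approx6.714\), \(\log_2945\approx9.884\) and \(\log_210395\approx13.344\), the right-hand side evaluates to approximately
\[
 47.88,\qquad 81.57,\qquad 119.6,\qquad 161.1 .
\]
The least admissible integers are therefore \(48,82,120,162\), as stated.

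The only point requiring care is to certify these logarithms rigorously rather than as decimal approximations. For this I would use explicit integer comparisons. For example, \(2^{47}<2\cdot3375^4<2^{48}\) follows from \(3375^4=15^{12}\) and \(15^{12}<2^{47}\), which can be checked exactly by integer arithmetic, and the analogous integer bounds handle the other three cases. Since only the upper comparison is needed for sufficiency, the verification is routine, and I expect no genuine obstacle.
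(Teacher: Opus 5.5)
Your proposal is correct and follows essentially the same route as the paper: identify \(B_{e,3}=D_e^3\), observe that \(2B_{e,3}^4\) is the active term in the maximum, and bracket it between consecutive powers of two (for example, \(2^{47}<2\cdot3375^4<2^{48}\)). Your explicit check that the B\'ezout term realizes the minimum in \(B_{e,3}\) adds a detail the paper leaves implicit, since it simply records the four values of \(B_{e,3}\).
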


\begin{proof}
For \(e=3\), \(B_{3,3}=3375\), and
\[
 2^{47}<2(3375)^4<2^{48}.
\]
For \(e=4\), \(B_{4,3}=1{,}157{,}625\), and
\[
 2^{81}<2(1{,}157{,}625)^4<2^{82}.
\]
For \(e=5\), \(B_{5,3}=843{,}908{,}625\), and
\[
 2^{119}<2(843{,}908{,}625)^4<2^{120}.
\]
For \(e=6\), \(B_{6,3}=1{,}123{,}242{,}379{,}875\), and
\[
 2^{161}<2(1{,}123{,}242{,}379{,}875)^4<2^{162}.
\]
In all four cases the fourth-power term is the maximum.
\end{proof}

Together with Theorem~\ref{thm:intro-general-order}, these four calculations
give the third-order bounds stated in Theorem~\ref{thm:intro-main}.  The next
three sections record those consequences in the original BCH language before
Section~\ref{sec:higher-framework} proves the one deferred common-core input.

\section{Three errors: the stable interval and pure planes}
\label{sec:e3}

Recall that
\[
 h_3(x)=(x,x^3,x^5)^\T,\qquad
  \pi_3(s_1,s_2,s_3)^\T=(s_1,s_2)^\T,
\qquad D_3=15.
\]
\begin{proof}[Proof of the \(e=3\) assertion in
Theorem~\ref{thm:intro-main}]
For \(e=3\) and \(r=t\),
Condition~\eqref{eq:higher-order-threshold} is precisely the field-size
hypothesis of Theorem~\ref{thm:intro-general-order}.
Its low-error clause therefore gives
\[
 R_t(C_{3,m})\leq3t+2.
\]
At \(t=3\), Proposition~\ref{prop:lower-bound} gives the lower bound 10,
and Lemma~\ref{lem:numerical-thresholds} shows that the point criterion holds
for every \(m\geq48\).  This proves the \(e=3\) assertion of
Theorem~\ref{thm:intro-main}.
\end{proof}

When \(m\) is even, a pure binary plane may be a scaled copy of \(\F_4\),
so its three nonzero elements have the same cube.  The calculation in
Lemma~\ref{lem:pure-sign-independence-small-e} shows why this leading-term
obstruction does not require an extra support column.

The pure case admits a sharper construction than the uniform common-core
bound.

\begin{proposition}[Exact pure planes for three errors]
\label{prop:pure-e3-exact-ten}
Let \(m\geq18\), let \(\Gamma\subseteq F\) be a three-dimensional binary
subspace, and put
\[
 W_\Gamma\eqdef\{\bfs_\gamma^\T:\gamma\in\Gamma\}\subseteq F^3.
\]
Then
\[
 \mu_3(W_\Gamma)=10.
\]
\end{proposition}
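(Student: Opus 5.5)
The proof has two halves: a lower bound of \(10\) and a construction with \(10\) columns.

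For the lower bound \(\mu_3(W_\Gamma)\geq10\), I would not use Proposition~\ref{prop:lower-bound}, since it only bounds the worst-case \(W\). Instead I would argue directly with the supercode principle applied to this specific \(W_\Gamma\). Suppose \(I\) is a support with \(|I|=n\). Then there are three independent binary vectors supported on \(I\) with syndromes \(\bfs_{\gamma}\), \(\gamma\) running over a basis of \(\Gamma\). Each nonzero combination of them has zero first and second syndrome coordinates, so it is a nonzero codeword of \(C_{2,m}\). Their span is therefore a \(3\)-dimensional subcode of \(C_{2,m}\) supported on \(I\), which gives \(n\geq d_3(C_{2,m})\). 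The Griesmer bound with minimum distance \(5\) yields \(5+3+2=10\); alternatively one can quote \(d_3(C_{2,m})=10\) from~\cite{VanDerGeerVanDerVlugt1994}. The nonzero vectors really are nonzero, because distinct \(\gamma\neq0\) give nonzero syndromes.

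For the upper bound \(\mu_3(W_\Gamma)\leq10\), the generic count is \((e-1)+3e=11\), so one column must be saved. The extremal configuration suggests what to aim for: a support of size \(10\) whose seven nonzero subsets realizing \(\Gamma\) have weights matching a Griesmer-optimal \([10,3,5]\) code. Such a code is the doubled simplex code with one column-pair structure: every nonzero word has weight \(5\) or \(6\), and the seven supports are unions of blocks attached to the points of \(\mathrm{PG}(2,2)\). Concretely, I would partition \(10\) locators into blocks \(B_P\), indexed by points \(P\) of the Fano plane (with multiplicities summing to \(10\), e.g.\ three points doubled and four single). Then, for each nonzero linear functional \(\phi\) on \(\F_2^3\), the union of blocks with \(\phi(P)=1\) must have odd power sums \((0,0,\gamma_\phi)\), where \(\gamma\) is linear in \(\phi\). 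Equivalently, it suffices to impose conditions on the block moment vectors \(M_P=\sum_{x\in B_P}h_3(x)\). The first two coordinates of every \(M_P\) must vanish in the combination sense, and the third coordinates must realize a prescribed linear map onto \(\Gamma\). Since Griesmer's sum is attained, this combinatorial template is forced up to equivalence, and what remains is a system of polynomial equations in \(10\) unknowns with \(3\times 3\) target equations.

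This is the \emph{separate ten-locator curve} that the introduction places in an appendix. After imposing the linear-in-\(\Gamma\) constraints, I expect a positive-dimensional variety (roughly \(10\) unknowns minus \(9\) equations gives a curve) defined over \(F\) and depending on the parameters of \(\Gamma\). I would parametrize it, using the translation invariance \(\mathcal T_b\) and scaling to normalize one basis element of \(\Gamma\) to \(1\), and then show it is absolutely irreducible. The bound \(m\geq18\) would then come from Hasse--Weil: \(q+1-2g\sqrt q\) minus the points on the collision strata (coinciding or zero locators) must be positive. For example, genus around \(g\leq 60\) with a few hundred bad points would give a threshold near \(2^{18}\).

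The main obstacle is this step: proving absolute irreducibility uniformly over all \(\Gamma\), including the degenerate ones where \(\Gamma\) contains a scaled \(\F_4\), and bounding the genus and the degenerate points well enough to reach \(m\geq18\). I would first try to reduce the curve to a fibre product of two Artin--Schreier-type covers whose classes are nontrivial precisely because the \(\gamma\) are independent. If that works, irreducibility follows from the independence of the Artin--Schreier classes, in the spirit of Lemma~\ref{lem:pure-sign-independence-small-e}.
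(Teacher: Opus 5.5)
Your lower bound is correct and is the paper's argument: a binary section of the selected-column syndrome map gives a $3$-dimensional subcode of $C_{2,m}$ supported on $I$, and Griesmer with $d=5$ gives $5+3+2=10$.

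\textbf{The upper bound is not proved.} After the combinatorial set-up you only announce that the resulting variety should be an absolutely irreducible curve of small genus, and the threshold $m\geq18$ is back-solved from a guessed $g\leq60$. Several supporting claims are also false:
\begin{itemize}
\item Griesmer-optimal $[10,3,5]$ templates are not unique up to equivalence. Your example (three doubled points, which must be non-collinear, and four single points) has weights $5,5,5,6,6,6,7$. The paper instead uses labels $(1,1,0),(0,1,1),(1,0,0),(0,0,1)$ on pairs and $(1,0,1),(0,1,0)$ on single slots, with $(1,1,1)$ unused, giving weights $5,5,5,5,6,6,8$. So ``every nonzero word has weight $5$ or $6$'' also fails.
\item Translating by $\mathcal T_b$ destroys purity, because each weight-five row acquires first moment $b$.
\item Scaling multiplies $\gamma$ by $c^5$, which cannot reach every value when $4\mid m$.
\end{itemize}
Most importantly, the choice of template is where the work lies. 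In your triangle template, take single locators $x,y,z,w$ labelled $(1,1,0),(1,0,1),(0,1,1),(1,1,1)$. The identity $p_1=p_3=0\Rightarrow p_5=e_5$ turns the first row into $\gamma_1=xyw(x+y)(y+w)(w+x)/(x+y+w)$, and the other rows similarly. The base is then a curve in $\mathbb P^3$ cut out by two quartic ratio relations, and you give no parametrization or irreducibility argument for it. Each doubled block also contributes its own Artin--Schreier splitting condition, on top of a degree-$5$ Kummer step from homogeneity. So you would not get a fibre product of just two covers.

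\textbf{The missing idea is the paper's template, chosen so that the constraint curve is rational.} Its rows are $\mathcal B=\{u_1,u_2,v_1,v_2,R\}$ together with rows carrying the private pairs $A_1,A_2$ and $C_1,C_2$. The fourth weight-five word $\{A_1,A_2,R,C_1,C_2\}$ gives $\gamma_1\gamma_2\gamma_3/\gamma_4=(PQw)^2$, which forces $\lambda=\sqrt{s}$. Then $U=xR$, $V=(1+x)R$, $w=\lambda R$, $P=pR^2$, $Q=qR^2$, with $p,q$ explicit rational functions of one parameter $x$. The fifth root is adjoined by the Kummer equation $R^5=\gamma_2/(pq)$ (genus $8$) rather than normalized away. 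The four pairs split via four Artin--Schreier classes over $\overline F(x)$. These classes are independent because, after reduction, each has a private simple pole at $0,1,\lambda,\lambda+1$ respectively; at most one of these residues can vanish, since otherwise $b=a$; and all four share a simple pole at $\xi_D$. This yields a geometrically integral curve of genus at most $254$ with at most $96$ excluded points. Hasse--Weil then gives $2^{18}+1-508\cdot2^{9}=2049>96$, which is exactly where $m\geq18$ comes from.
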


\begin{proof}
 Suppose that a set \(I\) of \(n\) BCH columns supports \(W_\Gamma\), and
 define the selected-column syndrome map
 \[
  \sigma_I:\F_2^I\longrightarrow F^3,
  \qquad
  (\varepsilon_x)_{x\in I}\longmapsto
  \sum_{x\in I}\varepsilon_xh_3(x).
 \]
 Since \(W_\Gamma\subseteq\operatorname{im}\sigma_I\), choose a binary linear
 section \(\rho:W_\Gamma\to\F_2^I\) with
 \(\sigma_I\circ\rho=\operatorname{id}_{W_\Gamma}\).  Its image
 \(\rho(W_\Gamma)\) is a binary \([n,3]\) subcode all of whose words have zero
 first and third moments.  After zero-extension outside \(I\), it is a
 length-\(q-1\) subcode of \(C_{2,m}\).  Every nonzero word therefore has
weight at least \(5\), by the designed-distance bound for the primitive
double-error-correcting BCH code.  The binary Griesmer bound
\cite{Griesmer1960} gives
\[
 n\geq5+\left\lceil\frac52\right\rceil
       +\left\lceil\frac54\right\rceil=10.
\]
The reverse inequality is the ten-locator construction in
Proposition~\ref{prop:pure-e3-ten-locator}.
\end{proof}

The construction uses a different ten-column incidence pattern rather than
the uniform two-locator core.  Its proof is deferred to
Appendix~\ref{app:pure-e3-exact-ten} so that the main argument remains in
syndrome-support form.

\begin{remark}[Location of a possible value eleven]
\label{rem:e3-deep-obstruction}
Let \(m\geq5\), and let \(W\subseteq F^3\) be three-dimensional.  If \(W\)
contains a syndrome supported by one BCH column, then
\(\mu_3(W)\leq9\); if it contains one supported by two columns, then
\(\mu_3(W)\leq10\).  Indeed, split off that syndrome and support a binary
complement of dimension two using
\(R_2(C_{3,m})=8\)~\cite{EssayagZabokritskiy2026BCH3}.  Hence a space that
attains the possible value \(11\) in Theorem~\ref{thm:intro-main} must be
nonpure and every one of its nonzero syndromes must have ordinary coset
weight at least three, meaning that it requires at least three BCH columns.

This restriction is nonvacuous.  Suppose that \(m\geq9\) is odd and that
\(U\subseteq F\) is any three-dimensional binary subspace.  For
\[
 W_U\eqdef\{(0,u,0)^\T:u\in U\},
\]
the restriction of \(\pi_3\) to \(W_U\) has rank three, and every nonzero
 syndrome has ordinary coset weight exactly five.  Indeed, weight one is
 immediately impossible, while the published zero-first-coordinate argument excludes
 weights two and four
 \cite[proof of Lemma~1]{CharpinHellesethZinoviev2006}.  A weight-three
 representation of \((0,u,0)^\T\) by distinct locators \(x,y,z\) would have
 \(x+y+z=0\), \(xyz=u\), and \(e_2(x,y,z)u=0\), where
 \(e_2(x,y,z)=xy+xz+yz\), by the third and fifth
Newton identities.  Hence \(x,y,z\) would be the three roots of
\(t^3+u\).  But cubing permutes \(F^*\) when \(m\) is odd.  The stable
ordinary-radius formula recalled above, \(R_1(C_{3,m})=5\), supplies the
matching upper bound.
\end{remark}

\section{Four errors: the stable interval}
\label{sec:e4}

Recall that
\[
 h_4(x)=(x,x^3,x^5,x^7)^\T,\qquad
 \pi_4(s_1,s_2,s_3,s_4)^\T=(s_1,s_2,s_3)^\T,
\qquad D_4=105.
\]
\begin{proof}[Proof of the \(e=4\) assertion in
Theorem~\ref{thm:intro-main}]
The low-error clause of Theorem~\ref{thm:intro-general-order} gives
\[
 R_t(C_{4,m})\leq4t+3
\]
under its field-size hypothesis.  At \(t=3\),
Proposition~\ref{prop:lower-bound} gives 13, and
Lemma~\ref{lem:numerical-thresholds} gives the range \(m\geq82\).  This proves
the \(e=4\) assertion of Theorem~\ref{thm:intro-main}.
\end{proof}

When \(4\mid m\), a pure binary three-space contained in
\(\gamma\F_{16}\) has fifth powers contained in \(\gamma^5\F_4\).  Thus the
fifth-power leading coefficients cannot be independent.  As in the cubic
case, Lemma~\ref{lem:pure-sign-independence-small-e} shows that the lower
polar terms remove this obstruction from the simultaneous-completion
criterion.

\section{Five and six errors}
\label{sec:e56}

The exact sign calculations also remove the correction term for five and
six errors.  Thus Theorem~\ref{thm:intro-general-order} gives
\[
 R_t(C_{5,m})\leq5t+4,
 \qquad
 R_t(C_{6,m})\leq6t+5
\]
under the corresponding field-size hypotheses.  At \(t=3\),
Proposition~\ref{prop:lower-bound} and
Lemma~\ref{lem:numerical-thresholds} give
\[
 17\leq R_3(C_{5,m})\leq19\quad(m\geq120),
 \qquad
 20\leq R_3(C_{6,m})\leq23\quad(m\geq162).
\]
This proves the remaining assertions of Theorem~\ref{thm:intro-main}.

\section{The second-order common-core engine}
\label{r2:sec:second-order-engine}

This section records, in self-contained form, the two-target engine used
later in the higher-order construction.  Its output is stated first:
under an explicit point-count threshold, two suitable syndrome targets admit
representations with the same \(e-1\) locators and with \(e\) further
locators for each target.  The proof then descends through the one-target
incidence variety, the completion cover, the sign-field comparison, and the
finite boundary cases before returning to BCH support.

\paragraph{Recall.}
Throughout, \(F=\F_q\) has characteristic two, \(e\geq2\),
\[
 h_e(u)=(u,u^3,\ldots,u^{2e-1})^\T,\qquad h_e(0)=\bm 0^\T,
\]
and \(\pi_e:F^e\to F^{e-1}\) deletes the highest syndrome coordinate.
The quantity \(\mu_e(W)\) is the least size of a set of nonzero BCH locators
whose binary span contains \(W\).  A nonzero target is \emph{one-column} if
it equals \(h_e(u)\) for some \(u\in F^*\).  For
\(\mathbf x=(x_1,\ldots,x_r)\), recall also that \(E(\mathbf x)\) is its
reduced binary support; hence zeros and even repetitions only decrease the
resulting support.

The second-order common-core criterion was stated in
Proposition~\ref{r2:prop:common-core-plane}, and the elementary basis choice
needed at the final return was proved in
Lemma~\ref{r2:lem:projection-basis}.  Once the common locators are found,
Lemma~\ref{lem:common-core-certificate} supplies the support count.  We now
prove the geometric input behind the criterion.

\subsection{The one-target incidence variety}
\label{r2:subsec:incidence}

We begin with one of the two targets.  For a prescribed non-one-column
syndrome \(\bfs^\T\), the question is which choices of \(e-1\) common
locators admit \(e\) completing locators.  The incidence variety below
parametrizes exactly those core--completion pairs.  Its irreducibility will
give the connected completion family needed in the next stage.

Fix a nonzero syndrome
\[
 \bfs^\T=(\sigma_1,\sigma_3,\ldots,\sigma_{2e-1})^\T\in F^e
\]
which is not a one-column syndrome, equivalently
\(\bfs^\T\neq h_e(u)\) for every \(u\in F^*\).  We use
\(e-1\) ordered common locators
\[
        \mathbf{z}\eqdef(z_1,\ldots,z_{e-1})
\]
and \(e\) ordered completing locators
\[
        \mathbf{x}\eqdef(x_1,\ldots,x_e).
\]
For any field \(k\) and every \(r\geq0\), the notation
\(\mathbb A_k^r\) denotes affine \(r\)-space over \(k\); its \(k\)-rational
points form \(k^r\).
The corresponding incidence variety is
\[
 \mathcal V_{\bfs}\eqdef
 \left\{
 (\mathbf z,\mathbf x)\in\mathbb A_F^{\,2e-1}:
 \sum_{i=1}^{e-1}h_e(z_i)
 +\sum_{j=1}^{e}h_e(x_j)
 =\bfs^\T
 \right\}.
\]
It is defined by the \(e\) odd-moment equations of degrees
\[
        1,3,\ldots,2e-1.
\]

The geometric input used below comes from the following published
irreducibility result for long BCH incidence varieties.

\begin{proposition}[Long-BCH incidence variety]
\label{r2:prop:incidence-top-component}
Let \(\bfs^\T\in F^e\) be nonzero and not a one-column syndrome.  Then
\(\mathcal V_{\bfs}\) is geometrically irreducible of dimension \(e-1\).
In particular, its unique geometric component is defined over \(F\).
\end{proposition}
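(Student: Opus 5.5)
We plan to treat \(\mathcal V_{\bfs}\) as the fibre over \(\bfs^\T\) of the moment map on \(2e-1\) ordered slots,
\[
 \Phi:\mathbb A^{2e-1}\to\mathbb A^{e},\qquad
 (\mathbf z,\mathbf x)\mapsto\sum_{i}h_e(z_i)+\sum_jh_e(x_j),
\]
and to prove three things in turn: a dimension bound, a description of the components through generic fibres, and an exclusion of the nongeneric strata. The proposition is quoted as a published result. Accordingly, the plan is to reduce it to the standard irreducibility theorem for fibres of the odd power-sum map with at least \(2e\) or so slots (the Cohen / Lai--Marino--Robinson--Wan / Gottig--P\'erez--Privitelli diagonal-equation framework). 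Then we check its hypotheses, namely a nonzero target that is not one-column.

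\emph{Step 1 (dimension).} Every component of \(\mathcal V_{\bfs}\) has dimension at least \((2e-1)-e=e-1\), because \(\mathcal V_{\bfs}\) is cut out by \(e\) equations. For the upper bound, we project to the core \(\mathbf z\). For fixed \(\mathbf z\), the completions \(\mathbf x\) have prescribed odd power sums \(p_1,\dots,p_{2e-1}\) of \(e\) variables. Via Newton identities in characteristic two, together with the Frobenius-determined even power sums, these fix \(e_1,\dots,e_e\) generically, so there are finitely many \(\mathbf x\) up to order. Hence the fibre is finite, and the locus of \(\mathbf z\) where it is positive-dimensional has to be bounded.

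\emph{Step 2 (generic irreducibility).} Over a dense open set of cores, the completion polynomial \(f_{\mathbf z}(y)=\prod_j(y+x_j)\) has coefficients that are polynomial functions of \(\mathbf z\) and \(\bfs\). The ordered completions are then the splitting of \(f_{\mathbf z}\). So the dominant part of \(\mathcal V_{\bfs}\) is irreducible exactly when the geometric monodromy of this family of degree-\(e\) polynomials, parametrized by \(\mathbb A^{e-1}\), is the full symmetric group \(S_e\). We will show this by exhibiting a simple transposition (a fibre with exactly one double root) and proving transitivity, i.e.\ irreducibility of \(\{(\mathbf z,y):f_{\mathbf z}(y)=0\}\). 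Transitivity should follow from the fact that this curve family is again a fibre of the moment map with \(e\) core slots and one completing slot, to which an inductive version of the same argument applies.

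\emph{Step 3 (boundary strata, main obstacle).} We must rule out extra components of dimension \(e-1\) living where the completion degenerates, namely where the Hankel/Newton system drops rank (collisions \(x_j=x_k\), \(x_j=z_i\), or zero slots). On such a stratum, the reduced support \(E\) is shorter, and the equation becomes a representation of \(\bfs^\T\) by fewer effective slots. Dimension counting gives dimension at most \(e-2\) unless \(\bfs^\T\) is represented by at most one column. This is exactly where the hypotheses that \(\bfs^\T\) is nonzero and not one-column enter: if \(\bfs^\T=0\) or \(\bfs^\T=h_e(u)\), pairing slots produces \((e-1)\)-dimensional components, for instance \(x_j=z_j\) with a free choice, which would break irreducibility. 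We expect the careful stratification by collision pattern to be the hardest part; the first approach is to induct on the number of effective slots.

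Combining the three steps, \(\mathcal V_{\bfs}\) is the closure of a single irreducible open set, and is therefore geometrically irreducible of dimension \(e-1\). Since \(\mathcal V_{\bfs}\) is defined over \(F\) and Galois conjugation permutes its geometric components, a unique component must be Galois-stable, hence defined over \(F\).
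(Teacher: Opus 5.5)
\textbf{Steps~1 and~3 are sound, but Step~2 carries the whole proposition and does not work as proposed.}

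Your Steps~1 and~3 give a correct reduction. Among the \(2e-1\) slots, take \(r\) odd-multiplicity values and \(s\) even-multiplicity values. The odd-moment Jacobian is a Vandermonde matrix in squares of rank \(\min\{r,e\}\). So every collision stratum of \(\mathcal V_{\bfs}\) has dimension at most \(s+\max\{r-e,0\}\leq e-2\), unless \(r\leq1\), i.e.\ unless \(\bfs^\T\) is zero or one-column. (The paper runs the same count, for \(2e-3\) slots, in Appendix~\ref{r2:app:local-sign}.) Since every component has dimension at least \(e-1\), \(\mathcal V_{\bfs}\) is the closure of the finite \'etale part \(\mathcal X_{\bfs}\). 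The proposition is therefore equivalent to the generic completion polynomial having Galois group \(S_e\) over \(\overline F(\mathbf z)\).

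The paper goes the other way. It imports irreducibility of the projective closure from Vladut--Skorobogatov (Propositions~6 and~7, with \(n=2e-1\), \(t=e\)). It then uses the cone over the section at infinity to show that section has dimension \(e-2\), so the affine chart is dense. Only afterwards does it \emph{deduce} the \(S_e\) monodromy (Proposition~\ref{r2:prop:completion-cover}). Your appeal to the diagonal-equation literature ``with at least \(2e\) or so slots'' does not reach the sharp case of \(2e-1\) slots in characteristic two. So everything rests on your Step~2.

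Step~2 fails in three places.

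\emph{Transitivity is circular.} The variety \(\{(\mathbf z,y):f_{\mathbf z}(y)=0\}\) is not a fibre of an \(e\)-slot moment map: such fibres are finite, while this variety has dimension \(e-1\). It is the quotient of \(\mathcal X_{\bfs}\) by \(S_{e-1}\), namely the locus where \(\bfs^\T+\sum_ih_e(z_i)+h_e(y)\) lies in the image of the \((e-1)\)-slot moment map. Its irreducibility is as hard as the original claim.

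\emph{A transposition plus transitivity does not give \(S_e\).} The group \(S_2\wr S_{e/2}\) (e.g.\ \(D_4\leq S_4\)) is transitive and contains transpositions. You would need primitivity (Jordan's theorem) or generation by transpositions. Generation by inertia is not available: in characteristic two, \(\mathbb A^{e-1}\) has nontrivial \'etale covers (Artin--Schreier covers ramified only at infinity), so the monodromy group need not be generated by inertia groups at finite divisors.

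\emph{The transposition is not where you look for it.} Your own Step~3 count shows that the set of cores admitting a completion with two equal completing locators has codimension at least two. The doubled value is free, and the remaining \(2e-3\) slots vary in a family of dimension at most \(e-3\). So there is no divisor of cores whose fibre has exactly one double root. The actual branch divisor is the translated Hankel divisor, along which two roots escape to infinity. Moreover \(\det\mathscr H_e=\Delta_e^2\) is a square, so its vanishing does not by itself give ramification. Showing that the escaping pair is really interchanged needs an Artin--Schreier odd-pole computation like the one in Appendix~\ref{r2:app:local-sign}.

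Without a non-circular transitivity argument, a proof of primitivity, and such a local calculation, Step~2 establishes neither the \(S_e\) monodromy nor irreducibility.
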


\begin{proof}
Let \(\mathcal P_{\bfs}\subseteq\mathbb P_F^{\,2e-1}\) be the projective
closure defined by
\[
 \sum_{i=1}^{2e-1}u_i^{2j-1}
 =\sigma_{2j-1}u_0^{2j-1},
 \qquad 1\leq j\leq e.
\]
With \(n=2e-1\) and \(t=e\), the cited results
\cite[Propositions~6 and~7]{VladutSkorobogatov1989} show that this variety is geometrically
irreducible of dimension \(e-1\), except when
\(\sigma_{2j-1}=\sigma_1^{2j-1}\) for every \(2\leq j\leq e\).
The exceptional identities say precisely that the target is zero or
one-column, both excluded here.  To control the affine chart, let
\(\mathcal Y_e=\mathcal P_{\bfs}\cap\{u_0=0\}\).  The zero-target variety
is the projective cone over \(\mathcal Y_e\) with vertex
\([1:0:\cdots:0]\), and \(\mathcal Y_e\) is nonempty because
\([0:1:1:0:\cdots:0]\in\mathcal Y_e\).  Proposition~6 gives the cone
dimension \(e-1\), hence \(\dim\mathcal Y_e=e-2\).  Therefore
\(\mathcal P_{\bfs}\) is not contained at infinity.
Its affine chart \(u_0=1\), which is \(\mathcal V_{\bfs}\), is therefore a
nonempty dense geometrically irreducible open subset of dimension \(e-1\).
\end{proof}

\subsection{Completion polynomials and covers}
\label{r2:subsec:completion-covers}

The incidence variety records all completions, but it does not yet describe
how they vary with the common core.  We now package the \(e\) completing
locators as the roots of one degree-\(e\) polynomial.  On the locus where the
associated Hankel matrix is nonsingular, those roots are distinct and their
orderings form the finite connected cover that will later be compared for the
two syndrome targets.

For a fixed common core \(\mathbf z\), define the residual odd moments
\[
 p_{2j-1}(\bfs,\mathbf z)
 \eqdef
 \sigma_{2j-1}
 +\sum_{i=1}^{e-1}z_i^{2j-1},
 \qquad 1\leq j\leq e.
\]
For the rest of this construction the target \(\bfs\) is fixed, and we
suppress the arguments by writing
\(p_r=p_r(\bfs,\mathbf z)\in F[\mathbf z]\).  Complete the definition by
putting
\[
        p_0\eqdef e\bmod2,
\qquad
        p_{2j}\eqdef p_j^2
        \quad(1\leq j\leq e-1),
\]
where the second rule is read recursively.  Thus every symbol \(p_r\) used
below for \(0\leq r\leq2e-1\) denotes a specified regular residual-moment
function.  Form the Hankel matrix
\[
        \mathscr H_e(\bfs,\mathbf z)
        \eqdef
        (p_{i+j})_{0\leq i,j\leq e-1}.
\]
General finite-field counts for rank-bounded Hankel matrices, including
counts with a fixed initial segment, appear in
\cite[Theorems~1.1 and~1.5]{DwivediGrinberg2022}.  Here the matrix lies on
the much smaller characteristic-two Frobenius-moment locus
\(p_{2j}=p_j^2\), and the needed conclusion is reconstruction rather than
an ambient rank count.

The next reconstruction statement is the key characteristic-two step.  The
Frobenius relations force the usual weighted Hankel decomposition to consist
of distinct roots, each with weight one; this explains the appearance of the
symmetric group.

\begin{lemma}[Hankel reconstruction]
\label{r2:lem:hankel-reconstruction}
At every geometric point of the open set
\[
        \det \mathscr H_e(\bfs,\mathbf z)\neq0,
\]
the residual moments determine a unique monic separable polynomial of
degree \(e\).  Its roots are precisely the \(e\) completing locators.
Ordering those roots gives \(e!\) points above the common core.
\end{lemma}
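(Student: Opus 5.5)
The plan is to run the classical Berlekamp--Massey / Prony argument over the algebraic closure, using the Frobenius relations \(p_{2j}=p_j^2\) to remove the multiplicity weights. Work at a geometric point \(\mathbf z\) with \(\det\mathscr H_e\neq0\). Since \(\mathscr H_e\) is invertible, there is a unique vector \((c_0,\ldots,c_{e-1})\) solving the linear recurrence
\[
 p_{k+e}=\sum_{i=0}^{e-1}c_ip_{k+i},\qquad 0\leq k\leq e-1.
\]
Set \(\Lambda(y)=y^e+\sum_i c_iy^i\); this is the candidate monic polynomial, and by construction it depends only on the residual moments.

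Next I show that the roots of any completion \(\mathbf x\) are roots of \(\Lambda\). Given a completion, its power sums \(P_r=\sum_\ell x_\ell^r\) agree with \(p_r\) for all \(0\leq r\leq2e-1\). For odd \(r\) this is the defining equation. For even \(r\) it holds because \(P_{2j}=P_j^2\) in characteristic two, and \(P_0=e\bmod2\). Let \(\Phi(y)=\prod_\ell(y+x_\ell)\). Then the vector of coefficients of \(\Phi\) satisfies the recurrence above, because
\[
 \sum_\ell x_\ell^k\Phi(x_\ell)=0 .
\]
Uniqueness of the solution forces \(\Phi=\Lambda\). Hence every completion consists of the roots of \(\Lambda\) in some order. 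This argument also shows that at most \(e!\) ordered completions lie above \(\mathbf z\).

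The main obstacle is the converse: showing that \(\Lambda\) is separable and that its roots really do complete the target. Write
\[
 \Lambda=\prod_{a}(y+a)^{m_a}
\]
over the distinct roots \(a\). Define \(Q_r=\sum_a w_aa^r\), where \(w_a\) is to be chosen so that \(Q_r=p_r\) for \(r<e\). This is a Vandermonde-type system, and it is solvable using the standard generalized-power-sum (confluent) decomposition. The subtle point is what happens when there are repeated roots. A sequence annihilated by \(\Lambda\) is a combination of terms \(\binom{r}{i}a^{r-i}\) for \(i<m_a\). The Frobenius relation \(p_{2r}=p_r^2\), propagated through the recurrence to all \(r\), says the sequence is fixed by the Frobenius-doubling \(r\mapsto2r\).

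I would then compare coefficients. In characteristic two, doubling kills the odd binomial terms, \(\binom{2r}{2i+1}\equiv0\). Squaring, on the other hand, sends the weight \(w\) to \(w^2\). Matching the two shows every confluent coefficient of order \(\geq1\) vanishes and that \(w_a^2=w_a\), so \(w_a\in\{0,1\}\).

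Nondegeneracy of the Hankel matrix then forces two things. First, the number of roots with \(w_a=1\) equals \(\operatorname{rank}\mathscr H_e=e\), since the Hankel matrix factors as \(V^\T\operatorname{diag}(w)V\) with \(V\) a Vandermonde matrix. Second, these \(e\) roots are distinct. So \(\Lambda\) is separable and all weights equal one. Consequently \(\sum_a a^r=p_r\) for every \(r\leq2e-1\); the range up to \(2e-1\) follows from the recurrence. In particular the odd-moment equations hold, so the roots form a completion.

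Combining both directions, the completions above \(\mathbf z\) are exactly the \(e!\) orderings of the \(e\) distinct roots of \(\Lambda\). Since \(\Lambda\) has coefficients that are regular functions on the open set, this also describes the cover globally.
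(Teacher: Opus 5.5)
Your uniqueness half is exactly the paper's converse direction: you solve the nonsingular Hankel system and observe that the coefficients of $\prod_\ell(y+x_\ell)$ for any completion solve the same system. For existence and separability you take a genuinely different route.

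The paper never leaves the finite data. It forms $A_f=k[t]/(f)$ with the functional $\lambda(\theta^j)=p_j$ and notes that $\lambda(ab)$ has Gram matrix $\mathscr H_e$. Because squaring is additive, $\lambda(a^2)=\lambda(a)^2$ holds on all of $A_f$ using only $p_{2j}=p_j^2$ for $j\leq e-1$. Square-zero elements therefore lie in the radical of a nondegenerate form, so $A_f$ is reduced and $A_f\simeq k^e$. The primitive idempotents then satisfy $\lambda(\varepsilon_i)\in\{0,1\}$, and nondegeneracy excludes $0$.

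You instead extend the sequence by the recurrence and write it in confluent Prony form $p_r=\sum_a\sum_{i<m_a}w_{a,i}\binom{r}{i}a^{r-i}$. You compare $p_{2r}$ with $p_r^2$ via Lucas' congruences and finish with $\mathscr H_e=V^\T\operatorname{diag}(w)V$. Your version is more explicit and classical, since it shows in coordinates how multiplicities and weights are forced. The paper's version is coordinate-free and avoids both the infinite extension and the binomial bookkeeping.

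Three points in your version must be written out.

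First, the claim that $p_{2r}=p_r^2$ ``propagates'' to every $r$ is the crux of your argument and is currently only asserted. It does hold. Both $(p_r^2)_r$ and $(p_{2r})_r$ satisfy the order-$e$ recurrence with characteristic polynomial $\prod_a(y+a^2)^{m_a}$, whose coefficients are the squares of those of $\Lambda$. The first sequence does so by squaring the recurrence. The second does so because its confluent form involves $a^2$ only with multiplicity at most $\lceil m_a/2\rceil\leq m_a$. Two solutions of this recurrence agree once they agree for $r<e$, which is the hypothesis. Equivalently, $p_r=\lambda(\theta^r)$, and propagation is just the paper's identity $\lambda(a^2)=\lambda(a)^2$ at $a=\theta^r$.

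Second, matching coefficients yields $w_{a,2i}=w_{a,i}^2$, with $w_{a,2i}=0$ when $2i\geq m_a$. It does not by itself give vanishing of the higher-order terms. To get $w_{a,i}=0$ for all $i\geq1$ you need a descending induction, starting from the indices $i\geq m_a/2$. The remark that odd binomials vanish is not enough on its own.

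Third, the sentence choosing pure weights $w_a$ with $\sum_a w_aa^r=p_r$ for $r<e$ is not solvable as stated when $\Lambda$ has repeated roots. What you actually use is the confluent decomposition, which always exists.

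With these repairs your argument is complete.
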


We call this unique polynomial the \emph{completion polynomial} of the target
over the chosen common core.

\begin{proof}
Fix a geometric point of the displayed open set, and let \(k\) be its
algebraically closed residue field.  All moments in this proof are evaluated
at that point.  Since \(\mathscr H_e\) is nonsingular, there are unique coefficients
\(c_0,\ldots,c_{e-1}\in k\) satisfying
\[
 p_{e+r}+\sum_{\ell=0}^{e-1}c_\ell p_{\ell+r}=0,
 \qquad 0\leq r\leq e-1.
\]
Put
\[
 f(t)\eqdef
 t^e+c_{e-1}t^{e-1}+\cdots+c_1t+c_0,
 \qquad
 A_f\eqdef k[t]/(f),
 \qquad
 \theta\eqdef t\bmod f.
\]
Define a \(k\)-linear functional \(\lambda\colon A_f\to k\) by
\[
 \lambda(\theta^j)\eqdef p_j,
 \qquad 0\leq j\leq e-1.
\]
We first claim that
\[
 \lambda(\theta^j)=p_j,
 \qquad 0\leq j\leq2e-1.
\]
This holds by definition for \(j<e\).  Suppose inductively that it holds
through degree \(e+r-1\), where \(0\leq r\leq e-1\).  Since
\(f(\theta)=0\), and since \(k\) has characteristic two,
\[
 \lambda(\theta^{e+r})
 =
 \sum_{\ell=0}^{e-1}c_\ell\lambda(\theta^{\ell+r})
 =
 \sum_{\ell=0}^{e-1}c_\ell p_{\ell+r}
 =
 p_{e+r}.
\]
This proves the claim.

Consider the bilinear form
\[
 B\colon A_f\times A_f\longrightarrow k,
 \qquad
 B(a,b)\eqdef\lambda(ab).
\]
In the basis \(1,\theta,\ldots,\theta^{e-1}\), its Gram matrix is \(\mathscr H_e\),
so \(B\) is nondegenerate.  The Frobenius relations
\(p_{2j}=p_j^2\) imply
\[
 \lambda(a^2)=\lambda(a)^2
 \qquad(a\in A_f).
\]
Indeed, if \(a=\sum_{j=0}^{e-1}a_j\theta^j\), then
\[
 \lambda(a^2)
 =
 \sum_{j=0}^{e-1}a_j^2p_{2j}
 =
 \left(\sum_{j=0}^{e-1}a_jp_j\right)^2.
\]
If \(a^2=0\), then for every \(b\in A_f\),
\[
 B(a,b)^2
 =
 \lambda(ab)^2
 =
 \lambda\bigl((ab)^2\bigr)
 =
 \lambda(a^2b^2)
 =
 0.
\]
Thus \(B(a,b)=0\) for every \(b\), and nondegeneracy gives \(a=0\).
If \(A_f\) had a nonzero nilpotent element, a suitable iterated square of it
would be nonzero and have square zero.  Hence \(A_f\) is reduced.

Since \(A_f\) is a finite-dimensional reduced \(k\)-algebra,
\(\dim_k A_f=e\), and \(k\) is algebraically closed, one has
\[
 A_f\simeq k^e.
\]
Equivalently,
\[
 f(t)=\prod_{i=1}^{e}(t-x_i)
\]
for distinct \(x_1,\ldots,x_e\in k\).  Let
\(\varepsilon_1,\ldots,\varepsilon_e\) be the corresponding primitive
idempotents.  Since
\[
 \lambda(\varepsilon_i)^2
 =
 \lambda(\varepsilon_i^2)
 =
 \lambda(\varepsilon_i),
\]
each \(\lambda(\varepsilon_i)\) is either zero or one.  It cannot be zero:
if \(\lambda(\varepsilon_i)=0\), then
\(\varepsilon_i b\in k\varepsilon_i\) for every \(b\in A_f\), and hence
\(B(\varepsilon_i,b)=0\) for every \(b\), contrary to nondegeneracy.
Consequently,
\[
 \lambda(\varepsilon_i)=1
 \qquad(1\leq i\leq e).
\]
It follows that
\[
 p_j
 =
 \lambda(\theta^j)
 =
 \sum_{i=1}^{e}x_i^j,
 \qquad 0\leq j\leq2e-1.
\]
Thus every ordering of \(x_1,\ldots,x_e\) is a completing tuple.

Conversely, let \(y_1,\ldots,y_e\) be any completing tuple and put
\[
 q_j\eqdef\sum_{i=1}^{e}y_i^j.
\]
The incidence equations give \(q_j=p_j\) for every relevant odd \(j\), and
\(q_0=e\bmod2=p_0\).  Induction using
\(q_{2j}=q_j^2\) and \(p_{2j}=p_j^2\) therefore gives
\[
 q_j=p_j,
 \qquad 0\leq j\leq2e-1.
\]
Write
\[
 g(t)\eqdef\prod_{i=1}^{e}(t-y_i)
 =
 t^e+d_{e-1}t^{e-1}+\cdots+d_0.
\]
For \(0\leq r\leq e-1\), summing \(y_i^r g(y_i)=0\) over \(i\) yields
\[
 p_{e+r}+\sum_{\ell=0}^{e-1}d_\ell p_{\ell+r}=0.
\]
Thus \(d_0,\ldots,d_{e-1}\) solve the same nonsingular Hankel system as
\(c_0,\ldots,c_{e-1}\), and hence \(g=f\).  Therefore every completing
tuple is an ordering of the distinct roots of \(f\), and the geometric
fiber consists precisely of their \(e!\) orderings.
\end{proof}

Let
\[
 \operatorname{pr}_{\mathbf z}\colon
 \mathcal V_{\bfs}\longrightarrow\mathbb A_F^{e-1},
 \qquad
 (\mathbf z,\mathbf x)\longmapsto\mathbf z,
\]
and put
\[
 U_{\bfs}\eqdef
 \left\{\mathbf z\in\mathbb A_F^{e-1}:
 \det \mathscr H_e(\bfs,\mathbf z)\neq0\right\}.
\]
Define
\[
 \mathcal X_{\bfs}\eqdef
 \operatorname{pr}_{\mathbf z}^{-1}(U_{\bfs}),
 \qquad
 \rho_{\bfs}\eqdef
 \operatorname{pr}_{\mathbf z}|_{\mathcal X_{\bfs}}
 \colon\mathcal X_{\bfs}\longrightarrow U_{\bfs}.
\]
We call \(\rho_{\bfs}\) the \emph{completion cover} of \(\bfs^\T\).

\begin{proposition}[The completion cover]
\label{r2:prop:completion-cover}
Over the algebraic closure of \(F\), the completion cover is a connected
finite \'{e}tale Galois cover of degree \(e!\), with Galois group \(S_e\).
\end{proposition}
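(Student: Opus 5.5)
The proof proceeds in three steps: the cover is finite étale of degree \(e!\) with a free \(S_e\)-action, it is connected, and then Galois with group \(S_e\) follows formally.

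\textbf{Finite étale of degree \(e!\).} Over \(U_{\bfs}\), Lemma~\ref{r2:lem:hankel-reconstruction} lets us solve the nonsingular Hankel system by Cramer's rule. This gives coefficients \(c_0,\ldots,c_{e-1}\) that are regular functions on \(U_{\bfs}\), and hence a monic completion polynomial \(f_{\mathbf z}(t)\in\mathcal O(U_{\bfs})[t]\). The lemma shows that \(\mathcal X_{\bfs}\) is exactly the scheme of ordered root tuples of \(f_{\mathbf z}\). Concretely, \(\mathcal X_{\bfs}\) is cut out in \(U_{\bfs}\times\mathbb A^e\) by the conditions \(\prod_i(t-x_i)=f_{\mathbf z}(t)\). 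These are equivalent to the moment equations on \(U_{\bfs}\): one direction is the converse part of the lemma, and the other is its forward part applied fibrewise, upgraded scheme-theoretically because the Newton/Hankel relations are polynomial identities.

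By the lemma, the discriminant of \(f_{\mathbf z}\) is nowhere zero on \(U_{\bfs}\). The ordered-roots scheme of a monic separable polynomial is the standard \(S_e\)-torsor: it is finite étale of degree \(e!\), and \(S_e\) acts freely and transitively on geometric fibres by permuting the \(x_j\). The action is well defined because the moment equations are symmetric in \(\mathbf x\).

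\textbf{Connectedness over \(\overline F\).} This is the main step. Proposition~\ref{r2:prop:incidence-top-component} says \(\mathcal V_{\bfs}\) is geometrically irreducible of dimension \(e-1\), and \(\mathcal X_{\bfs}\) is an open subset of it. So it suffices to show \(\mathcal X_{\bfs}\) is nonempty, since a nonempty open subset of an irreducible variety is irreducible, hence connected.

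Nonemptiness is where I expect the real work, namely that \(\det\mathscr H_e(\bfs,\mathbf z)\) does not vanish identically in \(\mathbf z\). I would try the following. On \(\mathcal V_{\bfs}\), at a point whose completing tuple \(\mathbf x\) has distinct entries, the matrix \(\mathscr H_e\) factors as \(M^{\T}M\), where \(M=(x_j^{i})\) is a Vandermonde matrix; this uses \(p_r=\sum_j x_j^r\) for all \(r\leq 2e-1\) via the Frobenius relations. Hence \(\det\mathscr H_e=\prod_{i<j}(x_i+x_j)^2\neq0\). It therefore remains to find a geometric point of \(\mathcal V_{\bfs}\) with pairwise distinct \(x_j\).

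Suppose instead that \(x_i=x_j\) for some pair on all of \(\mathcal V_{\bfs}\). By irreducibility this identity would hold on a dense set, so the two repeated slots cancel. Then \(\bfs^\T\) would be represented by \(2e-3\) free slots, which by dimension count forces \(\dim\le e-2\) after parametrizing, contradicting \(\dim\mathcal V_{\bfs}=e-1\). More precisely, the locus \(\{x_i=x_j\}\cap\mathcal V_{\bfs}\) is isomorphic to a product of \(\mathbb A^1\) with the incidence variety of \(2e-3\) slots. If the latter has dimension at most \(e-3\), the diagonal locus has dimension at most \(e-2\). Alternatively, I would exhibit directly a point with distinct \(x\)'s: permute the roles of core and completion slots in a generic point, using that \(\mathcal V_{\bfs}\) is symmetric under \(S_{2e-1}\) and irreducible.

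\textbf{Galois group.} Given connectedness, a connected \(S_e\)-torsor is a Galois cover with group exactly \(S_e\). This is because the automorphism group of a connected finite étale cover acts freely on fibres, and here it already contains \(S_e\), which has size equal to the degree. This completes the argument.
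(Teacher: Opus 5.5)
You follow the paper's outline. Finite \'etaleness of degree \(e!\) comes from Hankel reconstruction. Connectedness comes from the geometric irreducibility of \(\mathcal V_{\bfs}\) once \(\mathcal X_{\bfs}\neq\emptyset\). The Galois statement comes from the simply transitive \(S_e\)-action on fibres. Your first and last steps are fine; identifying \(\mathcal X_{\bfs}\) with the ordered-root scheme of \(f_{\mathbf z}\) is a legitimate repackaging of the paper's Vandermonde/Jacobian computation in \(B\).

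The gap is the step you yourself flag as the real work: nonemptiness of \(U_{\bfs}\) is never proved. Route (a) rests on the bound \(\dim\leq e-3\) for the incidence variety of \(2e-3\) slots. You state this only conditionally and attribute it to a ``dimension count''. But counting \(e\) equations in \(2e-3\) unknowns bounds component dimensions from below, not above. The upper bound does hold for admissible targets. However, it needs a stratification by multiplicity pattern (values of odd versus even multiplicity, and the rank of the Vandermonde matrix in squares); this is the argument the paper gives in Appendix~\ref{r2:app:local-sign} for the nonvanishing of \(\eta_{\bfs}\). It also genuinely uses admissibility: for \(\bfs^\T=h_e(y)\), the tuples \((y,a_1,a_1,\ldots,a_{e-2},a_{e-2})\) form an \((e-2)\)-dimensional family. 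Route (b) is stated only as an intention.

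The paper needs no dimension estimate here. The determinant \(\det\mathscr H_e(\bfs,\mathbf z)\) involves only \(p_1,p_3,\ldots,p_{2e-3}\), so it depends on \(\mathbf z\) only through \(\pi_e(\bfs^\T)+\mathbf m_e(\mathbf z)\). As a polynomial in these lower moments it is nonzero, since the specialization \(p_r=\sum_{i=1}^{e}t_i^r\) turns it into \(\prod_{i<j}(t_i+t_j)^2\). The map \(\mathbf m_e\) is dominant because its Jacobian is a nonzero Vandermonde square. Hence \(U_{\bfs}\neq\emptyset\), and surjectivity of the cover gives \(\mathcal X_{\bfs}\neq\emptyset\).

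Alternatively, your symmetry idea can be completed in a few lines. Write \(u_1,\ldots,u_{2e-1}\) for all slot coordinates; then \(\mathcal V_{\bfs}\) is stable under \(S_{2e-1}\).
\begin{enumerate}
\item If every point had two equal completion coordinates, irreducibility would put \(\mathcal V_{\bfs}\) inside one diagonal \(\{u_a=u_b\}\).
\item Transitivity of \(S_{2e-1}\) on pairs of slots would then put it inside all of them.
\item Every point would therefore have the form \((y,\ldots,y)\), with \(\bfs^\T=(2e-1)h_e(y)=h_e(y)\).
\item This is impossible for a nonzero non-one-column target, because \(\mathcal V_{\bfs}\) is nonempty.
\end{enumerate}
Either argument closes the gap, and the rest of your proof then goes through.
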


\begin{proof}
Work over \(k=\overline F\), and put
\[
 \delta\eqdef\det \mathscr H_e(\bfs,\mathbf z),
 \qquad
 A\eqdef k[\mathbf z,\delta^{-1}].
\]
The scheme-theoretic inverse image defining \(\mathcal X_{\bfs}\) has
coordinate algebra
\[
 B\eqdef
 A[x_1,\ldots,x_e]/(g_1,\ldots,g_e),
 \qquad
 g_j\eqdef
 \sum_{i=1}^{e}x_i^{2j-1}-p_{2j-1},
 \quad 1\leq j\leq e.
\]
Let \(c_0,\ldots,c_{e-1}\in A\) be the unique solution of
\[
 p_{e+r}+\sum_{\ell=0}^{e-1}c_\ell p_{\ell+r}=0,
 \qquad 0\leq r\leq e-1,
\]
and put
\[
 f(t)\eqdef t^e+\sum_{\ell=0}^{e-1}c_\ell t^\ell.
\]
These coefficients are regular on \(U_{\bfs}\), since the Hankel system is
solved after inverting \(\delta\).

In \(B\), the defining odd-moment equations and the Frobenius relations give
\[
 p_r=\sum_{i=1}^{e}x_i^r,
 \qquad 0\leq r\leq2e-1.
\]
Consequently,
\[
 \sum_{i=1}^{e}x_i^j f(x_i)=0,
 \qquad 0\leq j\leq e-1.
\]
The coefficient matrix of this system is the Vandermonde matrix
\[
 V=(x_i^j)_{\substack{0\leq j\leq e-1\\1\leq i\leq e}},
 \qquad
 (\det V)^2=\det \mathscr H_e=\delta\in B^\times.
\]
Hence \(V\) is invertible and \(f(x_i)=0\) for every \(i\).  Thus every
\(x_i\) is integral over \(A\), so \(B\) is finite over \(A\).  Moreover,
all \(x_i-x_j\) are units, and successive use of the factor theorem gives
\[
 f(t)=\prod_{i=1}^{e}(t-x_i)
 \qquad\text{in }B[t].
\]

The Jacobian determinant of the defining equations with respect to the
completion coordinates is
\[
 \det\bigl(x_i^{2j-2}\bigr)_{1\leq j,i\leq e}
 =\prod_{1\leq i<k\leq e}(x_i+x_k)^2
 =\delta,
\]
which is invertible.  Hence \(\rho_{\bfs}\) is finite \'{e}tale.  By
Lemma~\ref{r2:lem:hankel-reconstruction}, every geometric fiber over
\(U_{\bfs}\) consists precisely of the \(e!\) orderings of the distinct
completing roots.  Therefore \(\rho_{\bfs}\) is surjective and has constant
degree \(e!\).
In particular, \(B\) is reduced, so this scheme-theoretic inverse image
agrees with \(\mathcal X_{\bfs}\) viewed as a variety.

The open set \(U_{\bfs}\) is nonempty: the lower-moment map
\[
 \mathbf z\longmapsto
 \left(\sum_i z_i,\sum_i z_i^3,\ldots,
       \sum_i z_i^{2e-3}\right)
\]
has nonzero Vandermonde-square Jacobian and is therefore dominant, whereas
the Hankel determinant is a nonzero polynomial in the residual lower
moments.  Indeed, after the formal specialization
\(p_r=\sum_{i=1}^e t_i^r\), it becomes
\(\prod_{i<j}(t_i+t_j)^2\), which is nonzero.  Dominance now shows that
\(U_{\bfs}\), and hence its surjective inverse image
\(\mathcal X_{\bfs}\), is nonempty.
By Proposition~\ref{r2:prop:incidence-top-component}, it is a nonempty open
subset of the geometrically irreducible variety \(\mathcal V_{\bfs}\), and
is therefore geometrically irreducible and connected.  Permuting the
completion coordinates acts simply transitively
on every geometric fiber.  Thus the cover is a connected \(S_e\)-torsor,
equivalently a Galois cover with group \(S_e\).
\end{proof}

The one-target construction is now complete: every eligible syndrome gives a
connected \(S_e\)-cover of the common-core space.  We return to the original
two-target problem by placing the covers for \(\bm a^\T\) and \(\bm b^\T\) over
the same common-core coordinates and asking whether they are independent.
For a fixed nonzero, non-one-column target \(\bfs^\T\), let
\[
        K\eqdef\overline F(z_1,\ldots,z_{e-1})
\]
be the common-core function field, and let \(L_{\bfs}/K\) be the splitting
field of its generic completion polynomial.  Proposition~\ref{r2:prop:completion-cover}
identifies its Galois group with \(S_e\).

For two nonzero, non-one-column targets \(\bm a^\T,\bm b^\T\), put
\[
        U_{\bm a,\bm b}\eqdef U_{\bm a}\cap U_{\bm b}
\]
and restrict both completion covers to this common base.  Their simultaneous
completion space over the nonsingular locus is the fiber product
\[
 \left(\mathcal X_{\bm a}\times_{U_{\bm a}}U_{\bm a,\bm b}\right)
 \times_{U_{\bm a,\bm b}}
 \left(\mathcal X_{\bm b}\times_{U_{\bm b}}U_{\bm a,\bm b}\right).
\]
It is geometrically integral exactly when the completion fields
\(L_{\bm a}\) and \(L_{\bm b}\) are linearly disjoint over \(K\).  Under that
condition, an effective rational-point estimate produces one common core for
which both targets have completing locators;
Lemma~\ref{lem:common-core-certificate} then gives the desired support bound.
The next three subsections establish this linear disjointness in the cases
needed below.  The universal Hankel divisor detects a valuation at which the
quadratic sign subfield
\[
        Q_{\bfs}\eqdef L_{\bfs}^{A_e}
\]
ramifies; comparing that ramification distinguishes the completion fields.

\subsection{The universal Hankel divisor}
\label{r2:subsec:universal-hankel}

The next task is to prove that the two completion fields are linearly
disjoint.  When the targets have different lower projections, we seek a
valuation at which the quadratic sign invariant of one completion field
ramifies and that of the other does not.  The determinant of the residual
Hankel matrix supplies such a valuation.  We first analyze its divisor in
universal lower-moment coordinates; the target is then inserted by translation
and the result is pulled back to the common-core space.  Recall that \(\pi_e\)
deletes the highest syndrome coordinate \(\sigma_{2e-1}\).

From this point through Proposition~\ref{r2:prop:different-projections}, assume
\(e\geq4\).  The cases \(e=2,3\) are treated separately in
Lemma~\ref{r2:lem:boundary-completion-covers} before they are used in the
universal theorem.

For this universal statement, the symbols
\(p_1,p_3,\ldots,p_{2e-3}\) are algebraically independent indeterminates in
the polynomial ring
\[
        \mathcal R_e
        \eqdef
        \F_2[p_1,p_3,\ldots,p_{2e-3}].
\]
They are not the target-dependent functions used above.  Put
\(p_0\eqdef e\bmod2\), define \(p_{2j}\eqdef p_j^2\) recursively, and set
\[
 \mathscr H_e^{\mathrm{univ}}\eqdef(p_{i+j})_{0\leq i,j\leq e-1}.
\]
For each fixed
target and common core, the specialization homomorphism
\[
 \mathcal R_e\longrightarrow F[\mathbf z],
 \qquad
 p_{2j-1}\longmapsto p_{2j-1}(\bfs,\mathbf z),
\]
recovers the residual moments and sends \(\mathscr H_e^{\mathrm{univ}}\) to
\(\mathscr H_e(\bfs,\mathbf z)\).  All identities in the universal ring are used
below through this explicit specialization.  The determinant of
\(\mathscr H_e^{\mathrm{univ}}\) is a square in
this ring.  We choose its canonical square root \(\Delta_e\) by the
Vandermonde normalization in the next statement.  For the recursion appearing
there, adopt the base conventions
\[
        \Delta_0=\Delta_1=1,
        \qquad
        \Delta_2=p_1,
        \qquad
        \Delta_3=p_1^3+p_3.
\]

\begin{lemma}[Universal Hankel divisor]
\label{r2:lem:universal-hankel}
Let \(e\geq4\).  There is a polynomial
\[
        \Delta_e
        \in\F_2[p_1,p_3,\ldots,p_{2e-3}]
\]
with the following properties.

\begin{enumerate}
\item One has
      \[
              \det\mathscr H_e^{\mathrm{univ}}=\Delta_e^2.
      \]
      If the moments come from \(e\) roots
      \(x_1,\ldots,x_e\), then
      \[
              \Delta_e
              =
              \prod_{1\leq i<j\leq e}(x_i+x_j).
      \]
\item The polynomial \(\Delta_e\) is absolutely irreducible and reduced.
      Its zero set is the smallest subset of
      \(\mathbb A_{\overline F}^{e-1}\) defined by polynomial equations over
      \(\overline F\) and containing
      \[
       \left\{
       \left(
       \sum_{i=1}^{e-2}r_i,
       \sum_{i=1}^{e-2}r_i^3,\ldots,
       \sum_{i=1}^{e-2}r_i^{2e-3}
       \right):
       r_i\in\overline F
       \right\}.
      \]
      Thus \(V(\Delta_e)\) is determined by the polynomial relations
      satisfied by the lower moments of \(e-2\) roots.
\item For a suitable
      \(\Phi_e\in\F_2[p_1,p_3,\ldots,p_{2e-5}]\),
      \[
              \Delta_e
              =
              \Delta_{e-2}p_{2e-3}+\Phi_e,
      \qquad
              \frac{\partial\Delta_e}
              {\partial p_{2e-3}}
              =
              \Delta_{e-2}.
      \]
\item The divisor has no nonzero additive translation stabilizer:
      \[
       \operatorname{Stab}_{+}(V(\Delta_e))
       \eqdef
       \left\{
       \mathbf c\in\overline F^{\,e-1}:
       V(\Delta_e)+\mathbf c=V(\Delta_e)
       \right\}
       =
       \{\bm 0\}.
      \]
\end{enumerate}
\end{lemma}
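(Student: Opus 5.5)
We work throughout in the universal symmetric-function model and pass to lower moments only at the end. Let \(x_1,\ldots,x_e\) be indeterminates and let \(p_r=\sum_i x_i^r\). In characteristic two, \(p_0=e\bmod2\) and \(p_{2j}=p_j^2\) hold automatically. Hence the Hankel matrix factors as \(VV^\T\), with \(V=(x_i^j)\) the Vandermonde matrix, and
\[
\det\mathscr H_e^{\mathrm{univ}}=\prod_{i<j}(x_i+x_j)^2 .
\]
The product \(\prod_{i<j}(x_i+x_j)\) is symmetric, so it is a polynomial in the elementary symmetric functions \(e_1,\ldots,e_e\). It is not yet visibly a polynomial in \(p_1,p_3,\ldots,p_{2e-3}\) alone.

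To get there, I would use the following structure of \(\F_2[e_1,\ldots,e_e]\). By Newton's identities in characteristic two, the odd power sums \(p_1,p_3,\ldots,p_{2e-1}\) are polynomials in the \(e_i\). Conversely, \(e_{2k+1}\) is determined triangularly by \(p_{2k+1}\) and lower data, and the Frobenius rules supply the even sums. The key point is to show \(\det\mathscr H\) is independent of \(p_{2e-1}\): the entries \(p_{i+j}\) with \(i+j\leq 2e-2\) never involve \(p_{2e-1}\), since its index is odd and it cannot arise as \(p_j^2\) with \(2j\leq 2e-2\). So \(\det\mathscr H^{\mathrm{univ}}\) lies in \(\mathcal R_e\) and equals the square of \(\Delta_e\) after specialization.

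Item 1 then needs \(\Delta_e\) itself, not only its square, to lie in \(\mathcal R_e\). For this I would take the recursion in item 3 as the definition. Expanding the determinant along the last row and column and grouping terms modulo squares should give
\[
\Delta_e=\Delta_{e-2}p_{2e-3}+\Phi_e .
\]
This matches the classical Schur \(Q\)-function or Pfaffian structure: \(\prod_{i<j}(x_i+x_j)\) is the Schur \(P\)-function of the staircase shape, and its Pfaffian expansion in the \(q\)-functions yields exactly such a linear recursion in the top odd power sum. The base cases \(\Delta_2=p_1\) and \(\Delta_3=p_1^3+p_3\) are checked directly. Linearity in \(p_{2e-3}\), with coefficient \(\Delta_{e-2}\), then gives the partial-derivative formula in item 3.

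For item 2, irreducibility follows from item 3 by induction. \(\Delta_e\) has degree one in \(p_{2e-3}\), so any factorization over \(\overline{F}\) has one factor free of \(p_{2e-3}\). That factor must divide both \(\Delta_{e-2}\) and \(\Phi_e\). To rule this out, I would restrict to \(V(\Delta_{e-2})\), that is, to moment points of \(e-4\) roots, and show \(\Phi_e\) does not vanish identically there; a specialization that collapses two root pairs should do it. Reducedness is immediate from degree one. For the zero-set description, note that \(\Delta_e=0\) exactly when \(x_i=x_j\) for some \(i\neq j\). In characteristic two the pair \(x_i,x_j\) then cancels in every odd power sum, so the lower moments are those of \(e-2\) roots. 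The map from \((e-2)\)-tuples to \(\mathbb A^{e-1}\) has image of dimension \(e-2\), which is an irreducible hypersurface contained in \(V(\Delta_e)\); since \(V(\Delta_e)\) is also irreducible of that dimension, the two agree up to closure.

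Item 4 is where I expect the main obstacle. Suppose \(V(\Delta_e)+\mathbf c=V(\Delta_e)\). Then \(\Delta_e(\mathbf p+\mathbf c)\) is a scalar multiple of \(\Delta_e(\mathbf p)\), and comparing top-weight parts forces the scalar to be one. Comparing coefficients of \(p_{2e-3}\) gives
\[
\Delta_{e-2}(\mathbf p+\mathbf c)=\Delta_{e-2}(\mathbf p),
\]
so by induction the lower components \(c_1,\ldots,c_{2e-7}\) vanish. It remains to exclude \(c_{2e-5}\) and \(c_{2e-3}\). I would compare \(\Phi_e(\mathbf p+\mathbf c)+\Phi_e(\mathbf p)\) with \(c_{2e-3}\Delta_{e-2}\), using the weighted homogeneity of \(\Phi_e\) and the explicit dependence of \(\Phi_e\) on \(p_{2e-5}\), which again comes from the Pfaffian recursion. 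The small cases \(e=4,5\) may need direct computation; for example,
\[
\Delta_4=p_1^6+p_1^3p_3+p_3^2+p_1p_5 .
\]
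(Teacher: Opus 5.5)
Your outline follows the paper's skeleton: the factorization \(\mathscr H=VV^\T\) under root specialization, the zero set as the image of \(e-2\) roots padded by a collapsed pair, and, in item~4, top-weight normalization followed by induction through \(\Delta_{e-2}\). However, three load-bearing steps are missing.

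\emph{Existence of \(\Delta_e\).} You correctly note that \(\Delta_e\) itself, not only its square, must lie in \(\F_2[p_1,p_3,\ldots,p_{2e-3}]\), but you never establish this. The recursion of item~3 cannot serve as a definition. It leaves \(\Phi_e\) undetermined, and inductively you would still need the \(p_{2e-3}\)-free part of \(\det\mathscr H_e^{\mathrm{univ}}\) to be a square, which is the same problem again. The Schur \(P\)/Pfaffian appeal is not carried out, and its reduction modulo two is not automatic. The missing idea is elementary. In characteristic two, the determinant of a symmetric matrix is the sum over involutions \(\tau\) of \(\{0,\ldots,e-1\}\), because a permutation and its inverse contribute equal terms. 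Each involution term \(\prod_{\tau(i)=i}p_{2i}\prod_{\{i<j\}}p_{i+j}^2\) is a square since \(p_{2i}=p_i^2\), so taking square roots termwise gives \(\Delta_e\) explicitly. Pairing \(\{e-2,e-1\}\) then gives item~3 at once.

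\emph{Irreducibility.} Your inductive argument reduces to showing \(\Delta_{e-2}\nmid\Phi_e\), and the specialization you propose cannot certify this. Collapsing two root pairs produces a moment vector coming from \(e-4\) roots. That vector lies in \(V(\Delta_e)\), and its truncation lies in \(V(\Delta_{e-2})\), so there \(\Phi_e=\Delta_e+\Delta_{e-2}p_{2e-3}=0\). The paper avoids this detour. The map \(\mathcal R_e\to\F_2[x_1,\ldots,x_e]\) is injective by the Vandermonde-square Jacobian, so a factorization of \(\Delta_e\) would split \(\prod_{i<j}(x_i+x_j)\) into symmetric subproducts. Transitivity of \(S_e\) on unordered pairs forbids this, which gives irreducibility and reducedness together. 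You also tacitly need this injectivity to transfer identities from your root model back to \(\mathcal R_e\).

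\emph{The last two translation coordinates in item~4.} You flag this step as the obstacle, and it is unresolved. Moreover, ``dependence of \(\Phi_e\) on \(p_{2e-5}\)'' is not the right condition in characteristic two. After \(c_1=\cdots=c_{2e-7}=0\), weighted homogeneity (with \(N_e=e(e-1)/2\)) separates \(\Delta_e(\mathbf p+\mathbf c)-\Delta_e(\mathbf p)\) into two components:
\begin{itemize}
\item in weight \(N_e-(2e-3)\), the component \(c_{2e-3}\Delta_{e-2}\);
\item in weight \(N_e-(2e-5)\), the component \(c_{2e-5}\,\partial\Delta_e/\partial p_{2e-5}\).
\end{itemize}
These weights cannot coincide for \(e\geq4\). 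Everything therefore hinges on \(\partial\Delta_e/\partial p_{2e-5}\neq0\) over \(\F_2\); dependence only through \(p_{2e-5}^2\) would give no information. The paper supplies exactly this witness in Lemma~\ref{r2:lem:triangular-monomial}: the coefficient of \(p_1^{\kappa_e}p_{2e-5}\), with \(\kappa_e=N_e-(2e-5)\), is one. It proves this by setting \(p_1=1\) and every other odd moment except \(p_{2e-5}\) to zero in the involution expansion, then counting the surviving involutions modulo two by forced pairing. Without such an explicit nonvanishing derivative, your argument for \(c_{2e-5}=0\) does not close. Your handling of \(c_{2e-3}\) via \(c_{2e-3}\Delta_{e-2}=0\) is fine once the weights are separated.
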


The proof is deferred to Appendix~\ref{r2:app:universal-hankel}.  Its first
step embeds the lower-moment ring into the polynomial ring in the roots by a
Vandermonde-square Jacobian.  The transitivity of \(S_e\) on unordered
root pairs then proves absolute irreducibility.  Expanding the determinant
over involutions gives the recursion in part~(3).  The same expansion,
together with one triangular monomial whose coefficient is one, eliminates
successively every coordinate of a possible translation stabilizer.  These
ingredients are all that is required for the applications below.

Recall the lower-moment projection \(\pi_e\) from
the preliminary reduction recalled above.  Define the lower
common-core moment map
\[
 \mathbf m_e\colon\mathbb A_F^{e-1}\longrightarrow\mathbb A_F^{e-1},
 \qquad
 \mathbf z\longmapsto
 \left(
 \sum_i z_i,
 \sum_i z_i^3,\ldots,
 \sum_i z_i^{2e-3}
 \right)^\T.
\]
Over \(\overline F\), its Jacobian determinant is the nonzero polynomial
\[
 J_e(\mathbf z)
 \eqdef
 \prod_{1\leq i<j\leq e-1}(z_i+z_j)^2.
\]
We call
\[
 \mathcal D_{\mathrm{coll}}\eqdef V(J_e)
 \subseteq\mathbb A_{\overline F}^{e-1}
\]
the common-core collision divisor.  It is exactly the locus where two core
locators collide and the moment map ceases to be \'{e}tale.

For a target \(\bfs^\T\), define its translated divisor in lower-moment
space and its pullback to common-core space by
\[
 \mathbf p\eqdef(p_1,p_3,\ldots,p_{2e-3})^\T,
 \qquad
 D_{\bfs}\eqdef
 V\!\left(\Delta_e\bigl(\pi_e(\bfs^\T)+\mathbf p\bigr)\right),
 \qquad
 \widetilde D_{\bfs}\eqdef\mathbf m_e^{-1}(D_{\bfs}).
\]
Lemma~\ref{r2:lem:universal-hankel} shows that
\(D_{\bm a}=D_{\bm b}\) exactly when
\(\pi_e(\bm a^\T)=\pi_e(\bm b^\T)\).

We now justify the divisor pullback used below.  Put \(n_0\eqdef e-1\), and
in the lower-moment target space form
\[
 \mathscr H_{\mathrm{core}}
 \eqdef(p_{i+j})_{0\leq i,j<n_0},
 \qquad
 p_0\eqdef n_0\bmod2,
 \qquad
 p_{2j}\eqdef p_j^2,
\]
and let
\[
 \delta_{\mathrm{core}}\eqdef\det \mathscr H_{\mathrm{core}},
 \qquad
 U_{\mathrm{core}}\eqdef V(\delta_{\mathrm{core}})^c.
\]
The proof of Lemma~\ref{r2:lem:hankel-reconstruction}, with \(e\) replaced by
\(n_0\), shows that
\[
 \mathbf m_e^{-1}(U_{\mathrm{core}})
 \longrightarrow U_{\mathrm{core}}
\]
is a finite \'{e}tale surjective cover of degree \(n_0!\): every point of
\(U_{\mathrm{core}}\) reconstructs a unique monic separable polynomial of
degree \(n_0\), and the fiber consists of all orderings of its roots.  Under
the root specialization,
\[
 \delta_{\mathrm{core}}=J_e(\mathbf z),
\]
so this cover is precisely the noncollision locus of the common-core moment
map.

The polynomial \(\delta_{\mathrm{core}}\) is independent of the last
lower-moment coordinate \(p_{2e-3}\).  By the recursion in
Lemma~\ref{r2:lem:universal-hankel}, the defining polynomial of \(D_{\bfs}\) is
linear in that coordinate with nonzero coefficient equal to a translate of
\(\Delta_{e-2}\).  Hence the irreducible divisor \(D_{\bfs}\) is not
contained in \(V(\delta_{\mathrm{core}})\), and
\[
 D_{\bfs}^{\circ}\eqdef D_{\bfs}\cap U_{\mathrm{core}}
\]
is dense and nonempty.  Base change of the preceding finite \'{e}tale
surjective cover to \(D_{\bfs}^{\circ}\) has a component dominating
\(D_{\bfs}^{\circ}\).  We denote its closure in
\(\widetilde D_{\bfs}\) by \(\Gamma_{\bfs}\).  It dominates
\(D_{\bfs}\), is generically disjoint from
\(\mathcal D_{\mathrm{coll}}\), and, after normalization, gives an unramified
extension of the discrete valuation at the generic point of \(D_{\bfs}\).

\subsection{The sign field and geometric disjointness}
\label{r2:subsec:sign-disjointness}

The universal calculation has produced a target-dependent divisor in the
common-core space.  Recall that
\(K=\overline F(z_1,\ldots,z_{e-1})\), that \(L_{\bfs}/K\) is the splitting
field of the generic completion polynomial of \(\bfs^\T\), and that
\[
        Q_{\bfs}=L_{\bfs}^{A_e}
\]
is its quadratic sign subfield.  We now prove that \(Q_{\bfs}\) ramifies
along the component \(\Gamma_{\bfs}\) lying over the translated Hankel
divisor.  This ramification distinguishes the completion fields attached to
targets with different lower projections.

Fix a nonzero, non-one-column target \(\bfs^\T\) and the component
\(\Gamma_{\bfs}\) defined above.  Let \(\kappa(\Gamma_{\bfs})\) be its
function field.  At its generic point, the lower residual moments lie over
the generic point of \(V(\Delta_e)\) and, after passing to an algebraic
closure of \(\kappa(\Gamma_{\bfs})\), can be written using \(e-2\) pairwise
distinct roots
\[
        r_1,\ldots,r_{e-2},
\qquad
        D_R\eqdef\prod_{i<j}(r_i+r_j),
\]
and define the highest-moment defect by
\[
 \eta_{\bfs}\eqdef
 p_{2e-1}(\bfs,\mathbf z)
 -
 \sum_{i=1}^{e-2}r_i^{2e-1}
 =
 \sigma_{2e-1}
 +\sum_{j=1}^{e-1}z_j^{2e-1}
 -\sum_{i=1}^{e-2}r_i^{2e-1}.
\]
In characteristic two the minus sign equals addition, but this form makes
the role of the defect transparent.

\begin{lemma}[Generic sign ramification]
\label{r2:lem:generic-sign-ramification}
Let \(e\geq4\), and let \(\bfs^\T\) be nonzero and not a one-column
syndrome.  Then \(\eta_{\bfs}\) is nonzero at the generic point of
\(\Gamma_{\bfs}\).  After an \'{e}tale extension of the corresponding
 discrete valuation ring, put
\[
 \varpi\eqdef
 \Delta_e\bigl(\pi_e(\bfs^\T)+\mathbf m_e(\mathbf z)\bigr).
\]
There is a branch of the normalization of the corresponding discrete
valuation ring in \(L_{\bfs}\) on which \(e-2\) roots have the distinct finite
residues \(r_1,\ldots,r_{e-2}\) and two roots escape.  The
Artin--Schreier class of \(Q_{\bfs}/K\) has reduced leading polar part
\[
 \eta_{\bfs}^{\,2e-3}
 D_R^{\,2e-1}
 \varpi^{-(2e-1)}.
\]
In particular, \(Q_{\bfs}/K\) is ramified at this valuation.
\end{lemma}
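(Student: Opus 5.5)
The plan is to compute a concrete Artin--Schreier generator for the sign field and expand it along a branch that degenerates at the Hankel divisor. In characteristic two, for a separable monic polynomial with roots \(x_1,\ldots,x_e\), the Berlekamp element
\[
 \beta\eqdef\sum_{i<j}\frac{x_i}{x_i+x_j}
\]
is fixed by \(A_e\) and sent to \(\beta+1\) by transpositions. It satisfies
\[
 \beta^2+\beta=\sum_{i<j}\frac{x_ix_j}{(x_i+x_j)^2}\eqdef\theta\in K .
\]
So \(Q_{\bfs}=K(\beta)\), and ramification at a valuation is read off from the polar part of \(\theta\) after reducing it: one removes polar terms of even order by substitutions \(\beta\mapsto\beta+c\), which replace \(\theta\) by \(\theta+c^2+c\). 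An odd-order pole that survives this reduction forces ramification. The steps are (a) build the degenerating branch, (b) show \(\eta_{\bfs}\neq0\), and (c) expand \(\theta\).

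\textbf{(a) The branch.} On \(\Gamma_{\bfs}\), by Lemma~\ref{r2:lem:universal-hankel}(2), the lower residual moments \(p_1,\ldots,p_{2e-3}\) are those of \(e-2\) distinct roots \(r_i\). The residual Hankel system of Lemma~\ref{r2:lem:hankel-reconstruction} has \(\det\mathscr H_e=\varpi^2\). Near \(\varpi=0\), I will factor the completion polynomial over the completion \(\widehat K\) at \(\Gamma_{\bfs}\), after an étale extension. Hensel's lemma should split off a degree-\(e-2\) factor with the simple residual roots \(r_i\). The remaining quadratic factor \(t^2+wt+v\) then has roots \(u,u'\) of negative valuation. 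Writing \(u=\tilde r+\ldots\) gives the moment equations
\[
 u^{2j-1}+u'^{2j-1}=\text{(small corrections)}\quad (j<e),\qquad u^{2e-1}+u'^{2e-1}\approx\eta_{\bfs}.
\]
From these I expect \(v(w)=v(u)\), and that \(w\) and \(uu'\) have explicit leading coefficients in \(\eta_{\bfs}\), \(D_R\), and \(\varpi\). The polynomial identity behind this is the expansion of \(u^{n}+u'^{n}\) in \(w\) and \(uu'\), while the recursion \(\Delta_e=\Delta_{e-2}p_{2e-3}+\Phi_e\) of Lemma~\ref{r2:lem:universal-hankel}(3) links the first-order deviation of the lower moments to \(\varpi\). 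The Vandermonde factorization in Lemma~\ref{r2:lem:universal-hankel}(1), split into pairs involving \(u\) or \(u'\), expresses \(\varpi\) as \(D_R(u+u')\prod_i(u+r_i)(u'+r_i)\). This is how \(D_R\) enters.

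\textbf{(b) Nonvanishing of \(\eta_{\bfs}\).} Suppose \(\eta_{\bfs}\) vanished on \(\Gamma_{\bfs}\). Then generically on \(\Gamma_{\bfs}\) the target would satisfy \(\bfs^\T=\sum_j h_e(z_j)+\sum_i h_e(r_i)\), a sum of \(2e-3\) columns whose lower part is fixed by membership in \(D_{\bfs}\). Moving \(\mathbf z\) along \(\Gamma_{\bfs}\), which has dimension \(e-2\), must keep \(\sigma_{2e-1}\) constant. I would contradict this by exhibiting a tangent direction along \(\Gamma_{\bfs}\) in which \(\sum z_j^{2e-1}+\sum r_i^{2e-1}\) has nonzero derivative. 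The natural choice is a tangent vector in the kernel of the differential of the \((2e-3)\)-point lower moment map; there the top-moment derivative is a Vandermonde-type minor, generically nonzero because all \(2e-3\) points are distinct off \(\mathcal D_{\mathrm{coll}}\). Alternatively, one can use the non-one-column hypothesis through Proposition~\ref{r2:prop:incidence-top-component}.

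\textbf{(c) Expanding \(\theta\) (main obstacle).} In \(\theta\), the terms with both roots among the finite \(r_i\) are regular, and the mixed terms \(x_iu/(x_i+u)^2\) have bounded valuation. The dominant pole comes from \(uu'/w^2\). Substituting the leading expressions for \(uu'\) and \(w\) from (a), I expect a pole of order \(2(2e-1)\) in \(\varpi\) with a square leading coefficient. That term is removed by an Artin--Schreier substitution, which exposes the odd-order term
\[
 \eta_{\bfs}^{\,2e-3}D_R^{\,2e-1}\varpi^{-(2e-1)}.
\]
The hardest step is the bookkeeping showing that the even-order reductions and the lower-order corrections from (a) do not cancel this odd term. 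I would first handle the model case \(e=4\) symbolically to fix exponents, then track only valuations in general, using the triangular structure of Lemma~\ref{r2:lem:universal-hankel}(3). The pole order \(2e-1\) is odd and \(\eta_{\bfs}D_R\neq0\), so the reduced class is nonzero and \(Q_{\bfs}/K\) ramifies.
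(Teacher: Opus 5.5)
There are two genuine gaps. Your Artin--Schreier framework is sound: Berlekamp's $\theta=\sum_{i<j}x_ix_j/(x_i+x_j)^2$ is a global representative of $[a_{\bfs}]$, and along a two-escaping-root branch every summand other than the escaping-pair term is integral. That would even let you skip the paper's inertia argument identifying the local quadratic with $Q_{\bfs}$.

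The first gap is that the quantitative core of the lemma is missing, and what you predict for it is wrong. Write $x,y$ for the escaping pair (your $u,u'$), $s=x+y$ (your $w$) and $P=xy$, and normalize $v(\varpi)=1$. It is not true that $v(s)=v(x)$: your own $j=1$ moment equation already makes $s$ small while $v(x)<0$. In fact
$P=\eta_{\bfs}D_R\varpi^{-1}(1+O(\varpi))$ and $s=\varpi^{e-1}\eta_{\bfs}^{-(e-2)}D_R^{-(e-1)}(1+O(\varpi))$.
Hence $v(x)=v(y)=-\tfrac12$, and the escaping-pair term $P/s^2$ \emph{already} has leading term $\eta_{\bfs}^{2e-3}D_R^{2e-1}\varpi^{-(2e-1)}$. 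There is no square pole of order $2(2e-1)$ and no reduction to perform. So the ``main obstacle'' you describe does not exist, while the step that actually carries the proof is not supplied.

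The missing idea is the paper's compactified pair chart. Set $P=u^{-1}$ and $s=wu^{e-1}$, with $u,w$ new coordinates, so that $x^{2j-1}+y^{2j-1}=sP^{j-1}+\cdots$ becomes the regular polynomial $wu^{e-j}+\cdots$. The extended moment map in $(r_1,\ldots,r_{e-2},u,w)$ then has Jacobian $\eta_{\bfs}D_R^2\neq0$ at $(r_1,\ldots,r_{e-2},0,\eta_{\bfs})$. This constructs the branch; your Hensel step presupposes, without proof, that exactly $e-2$ roots have residues $r_i$. The factorization $\varpi=D_Rs\prod_i(P+sr_i+r_i^2)=D_Rwu\prod_i(1+r_i^2u+wr_iu^e)$ then shows that $u$ and $\varpi$ are both uniformizers, which gives the two asymptotics above.

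The second gap is that your nonvanishing argument in (b) fails as stated. Avoiding $\mathcal D_{\mathrm{coll}}$ separates the $z_j$ from one another, not the $z_j$ from the $r_i$. For a pure target, $\Gamma_{\bfs}$ is a coordinate hyperplane, say $\{z_1=0\}$, and $\{r_i\}=\{z_2,\ldots,z_{e-1}\}$. So $\sum_jz_j^{2e-1}+\sum_ir_i^{2e-1}$ is identically zero along $\Gamma_{\bfs}$, and no tangent direction has nonzero derivative. For the zero target, and for a one-column target $h_e(c)$ on $\{z_1=c\}$, $\eta$ genuinely vanishes identically. Your tangent-vector argument applies verbatim to these cases because it never uses the hypothesis, so it cannot be correct.

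The hypothesis has to enter somewhere. The paper does this by a dimension count: $\eta_{\bfs}\equiv0$ would give an $(e-2)$-dimensional family of representations of $\bfs^\T$ by $2e-3$ slots. But a multiplicity stratum with $a$ odd-multiplicity and $b$ even-multiplicity values has dimension at most $b+\max\{a-e,0\}\leq e-3$ unless $a\leq1$, and $a\le1$ means a zero or one-column target.

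Your alternative via Proposition~\ref{r2:prop:incidence-top-component} can also be made to work. Append a cancelling pair $(\xi,\xi)$ with $\xi$ free. This puts an $(e-1)$-dimensional family inside $\mathcal V_{\bfs}\cap\{x_{e-1}=x_e\}$. That set is a proper closed subset of the irreducible $(e-1)$-dimensional variety $\mathcal V_{\bfs}$, proper because $\mathcal X_{\bfs}\neq\varnothing$, which is a contradiction. As written, though, this is only a pointer.
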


The proof is the explicit local calculation in
Appendix~\ref{r2:app:local-sign}.  In particular, the appendix constructs the
two-escaping-root branch rather than assuming its existence, identifies its
local quadratic extension with the global sign field, and proves the odd
polar order stated in the lemma.  Only the resulting ramification statement
is used below.

\begin{proposition}[Different lower projections]
\label{r2:prop:different-projections}
Let \(e\geq4\), and let \(\bm a^\T,\bm b^\T\) be nonzero syndromes, neither
of which is one-column.  If
\[
        \pi_e(\bm a^\T)\neq\pi_e(\bm b^\T),
\]
then
\[
        L_{\bm a}\cap L_{\bm b}=K.
\]
Equivalently, the two completion covers have a geometrically connected
fiber product.
\end{proposition}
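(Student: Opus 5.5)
Since \(L_{\bm a}/K\) and \(L_{\bm b}/K\) are both Galois with group \(S_e\), the intersection \(M\eqdef L_{\bm a}\cap L_{\bm b}\) is Galois over \(K\) and corresponds to a normal subgroup of \(S_e\) on each side. For \(e\geq5\) the normal subgroups of \(S_e\) are \(1,A_e,S_e\); for \(e=4\) the Klein group \(V_4\) also occurs. We therefore aim to show \(M=K\) by ruling out the nontrivial quotients. In every case except \(M=K\), the field \(M\) contains the unique quadratic subfield of each side. For \(S_e\), \(e\geq5\), this holds because any nontrivial normal subgroup \(N\neq1\) lies in \(A_e\) or is all of \(S_e\), so \(L^N\supseteq Q\) or \(L^N=K\). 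For \(e=4\), \(S_4/V_4\simeq S_3\) has the sign quotient, so again \(M\neq K\) forces \(Q\subseteq M\). Hence it suffices to prove \(Q_{\bm a}\neq Q_{\bm b}\). Indeed, \(M\) corresponds to the same quotient group on both sides; this quotient has a unique index-two subgroup, so \(Q_{\bm a}=Q_{\bm b}\) whenever \(M\neq K\).

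To separate the sign fields, I will use Lemma~\ref{r2:lem:generic-sign-ramification}. It says that \(Q_{\bm a}/K\) ramifies at the discrete valuation \(v_{\bm a}\) given by the generic point of \(\Gamma_{\bm a}\subseteq\widetilde D_{\bm a}\), a prime divisor of the core space \(\mathbb A^{e-1}\). I will then show that \(Q_{\bm b}\) is unramified at \(v_{\bm a}\), and this is where the hypothesis \(\pi_e(\bm a^\T)\neq\pi_e(\bm b^\T)\) enters. By Lemma~\ref{r2:lem:universal-hankel}(4), the translates \(D_{\bm a}\) and \(D_{\bm b}\) are distinct irreducible divisors of lower-moment space, since the difference of the lower projections is a nonzero translation. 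Therefore \(\Gamma_{\bm a}\), which dominates \(D_{\bm a}\), is not contained in \(\widetilde D_{\bm b}=\mathbf m_e^{-1}(D_{\bm b})\). It is also generically disjoint from \(\mathcal D_{\mathrm{coll}}\), by the construction preceding the lemma.

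Away from \(\widetilde D_{\bm b}\), the determinant \(\det\mathscr H_e(\bm b,\mathbf z)\) is a unit at \(v_{\bm a}\). Proposition~\ref{r2:prop:completion-cover} then makes the completion cover of \(\bm b\) finite étale over a neighbourhood of the generic point of \(\Gamma_{\bm a}\), so \(L_{\bm b}\), and with it \(Q_{\bm b}\), is unramified at \(v_{\bm a}\). One point needs care. The completion cover is étale over \(U_{\bm b}=\{\det\mathscr H_e(\bm b,\mathbf z)\neq0\}\) in core coordinates, and this locus is exactly the complement of \(\widetilde D_{\bm b}\), since \(\det\mathscr H_e=\Delta_e^2\) evaluated at \(\pi_e(\bm b^\T)+\mathbf m_e(\mathbf z)\). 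So the only input required is that \(\Delta_e(\pi_e(\bm b^\T)+\mathbf p)\) does not vanish identically on \(D_{\bm a}\), which is exactly the distinctness of the two irreducible divisors.

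Combining these, \(Q_{\bm a}\) is ramified at \(v_{\bm a}\) while \(Q_{\bm b}\) is not, so \(Q_{\bm a}\neq Q_{\bm b}\), and hence \(M=K\). Finally, linear disjointness of the two Galois extensions is equivalent to \(\Gal(L_{\bm a}L_{\bm b}/K)=S_e\times S_e\). This is in turn equivalent to the fiber product over \(U_{\bm a,\bm b}\) being connected, hence geometrically integral, since the base change is étale and the generic fiber is \(\operatorname{Spec}(L_{\bm a}\otimes_K L_{\bm b})\). The main obstacle is the small case \(e=4\), with its extra normal subgroup \(V_4\). I will handle it through the same observation that any nontrivial common quotient of \(S_4\) surjects onto \(\F_2\) via the sign. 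This includes the quotient \(S_3\), so equal quadratic subfields are again forced.
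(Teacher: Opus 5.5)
Your proposal is correct and follows essentially the same route as the paper. You show \(Q_{\bm a}\neq Q_{\bm b}\) from two facts: the translation stabilizer of the Hankel divisor is trivial, and \(Q_{\bm a}\) ramifies along \(\Gamma_{\bm a}\) while the \(\bm b\)-cover is \'{e}tale there. You then note that every proper normal subgroup of \(S_e\), including \(V_4\) when \(e=4\), lies in \(A_e\), so a nontrivial \(M\) would contain both sign fields as its unique quadratic subfield. Your explicit check that \(U_{\bm b}\) is exactly the complement of \(\widetilde D_{\bm b}\), because the Hankel determinant involves only the lower residual moments, usefully spells out the paper's brief remark that the second Artin--Schreier equation is \'{e}tale at that valuation.
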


\begin{proof}
By Lemma~\ref{r2:lem:universal-hankel}, the two translated irreducible Hankel
divisors are distinct.  At the generic point of
\(\Gamma_{\bm a}\), the second pulled-back divisor is absent and, by the
choice made above, so is \(\mathcal D_{\mathrm{coll}}\).
Lemma~\ref{r2:lem:generic-sign-ramification}
shows that \(Q_{\bm a}\) is ramified there, whereas the second
Artin--Schreier equation is \'{e}tale; hence
\[
        Q_{\bm a}\neq Q_{\bm b}.
\]
Put
\[
        M\eqdef L_{\bm a}\cap L_{\bm b}.
\]
Since both completion fields are Galois over \(K\), so is \(M/K\).  Suppose
that \(M\neq K\).  Restriction from either completion field gives a
surjection
\[
        S_e\longrightarrow\Gal(M/K)
\]
with a proper normal kernel.  For \(e\geq5\), the normal subgroups of
\(S_e\) are \(1,A_e,S_e\), so the possible nontrivial quotients are
\(S_e\) and \(C_2\).  For \(e=4\), the normal subgroups are
\[
        1,\ V_4,\ A_4,\ S_4,
\]
and the nontrivial quotients are \(S_4,S_3,C_2\), respectively.  Every
proper kernel in these lists is contained in \(A_e\).  Galois
correspondence therefore gives
\[
        Q_{\bm a}\subseteq M,
        \qquad
        Q_{\bm b}\subseteq M.
\]
Each possible group \(\Gal(M/K)\) has a unique subgroup of index two; in
the exceptional quotient \(S_4/V_4\simeq S_3\), it is
\(A_4/V_4\simeq A_3\).  Hence \(M\) has a unique quadratic subfield, forcing
\(Q_{\bm a}=Q_{\bm b}\), contrary to the preceding ramification argument.
Thus \(M=K\).
\end{proof}

\subsection{The pure highest-coordinate obstruction}
\label{r2:subsec:pure-highest}

Proposition~\ref{r2:prop:different-projections} settles syndrome planes on which
\(\pi_e\) has positive rank.  It does not apply when \(\pi_e(W)=0\), because
then all three nonzero syndromes have the same lower projection.  Recall from
the preliminary reduction recalled above that such a plane is pure
highest-coordinate, with targets
\[
        \bfs_\gamma^\T=(0,\ldots,0,\gamma)^\T,
\]
and recall that \(k_e=2e-3\).  In this remaining
case the highest-moment defect, rather than the translated divisor, must
distinguish the two sign fields.  We therefore prove that a basis with unequal
\(k_e\)-th powers gives independent completion fields.

\begin{lemma}[The pure-highest defect]
\label{r2:lem:pure-highest-defect}
Let \(e\geq4\).  For a pure highest-coordinate target \(\bfs_\gamma^\T\), the defect at the
generic Hankel valuation is
\[
        \eta_{\bfs_\gamma}=\gamma.
\]
\end{lemma}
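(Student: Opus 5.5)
The plan is to compute \(\eta_{\bfs_\gamma}\) directly from the structure of the generic point of \(\Gamma_{\bfs_\gamma}\). The key step is a small BCH-distance argument showing that, at that point, the common core already \emph{is} the \(e-2\) roots \(r_i\) together with one zero locator.

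\textbf{Setup.} For \(\bfs=\bfs_\gamma\) we have \(\pi_e(\bfs_\gamma^\T)=0\). Hence the lower residual moments are just the lower core moments,
\[
 p_{2j-1}(\bfs_\gamma,\mathbf z)=\sum_{i=1}^{e-1}z_i^{2j-1}\quad(1\leq j\leq e-1),
 \qquad
 p_{2e-1}(\bfs_\gamma,\mathbf z)=\gamma+\sum_{i=1}^{e-1}z_i^{2e-1}.
\]
The translated divisor is therefore \(D_{\bfs_\gamma}=V(\Delta_e)\) itself. Substituting into the definition of \(\eta_{\bfs}\) gives
\[
 \eta_{\bfs_\gamma}=\gamma+\sum_{i=1}^{e-1}z_i^{2e-1}+\sum_{i=1}^{e-2}r_i^{2e-1}.
\]
So it suffices to show that, at the generic point of \(\Gamma_{\bfs_\gamma}\), the two power sums of degree \(2e-1\) agree.

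\textbf{Identifying the core with the roots.} Work over an algebraic closure of \(\kappa(\Gamma_{\bfs_\gamma})\). By construction of \(\Gamma_{\bfs_\gamma}\), the generic point avoids \(\mathcal D_{\mathrm{coll}}\), so \(z_1,\ldots,z_{e-1}\) are pairwise distinct. The roots \(r_1,\ldots,r_{e-2}\) are also pairwise distinct, and by definition they satisfy
\[
 \sum_i z_i^{2j-1}=\sum_i r_i^{2j-1}\qquad(1\leq j\leq e-1).
\]
Consider the reduced supports \(E(\mathbf z)\) and \(E(\mathbf r)\); zero entries drop out because \(h(0)=0\). Let \(S=E(\mathbf z)\,\triangle\,E(\mathbf r)\), a set of nonzero elements with \(|S|\leq(e-1)+(e-2)=2e-3\). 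The odd power sums of \(S\) of orders \(1,3,\ldots,2e-3\) vanish. The Frobenius relations \(p_{2j}=p_j^2\) then make all power sums of orders \(1,\ldots,2e-2\) vanish on \(S\). Since \(|S|\leq2e-2\), the Vandermonde matrix \((x^j)_{x\in S,\,1\leq j\leq|S|}\) is nonsingular for distinct nonzero \(x\). Hence \(S=\emptyset\), that is, \(E(\mathbf z)=E(\mathbf r)\). This is the BCH designed-distance bound of the first paragraph applied in the ``\(e-1\)'' code.

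Now \(|E(\mathbf z)|\in\{e-2,e-1\}\), according as some \(z_i\) is zero or not, while \(|E(\mathbf r)|\leq e-2\). Therefore some \(z_{i_0}=0\), and after reindexing the remaining \(z_i\) coincide with the \(r_i\). In particular all \(r_i\) are nonzero. Any odd power sum then agrees between the two tuples, so
\[
 \sum_{i=1}^{e-1}z_i^{2e-1}=\sum_{i=1}^{e-2}r_i^{2e-1},
\]
which yields \(\eta_{\bfs_\gamma}=\gamma\).

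\textbf{Main obstacle.} The delicate point is justifying the inputs of this identification at the generic point of \(\Gamma_{\bfs_\gamma}\). First, the core entries must be distinct; this holds because \(\Gamma_{\bfs_\gamma}\) was chosen generically disjoint from \(\mathcal D_{\mathrm{coll}}\). Second, the \(r_i\) must be distinct. This follows from Lemma~\ref{r2:lem:universal-hankel}(2): the generic point of \(V(\Delta_e)\) is the image of generic \(e-2\) roots, and the core Hankel determinant for \(e-2\) points is nonzero there. I would verify the second point by noting that the reduction argument of Lemma~\ref{r2:lem:hankel-reconstruction} applied with \(e-2\) roots reconstructs them uniquely, so the \(r_i\) are well defined and do not depend on the branch choice.
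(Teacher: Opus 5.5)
Your proof is correct and follows the paper's argument. The paper likewise combines the \(e-1\) core locators with the \(e-2\) residual roots and cancels repeated values, getting a reduced support of size at most \(2e-3\) whose moments of degrees \(1,\ldots,2e-2\) all vanish; the Vandermonde system then forces that support to be empty, so the degree-\(2e-1\) contributions cancel and \(\eta_{\bfs_\gamma}=\gamma\). Your further deduction that one core entry is zero and the others coincide with the \(r_i\) (and the accompanying distinctness discussion) is correct but not needed, since \(E(\mathbf z)=E(\mathbf r)\) already equates all odd power sums.
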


The proof of Lemma~\ref{r2:lem:pure-highest-defect} is given in
Subsection~\ref{r2:subsec:pure-highest-defect}.  It combines the common roots
with the finite residual roots and shows that their reduced support is empty,
so their contribution to the highest moment is zero.
The chosen Hankel valuation detects the pure parameter through the exponent
\(2e-3\); only the resulting distinction of Artin--Schreier classes, not the
particular representative \(\eta_{\bfs_\gamma}=\gamma\), is intrinsic.

\begin{proposition}[Pure highest-coordinate planes]
\label{r2:prop:pure-highest-plane}
Let \(e\geq4\), and let
\(\bfs_{\gamma_1}^\T,\bfs_{\gamma_2}^\T\) be a basis of a pure
highest-coordinate syndrome plane.  If
\[
 \gamma_1^{k_e}\neq\gamma_2^{k_e},
\]
then
\[
 L_{\bfs_{\gamma_1}}\cap L_{\bfs_{\gamma_2}}=K.
\]
\end{proposition}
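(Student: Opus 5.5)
The plan follows the proof of Proposition~\ref{r2:prop:different-projections}, with the defect taking over the role that the translated divisor played there. Since \(\pi_e(\bfs_{\gamma_1}^\T)=\pi_e(\bfs_{\gamma_2}^\T)=0\), both targets have the same translated Hankel divisor \(D=V(\Delta_e(\mathbf p))\). Both therefore share the same pulled-back component \(\Gamma\eqdef\Gamma_{\bfs_{\gamma_1}}=\Gamma_{\bfs_{\gamma_2}}\), together with the same generic residual roots \(r_1,\ldots,r_{e-2}\) and the same uniformizer \(\varpi=\Delta_e(\mathbf m_e(\mathbf z))\). So at the single valuation \(v_\Gamma\) I would compare the two sign fields \(Q_{\bfs_{\gamma_1}}\) and \(Q_{\bfs_{\gamma_2}}\) directly.

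By Lemma~\ref{r2:lem:generic-sign-ramification} and Lemma~\ref{r2:lem:pure-highest-defect}, after the same étale extension of the local ring at \(\Gamma\), the Artin--Schreier class of \(Q_{\bfs_{\gamma_i}}\) has reduced leading polar part
\[
 \gamma_i^{2e-3}D_R^{2e-1}\varpi^{-(2e-1)},\qquad i=1,2.
\]
Suppose \(Q_{\bfs_{\gamma_1}}=Q_{\bfs_{\gamma_2}}\). Then the two Artin--Schreier classes coincide. Their sum would then be \(\wp\)-trivial, hence have no surviving odd polar term at \(v_\Gamma\). The sum of the two leading polar parts is
\[
 (\gamma_1^{k_e}+\gamma_2^{k_e})D_R^{2e-1}\varpi^{-(2e-1)}.
\]
Its order \(-(2e-1)\) is odd. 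Its coefficient is a nonzero constant times a unit, because \(D_R\neq0\) (the \(r_i\) are distinct) and \(\gamma_1^{k_e}\neq\gamma_2^{k_e}\). Since the exponent is odd, no \(\wp(g)=g^2+g\) can cancel this term. Therefore the sum is ramified, the two sign fields differ, and \(Q_{\bfs_{\gamma_1}}\neq Q_{\bfs_{\gamma_2}}\).

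The group-theoretic conclusion then repeats the end of the proof of Proposition~\ref{r2:prop:different-projections} verbatim. Put \(M=L_{\bfs_{\gamma_1}}\cap L_{\bfs_{\gamma_2}}\); it is Galois over \(K\) and is a quotient of \(S_e\). If \(M\neq K\), the kernel is a proper normal subgroup, hence contained in \(A_e\) (for \(e=4\) this includes the \(V_4\) case). Then \(M\) contains both quadratic sign fields. But every nontrivial quotient \(S_e,S_3,C_2\) has a unique index-two subgroup, so \(M\) has a unique quadratic subfield, which forces \(Q_{\bfs_{\gamma_1}}=Q_{\bfs_{\gamma_2}}\), a contradiction. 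Hence \(M=K\).

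The main obstacle is the comparison step, specifically making sure the two classes are computed on a common local model. Both Lemma~\ref{r2:lem:generic-sign-ramification} expansions must use the same branch and the same normalization of \(r_i\), \(D_R\), and \(\varpi\); the only target-dependent input may be the defect. This is plausible because the lower residual moments are literally equal for the two targets. Still, I would check that the reduced leading polar part is the full odd-order term of the class modulo \(\wp\)-coboundaries, and not only up to a unit that might depend on \(\gamma\). If needed, I would fall back on the fact that an Artin--Schreier class with a term of odd negative order \(-n\) determines its leading coefficient modulo \((2e-1)\)-compatible residue-field squares. Equality of the two fields would then force \((\gamma_1/\gamma_2)^{k_e}\) to be trivial in that sense, and the hypothesis \(\gamma_1^{k_e}\neq\gamma_2^{k_e}\) should exclude this.
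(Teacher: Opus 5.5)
Your proof is correct and matches the paper's argument. The paper likewise fixes one dominating component \(\Gamma\) over the common Hankel divisor, which is the same for both targets because both lower projections vanish. It then combines Lemma~\ref{r2:lem:generic-sign-ramification} with Lemma~\ref{r2:lem:pure-highest-defect} to get the nonzero odd-order leading coefficient \((\gamma_1^{k_e}+\gamma_2^{k_e})D_R^{2e-1}\) for the sum of the two Artin--Schreier classes, and finishes with the normal-subgroup quotient argument from Proposition~\ref{r2:prop:different-projections}. Your closing worry is already settled by this setup: both expansions use the same \(\varpi\) and \(D_R\) in the same local field, so the leading coefficient of the sum is exact and the ``modulo squares'' fallback is not needed.
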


\begin{proof}
Because both lower projections vanish, the two translated lower Hankel
divisors coincide.  Choose the same dominating component \(\Gamma\) above
that divisor for both sign calculations.  Along this common component,
Lemma~\ref{r2:lem:generic-sign-ramification}
and Lemma~\ref{r2:lem:pure-highest-defect} give reduced leading terms whose
coefficient difference is
\[
 (\gamma_1^{k_e}+\gamma_2^{k_e})D_R^{2e-1}.
\]
It is nonzero, and the pole order \(2e-1\) is odd.  The difference of the
two Artin--Schreier classes is therefore not a coboundary, so the sign fields
are distinct.  The quotient argument in the proof of
Proposition~\ref{r2:prop:different-projections} now gives the asserted trivial
intersection.
\end{proof}

\subsection{From a connected fiber product to a common support}
\label{r2:subsec:return-to-support}

The field-comparison step is now complete, so we return to the support
problem.  Suppose that the completion fields of two linearly independent
non-one-column syndromes are linearly disjoint.  Their one-target covers then
combine into a geometrically integral fiber product.  A rational point on its
closure is exactly a common core together with two completing tuples, and
Lemma~\ref{lem:common-core-certificate} converts that point into
\[
 \mu_e\bigl(\operatorname{Span}_{\F_2}
 \{\bm a^\T,\bm b^\T\}\bigr)\leq3e-1.
\]

Let \(e\geq2\), and let \(\bm a^\T,\bm b^\T\) be linearly independent
non-one-column syndromes.  Define the simultaneous
common-core incidence variety by
\[
 \mathcal Z_{\bm a,\bm b}\eqdef
 \left\{(\mathbf z,\mathbf x,\mathbf y)\in\mathbb A_F^{3e-1}:
 \begin{aligned}
 \sum_{i=1}^{e-1}h_e(z_i)+\sum_{j=1}^{e}h_e(x_j)&=\bm a^\T,\\
 \sum_{i=1}^{e-1}h_e(z_i)+\sum_{j=1}^{e}h_e(y_j)&=\bm b^\T
 \end{aligned}
 \right\}.
\]
It has \(3e-1\) locator variables and \(2e\) equations.  Recall that
\(U_{\bm a,\bm b}=U_{\bm a}\cap U_{\bm b}\).
After base change to \(\overline F\), the restriction of
\(\mathcal Z_{\bm a,\bm b}\) over this common discriminant-open is precisely
\[
 \left(\mathcal X_{\bm a}\times_{U_{\bm a}}U_{\bm a,\bm b}\right)
 \times_{U_{\bm a,\bm b}}
 \left(\mathcal X_{\bm b}\times_{U_{\bm b}}U_{\bm a,\bm b}\right).
\]
If the two completion fields are linearly disjoint, this fiber product is
geometrically integral and finite \'{e}tale over \(U_{\bm a,\bm b}\), hence
has dimension \(e-1\).  Let \(\mathcal C_{\bm a,\bm b}\) be its closure in
\(\mathcal Z_{\bm a,\bm b}\times_F\overline F\).  This closure is an irreducible
component: any component containing it meets the displayed open set, where
the projection is finite over an \((e-1)\)-dimensional base, and therefore
has dimension at most \(e-1\).  Since
\(\dim\mathcal C_{\bm a,\bm b}=e-1\), strict containment is impossible.
Moreover, every component dominating the common-core base meets this open
set and induces a component of the integral fiber product.  Thus
\(\mathcal C_{\bm a,\bm b}\) is the unique component dominating the base.  It
is consequently Frobenius-stable, defined over \(F\), and absolutely
irreducible of dimension \(e-1\).

Recall that \(D_e=\prod_{j=1}^{e}(2j-1)\).  At order two the ordinary
B\'ezout term is the active term in the definition of \(B_{e,2}\), so
\[
 B_{e,2}=D_e^2.
\]
Indeed, put \(x=D_e/e!\geq1\).  For \(e=2\) the comparison is direct, while
for \(e\geq3\) one has
\((1+2(e-1)x)^{e-1}\geq x^2\); hence the second term in the minimum is at
least \(D_e^2\).  Thus \(B_{e,2}\) is exactly the product of the degrees of
the two sets of odd-moment equations.

\begin{proposition}[The common-core point criterion]
\label{r2:prop:common-core-point}
Let \(e\geq2\), and let \(\bm a^\T,\bm b^\T\) be linearly independent
non-one-column syndromes.  Suppose that
\[
        L_{\bm a}\cap L_{\bm b}=K.
\]
If
\[
        q>\max\{2eB_{e,2}^2,\,2B_{e,2}^4\},
\]
then
\[
        \mu_e\!\left(\Span_{\F_2}\{\bm a^\T,\bm b^\T\}\right)\leq3e-1.
\]
\end{proposition}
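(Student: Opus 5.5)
The plan is to count \(F\)-rational points on the component \(\mathcal C_{\bm a,\bm b}\) constructed just above, and to show that at least one of them lies on the open part where \(\mathbf z\in U_{\bm a,\bm b}\). Such a point is a common core \(\mathbf z\in F^{e-1}\) together with completing tuples \(\mathbf x,\mathbf y\in F^e\) satisfying both families of odd-moment equations. Lemma~\ref{lem:common-core-certificate} with \(r=2\) then gives \(\mu_e\le 3e-1\) directly.

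By the discussion preceding the statement, linear disjointness makes \(\mathcal C_{\bm a,\bm b}\) absolutely irreducible of dimension \(e-1\) and defined over \(F\). Its degree, as a component of \(\mathcal Z_{\bm a,\bm b}\) cut out in \(\mathbb A^{3e-1}\) by \(2e\) equations of degrees \(1,3,\ldots,2e-1\) twice, is at most \(D_e^2=B_{e,2}\) by the refined B\'ezout inequality. I will then apply an effective Lang--Weil estimate with explicit constants, for instance Cafure--Matera: an absolutely irreducible \(F\)-variety of dimension \(n\) and degree \(\delta\) in affine space satisfies
\[
 \bigl|\#\mathcal C(F)-q^{n}\bigr|\le(\delta-1)(\delta-2)q^{n-1/2}+5\delta^{13/3}q^{n-1}.
\]
With \(n=e-1\) and \(\delta\le B_{e,2}\), this gives \(\#\mathcal C(F)\ge q^{e-1}-B_{e,2}^2q^{e-3/2}-5B_{e,2}^{13/3}q^{e-2}\). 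The hypothesis \(q>2B_{e,2}^4\) makes \(B_{e,2}^2q^{-1/2}<1/\sqrt2\), and the last term is handled the same way, so the count is at least a constant multiple of \(q^{e-1}\).

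Next I must discard the bad points. These are points of \(\mathcal C\) over the closed set \(\{\det\mathscr H_e(\bm a,\mathbf z)\det\mathscr H_e(\bm b,\mathbf z)=0\}\). Intersecting \(\mathcal C\) with this hypersurface gives a variety of dimension \(\le e-2\). The Hankel determinants are squares \(\Delta_e(\cdot)^2\), and \(\Delta_e\) has degree at most \(e(e-1)/2\) in the locators (it is a product of pairwise sums). So by B\'ezout the degree of the intersection is at most \(B_{e,2}\cdot e(e-1)\le B_{e,2}\cdot 2e\) up to constants. The trivial bound (number of points \(\le\) degree \(\times q^{\dim}\)) then bounds the bad set by roughly \(2eB_{e,2}q^{e-2}\). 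The hypothesis \(q>2eB_{e,2}^2\) is designed to make this smaller than the main term. Alternatively, one could avoid discarding points at all: every point of \(\mathcal Z_{\bm a,\bm b}(F)\) already certifies the representation, and reduced support only decreases the count.

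The main obstacle is matching the constants exactly to the stated threshold \(\max\{2eB_{e,2}^2,2B_{e,2}^4\}\). This is most cleanly done by citing the precise effective estimate the paper presumably adopts. I would try the Cafure--Matera form in its simplified version \(\#\mathcal C(F)\ge q^{n}-(\delta-1)(\delta-2)q^{n-1/2}-5\delta^{13/3}q^{n-1}\), valid for \(q>2(n+1)\delta^2\), and check that \(q>2B_{e,2}^4\) absorbs both error terms. A secondary subtlety is the case \(e=2,3\), where the engine's cover statements are deferred to Lemma~\ref{r2:lem:boundary-completion-covers}. The present proposition, however, only assumes linear disjointness, so the counting argument is uniform in \(e\).
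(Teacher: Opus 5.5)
Your skeleton matches the paper's proof: B\'ezout bounds \(\deg\mathcal C_{\bm a,\bm b}\) by \(D_e^2=B_{e,2}\), a Cafure--Matera statement produces an \(F\)-point on \(\mathcal C_{\bm a,\bm b}\), and Lemma~\ref{lem:common-core-certificate} with \(r=2\) finishes. What fails is your derivation of that point from the explicit counting estimate. From \(q>2\delta^4\) you do get \(\delta^2q^{-1/2}<1/\sqrt2\). The last error term, however, is not ``handled the same way'': \(5\delta^{13/3}q^{n-1}/q^{n}=5\delta^{13/3}/q\), and \(q>2\delta^4\) only bounds this by \(\tfrac52\delta^{1/3}\geq\tfrac52\). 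Near the threshold the ratio really is about that large. Positivity of \(q^n-(\delta-1)(\delta-2)q^{n-1/2}-5\delta^{13/3}q^{n-1}\) would need \(q\) of order \(\delta^{13/3}\), which the stated hypothesis does not provide.

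So under \(q>\max\{2eB_{e,2}^2,2B_{e,2}^4\}\) you do not obtain a count of order \(q^{e-1}\). Your step of discarding the points over \(\det\mathscr H_e(\bm a,\mathbf z)\det\mathscr H_e(\bm b,\mathbf z)=0\) therefore cannot be carried out as planned, because there is no main term to beat. Your reading of \(2eB_{e,2}^2\) as the budget for those bad points is also off. It is the validity hypothesis \(q>2(r+1)\delta^2\) of the Cafure--Matera estimate with \(r=e-1\).

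The repair is what the paper does. Invoke Cafure--Matera's Corollary~7.4 directly as an existence statement: it gives an \(F\)-rational point on an absolutely irreducible affine \(F\)-variety of dimension \(r=e-1>0\) and degree \(\delta\leq B_{e,2}\) under exactly the displayed threshold. Then promote your ``alternatively'' remark to the actual argument. Any \(F\)-point of \(\mathcal C_{\bm a,\bm b}\subseteq\mathcal Z_{\bm a,\bm b}\) satisfies both families of odd-moment equations, and zero or coinciding locators only shrink the reduced support. So no point needs to be discarded and no lower bound on the number of points is required.
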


\begin{proof}
The component \(\mathcal C_{\bm a,\bm b}\) lies in the simultaneous variety,
which is cut out by two equations of each degree
\[
        1,3,\ldots,2e-1.
\]
Projective B\'ezout therefore bounds its degree \(\delta\) by
\[
        \delta\leq
        \left(\prod_{j=1}^{e}(2j-1)\right)^2
        =B_{e,2}.
\]
Cafure--Matera's effective rational-point criterion
\cite[Corollary~7.4]{CafureMatera2006} gives an \(F\)-rational point on an
absolutely irreducible affine variety of positive dimension \(e-1\) and
degree at most \(B_{e,2}\) under the displayed hypothesis.  Such a point gives
\(e-1\) common locators and two groups of \(e\) completing locators that
satisfy the equations in Lemma~\ref{lem:common-core-certificate} for
\(r=2\).  That lemma gives the asserted bound.
\end{proof}

The ramification argument above was proved for \(e\geq4\).  To apply the
common-core plane criterion for every \(e\geq2\), it remains to verify
directly that the same field-disjointness conclusions hold for \(e=2,3\).
\begin{lemma}[The boundary completion covers]
\label{r2:lem:boundary-completion-covers}
For \(e=2\) and \(e=3\), the conclusions of
Propositions~\ref{r2:prop:different-projections} and
~\ref{r2:prop:pure-highest-plane} remain valid.  In the pure-highest case for
\(e=3\), the latter conclusion requires a basis
\(\bfs_{\gamma_1}^{\T},\bfs_{\gamma_2}^{\T}\) with
\(\gamma_1^3\neq\gamma_2^3\).
\end{lemma}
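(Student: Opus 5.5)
For \(e=2,3\) I would replace the universal Hankel-divisor argument with explicit low-degree computations, using the same sign-field mechanism.

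\textbf{Case \(e=2\).}
The common core is a single locator \(z\). The completion polynomial is the quadratic
\[
 t^2+p_1t+\frac{p_1^3+p_3}{p_1}, \qquad p_1=\sigma_1+z,\quad p_3=\sigma_3+z^3 .
\]
Here \(S_2=C_2\), so \(L_{\bfs}=Q_{\bfs}\) is the Artin--Schreier extension
\[
 w^2+w=\frac{p_1^3+p_3}{p_1^3}=1+\frac{p_3}{p_1^3}.
\]
For two targets \(\bm a,\bm b\), linear disjointness is equivalent to \(Q_{\bm a}\neq Q_{\bm b}\). I would show that the difference of the two Artin--Schreier classes is not of the form \(g^2+g\) in \(\overline F(z)\), by comparing polar parts.

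\begin{itemize}
\item If \(\pi_2(\bm a)=a_1\neq b_1=\pi_2(\bm b)\), the class of \(\bm a\) has a pole at \(z=a_1\) that the class of \(\bm b\) lacks. The numerator there is \(a_3+a_1^3\neq0\), since \(\bm a\) is not one-column. The pole has order \(3\), which is odd, so no coboundary can cancel it.
\item In the pure case \(\sigma_1=0\), the class is \(1+(\gamma+z^3)/z^3=\gamma z^{-3}\). Two values \(\gamma_1\neq\gamma_2\) give the difference \((\gamma_1+\gamma_2)z^{-3}\), which has odd pole order and so is not a coboundary. Since \(k_2=1\), the hypothesis \(\gamma_1^{k_2}\neq\gamma_2^{k_2}\) is automatic.
\end{itemize}

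\textbf{Case \(e=3\).}
The core is a pair \((z_1,z_2)\) and the Galois group is \(S_3\). As in the proof of Proposition~\ref{r2:prop:different-projections}, it suffices to prove \(Q_{\bm a}\neq Q_{\bm b}\). The only nontrivial proper quotient of \(S_3\) is \(C_2\), and a nontrivial common Galois subfield \(M\) would have to be either this sign field or all of \(L\). In the latter case \(L_{\bm a}=L_{\bm b}\), and then the sign fields agree anyway.

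I would write the cubic completion polynomial explicitly via Newton's identities, as in the four-error proof of Proposition~\ref{r2:prop:four-upper-twelve}. In residual moments \((s,u,v)=(p_1,p_3,p_5)\) the Hankel determinant is \(\Delta_3^2=(s^3+u)^2\). The polynomial becomes \(t^3+st^2+e_2t+e_3\) with
\[
 e_3=u+s^3+se_2, \qquad (s^3+u)\,e_2=v+us^2+\cdots .
\]
The sign field of a cubic in characteristic two is the Berlekamp discriminant extension. I would compute its Artin--Schreier class and localize at the divisor \(\Delta_3(\pi_3(\bfs)+\mathbf m_3(\mathbf z))=0\). That divisor is irreducible in \((z_1,z_2)\)-space because it is linear in \(p_3\), and it is distinct for distinct lower projections. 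This mimics Lemma~\ref{r2:lem:generic-sign-ramification} by hand. The expected leading polar term is \(\eta^{3}D_R^{5}\varpi^{-5}\) with \(e-2=1\) root, so \(D_R=1\), and the pole order \(5\) is odd.

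\begin{itemize}
\item For different lower projections, one class ramifies along \(\Gamma_{\bm a}\) and the other does not, so the sign fields differ.
\item In the pure case the divisor is \(s^3+u=0\) with \(s=z_1+z_2\) and \(u=z_1^3+z_2^3\), i.e.\ \(z_1z_2(z_1+z_2)=0\). I would use the component \(z_2=0\), on which the defect equals \(\gamma\), as in Lemma~\ref{r2:lem:pure-highest-defect}. The two leading coefficients then differ by a nonzero multiple of \(\gamma_1^3+\gamma_2^3\), which forces the hypothesis \(\gamma_1^3\neq\gamma_2^3\).
\end{itemize}

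\textbf{Main obstacle.}
The hardest step is the \(e=3\) local expansion: checking that the Berlekamp class of the cubic, after removing Artin--Schreier coboundaries, really has reduced polar part with odd order \(5\) and coefficient \(\gamma^3\) (respectively, is pole-free along the other target's divisor). I would carry this out directly in a local parameter \(\varpi=z_2\) on the chosen component, with \(z_1\) generic. The transversality issue is that this component must not lie in the collision divisor \(z_1=z_2\), which I would check directly.
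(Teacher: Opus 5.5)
Your $e=2$ argument is complete and is the paper's argument written out explicitly. The class $1+p_3/p_1^3$ with $\varpi=\sigma_1+z$ has leading polar part $(\sigma_3+\sigma_1^3)\varpi^{-3}$, which is exactly what the paper extracts from the chart $\Psi$.

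For $e=3$ your route is genuinely different and more elementary. The paper specializes the two-escaping-root chart of Appendix~\ref{r2:app:local-sign}, which then needs a separate argument identifying the local quadratic with the global sign field. You instead use Berlekamp's invariant, which represents $Q_{\bfs}$ directly, so that identification step disappears.

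The step you call the main obstacle is in fact short. Put $\Delta\eqdef s^3+u$ and $N\eqdef v+us^2$. Newton's identities give exactly $\Delta e_2=N$ (there are no further terms) and $e_3=\Delta+sN/\Delta$. The cubic Berlekamp formula then collapses to
\[
 1+\frac{sN}{\Delta^2}+\frac{s^3}{\Delta}+\frac{s^4N}{\Delta^3}+\frac{N^3}{\Delta^5}.
\]
On $\Delta=0$ one has $u=s^3$, so $N\equiv v+s^5=\eta_{\bfs}$. Consequently:
\begin{itemize}
\item Along any component on which the pulled-back $\Delta$ is a uniformizer, the leading polar term is $\eta_{\bfs}^3\Delta^{-5}$, and every other polar term has order at most three.
\item The class of another target has only powers of its own $\Delta$ in its denominator.
\item In the pure case, on $z_2=0$ one has $N=\gamma+ab^2\equiv\gamma$, with $a=z_1+z_2$ and $b=z_1z_2$. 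In the uniformizer $z_2$ this gives leading coefficient $(\gamma_1^3+\gamma_2^3)z_1^{-10}$, as you predicted.
\end{itemize}

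Two points still need repair.

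\emph{The irreducibility claim is false.} Put $\epsilon\eqdef\sigma_1^3+\sigma_3$. The pulled-back Hankel polynomial is $\epsilon+\sigma_1^2a+\sigma_1a^2+ab$. It is irreducible in $(z_1,z_2)$-space when $\epsilon\neq0$, but when $\epsilon=0$ it factors as $(z_1+z_2)(z_1+\sigma_1)(z_2+\sigma_1)$. Your own pure case ($\sigma_1=0$) is an instance. Linearity in $p_3$ gives irreducibility only in lower-moment space. You must therefore work, as the paper does, on a non-collision component $\Gamma_{\bm a}$. Such a component maps dominantly onto $D_{\bm a}\neq D_{\bm b}$, hence is not contained in $\widetilde D_{\bm b}$.

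\emph{The nonvanishing of $\eta$ is missing, and this is a genuine omission.} You never show $\eta_{\bm a}\neq0$ on $\Gamma_{\bm a}$ for a mixed target. Yet this is the only place where the non-one-column hypothesis enters for $e=3$: for $\bm a^\T=h_3(c)$ the defect vanishes identically on $z_1=c$, and the order-five pole disappears. The fix is quick:
\begin{itemize}
\item If $\epsilon\neq0$, then on the curve $b=\epsilon/a+\sigma_1^2+\sigma_1a$ the function $\eta=\sigma_5+(\sigma_1+a)^5+a^5+a^3b+ab^2$ has the simple pole $\epsilon^2/a$, so it is nonzero.
\item If $\epsilon=0\neq\sigma_1$, then on the reduced non-collision component $z_1=\sigma_1$ one gets $\eta=\sigma_5+\sigma_1^5$. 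This is nonzero because the target is not $h_3(\sigma_1)$.
\item Alternatively, quote the dimension count at the start of Appendix~\ref{r2:app:local-sign}.
\end{itemize}
With these two additions your argument is complete.
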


\begin{proof}
Adopt the conventions \(\Delta_0=\Delta_1=1\) and interpret an empty
Vandermonde product as one.  For \(e=2\), direct calculation gives
\[
 \Delta_2=p_1.
\]
Its divisor is reduced and absolutely irreducible, and its additive
translation stabilizer is trivial.  The common-core moment map consists of
one locator and has no collision divisor.  For a non-one-column target
\(\bfs^{\T}=(\sigma_1,\sigma_3)^{\T}\), the highest-moment defect on the
translated divisor is
\[
 \eta_{\bfs}=\sigma_3+\sigma_1^3\neq0;
\]
equality would say precisely that \(\bfs^{\T}=h_2(\sigma_1)\).
With no finite residual roots, the compactified two-root chart is
\[
\Psi\colon\mathbb A_F^2\longrightarrow\mathbb A_F^2,
\qquad
\Psi(u,w)=(wu,w+w^3u^3),
\]
whose Jacobian determinant at \((0,\eta_{\bfs})\) is \(\eta_{\bfs}\).
Consequently, \(\varpi=\Delta_2\) is a uniformizer and the sign
Artin--Schreier class has reduced leading polar part
\[
 \eta_{\bfs}\varpi^{-3}.
\]
The splitting group is \(S_2\), and its sign field is the entire quadratic
splitting field.  Thus different lower projections give distinct branch
divisors and hence trivial intersection.  In a pure-highest plane,
\(\eta_{\bfs_\gamma}=\gamma\); the two elements of any binary basis are
distinct, so their sign fields are distinct as well.

For \(e=3\), one has
\[
 \Delta_3=p_1^3+p_3.
\]
For a translation by \((a_1,a_3)\),
\[
 \Delta_3(p_1+a_1,p_3+a_3)-\Delta_3(p_1,p_3)
 =a_1p_1^2+a_1^2p_1+a_1^3+a_3,
\]
so the translation stabilizer is again trivial.  At the generic Hankel
divisor there is one finite residual root, hence \(D_R=1\).  Here the
pulled-back uniformizer is
\[
 \varpi\eqdef
 \Delta_3\bigl(\pi_3(\bfs^\T)+\mathbf m_3(\mathbf z)\bigr).
\]
The nonvanishing argument for \(\eta_{\bfs}\) in
Appendix~\ref{r2:app:local-sign} applies unchanged.  The two-escaping-root chart
and the identification of its quadratic extension with the global sign field
specialize as well: here \(\Delta_1=1\), and the boundary Jacobian is
\(\eta_{\bfs}\neq0\).  The local sign class therefore has reduced leading
polar part
\[
 \eta_{\bfs}^{\,3}\varpi^{-5}.
\]
Since the only normal subgroups of \(S_3\) are
\(1,A_3,S_3\), distinct sign quadratic fields force the two splitting
fields to be linearly disjoint.  Different lower projections distinguish
the translated Hankel divisors.  For pure-highest targets,
\(\eta_{\bfs_\gamma}=\gamma\), so
\(\gamma_1^3\neq\gamma_2^3\) distinguishes the sign fields.  This proves
all the asserted boundary conclusions.
\end{proof}

All auxiliary statements needed for
Proposition~\ref{r2:prop:common-core-plane} have now been proved.  It remains to
choose a basis of the original syndrome plane, place it in one of the two
field-disjointness cases, and apply the common-core point criterion.

\begin{proof}[Proof of Proposition~\ref{r2:prop:common-core-plane}]
Let \(W\subseteq F^e\) be a two-dimensional binary syndrome space.  If two
distinct one-column syndromes span \(W\), the conclusion is immediate.

Suppose first that \(\pi_e|_W\) has positive rank.  If this immediate case
does not occur,
Lemma~\ref{r2:lem:projection-basis} gives a basis \(\bm a^\T,\bm b^\T\) of
non-one-column syndromes with different lower projections.  For \(e\geq4\),
Proposition~\ref{r2:prop:different-projections} gives
\(L_{\bm a}\cap L_{\bm b}=K\); for \(e=2,3\), the same conclusion follows
from Lemma~\ref{r2:lem:boundary-completion-covers}.  The field-size hypothesis in
Proposition~\ref{r2:prop:common-core-plane} is exactly the hypothesis of
Proposition~\ref{r2:prop:common-core-point}, which now gives
\[
        \mu_e\!\left(\Span_{\F_2}\{\bm a^\T,\bm b^\T\}\right)\leq3e-1.
\]

Suppose instead that \(\pi_e(W)=0\) and choose the basis
\(\bfs_{\gamma_1}^\T,\bfs_{\gamma_2}^\T\) from the second hypothesis.
For \(e\geq4\), Proposition~\ref{r2:prop:pure-highest-plane} gives the same
linear-disjointness conclusion, while for \(e=2,3\) it again follows from
Lemma~\ref{r2:lem:boundary-completion-covers}.  Proposition~\ref{r2:prop:common-core-point}
therefore gives the same support bound.  This proves both cases of the
criterion.
\end{proof}

This completes the proof of the second-order common-core criterion.
The higher-order argument may now invoke the completion cover, the sign
valuations, and the boundary cases by their labels in this section.

\section{Technical framework: simultaneous completion over a common core}
\label{sec:higher-framework}

We now prove Proposition~\ref{prop:higher-order-common-core}.  Recall that
\(h_e(x)=(x,x^3,\ldots,x^{2e-1})^\T\), that \(\pi_e\) deletes the highest
syndrome coordinate, and that a target is admissible when it is nonzero and
not one-column.  We write \(\mathbb A_F^n\) for affine \(n\)-space over
\(F\).

Fix linearly independent admissible targets
\(\bfs_1^\T,\ldots,\bfs_r^\T\), where
\(\bfs_i^\T=(s_{i,1},\ldots,s_{i,e})^\T\).  By
Lemma~\ref{lem:common-core-certificate}, it is enough to find an
\(F\)-rational point on the simultaneous incidence variety
\[
 \mathcal Z_{\bfs_1,\ldots,\bfs_r}
 \subseteq\mathbb A_F^{(r+1)e-1}
\]
with coordinates
\[
 z_1,\ldots,z_{e-1},
 \qquad x_{i,1},\ldots,x_{i,e}\quad(1\leq i\leq r),
\]
by the \(re\) equations
\[
 \sum_{u=1}^{e-1}z_u^{2j-1}
 +\sum_{\ell=1}^{e}x_{i,\ell}^{2j-1}
 =s_{i,j}
 \qquad(1\leq i\leq r,\ 1\leq j\leq e).
\]
Such a point is exactly one common tuple and \(r\) completing tuples of the
kind required in Lemma~\ref{lem:common-core-certificate}.

We obtain this point in four stages.  First, the one-target completion cover
developed in the preceding self-contained section assigns a quadratic sign
class to each basis target.  Second, the hypotheses of
Proposition~\ref{prop:higher-order-common-core} make these classes linearly
independent.  Third, this independence forces the simultaneous cover to have
the full product function field and hence a unique geometrically integral
component dominating the common-core coordinates.  Finally, the refined
degree bound and an effective finite-field point estimate produce an
\(F\)-rational point on that component.  At the end of the section we return
through Lemma~\ref{lem:common-core-certificate} to the BCH support bound.

\subsection{The one-target cover and its sign class}

Fix an algebraic closure \(\overline F\) of \(F\).  For an admissible target
\(\bfs^\T=(s_1,\ldots,s_e)^\T\), meaning a nonzero non-one-column
syndrome, recall the incidence variety \(\mathcal V_{\bfs}\), its
discriminant-open subset \(\mathcal X_{\bfs}\), and the completion cover
\[
 \rho_{\bfs}:\mathcal X_{\bfs}\longrightarrow U_{\bfs}
\]
from Subsection~\ref{r2:subsec:completion-covers}.  Proposition
~\ref{r2:prop:completion-cover} shows that, after base change to
\(\overline F\), this is a connected finite \(S_e\)-cover of degree \(e!\).

Put
\[
 K\eqdef\overline F(z_1,\ldots,z_{e-1}).
\]
Denote the function-field extension of this generic \(S_e\)-cover by
\(L_{\bfs}/K\).  Its quadratic sign field is
\[
 Q_{\bfs}\eqdef L_{\bfs}^{A_e}.
\]
Here \(S_e\) and \(A_e\) are the symmetric and alternating groups,
respectively.  For \(e=2\), this convention gives \(A_2=1\) and
\(Q_{\bfs}=L_{\bfs}\).  Writing
\[
 \wp(K)\eqdef\{u^2+u:u\in K\},
\]
denote by \([a_{\bfs}]\in K/\wp(K)\) the Artin--Schreier class defining
\(Q_{\bfs}/K\).

At a divisorial valuation of \(K\), one may replace a representative modulo
\(\wp(K)\) to cancel any removable even leading polar terms.  If the
remaining representative has a pole of odd order, we call it an
\emph{odd reduced pole}.  Such a pole certifies a nonzero class, because
\(v(u^2+u)=2v(u)\) whenever \(v(u)<0\).

We distinguish mixed targets by private lower-projection divisors.  For
general \(e\), pure targets are distinguished by the leading coefficients of
their \(k_e\)-th-power defects; for \(2\leq e\leq6\), the complete Berlekamp classes
give a stronger conclusion.  Independence of the resulting \(r\) quadratic
classes forces the joint function field to have full product degree
\((e!)^r\).  The corresponding dominating component of
\(\mathcal Z_{\bfs_1,\ldots,\bfs_r}\) is then geometrically integral, and an
effective point bound supplies the required \(F\)-rational point.

The following is the precise one-target input supplied by the preceding
second-order engine.  Its general calculation covers \(e\geq4\), and its
explicit boundary calculation covers \(e=2,3\).

\begin{proposition}[Sign-valuation input]
\label{prop:sign-input}
Let \(e\geq2\).
\begin{enumerate}
\item For every admissible target \(\bfs^\T\), there is a divisorial
valuation \(v_{\bfs}\) of \(K\) at which \([a_{\bfs}]\) has an odd reduced
pole.  If \(\bft^\T\) is another admissible target and
\(\pi_e(\bfs^\T)\neq\pi_e(\bft^\T)\), then \([a_{\bft}]\) is unramified
at \(v_{\bfs}\).
\item For pure targets
\(\bfs_\gamma^\T\), the relevant valuation is common
to all \(\gamma\), and the reduced leading polar coefficient of
\([a_{\bfs_\gamma}]\) is a fixed nonzero factor times
\(\gamma^{k_e}\).  Its pole order is \(2e-1\), which is odd.
\end{enumerate}
\end{proposition}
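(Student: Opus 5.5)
The plan is to read Proposition~\ref{prop:sign-input} off the second-order engine of Section~\ref{r2:sec:second-order-engine}, treating \(e\geq4\) and \(e=2,3\) separately. For part~(1) with \(e\geq4\), I will take \(v_{\bfs}\) to be the divisorial valuation of \(K\) at the generic point of the component \(\Gamma_{\bfs}\) of \(\widetilde D_{\bfs}=\mathbf m_e^{-1}(D_{\bfs})\) constructed in Subsection~\ref{r2:subsec:universal-hankel}. Lemma~\ref{r2:lem:generic-sign-ramification} then gives a representative of \([a_{\bfs}]\) whose reduced leading polar part is \(\eta_{\bfs}^{2e-3}D_R^{2e-1}\varpi^{-(2e-1)}\), with \(\eta_{\bfs}\neq0\) and \(D_R\neq0\) since the residual roots are distinct. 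The pole order \(2e-1\) is odd, so this is an odd reduced pole in the sense defined above.

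For the unramifiedness clause, the key step is that \(\pi_e(\bfs^\T)\neq\pi_e(\bft^\T)\) forces \(D_{\bfs}\neq D_{\bft}\), by the trivial translation stabilizer in Lemma~\ref{r2:lem:universal-hankel}(4). Both divisors are irreducible, so the generic point of \(\Gamma_{\bfs}\) does not lie on \(\widetilde D_{\bft}\). By the choice of \(\Gamma_{\bfs}\), it also does not lie on \(\mathcal D_{\mathrm{coll}}\). At that point the Hankel determinant \(\Delta_e(\pi_e(\bft^\T)+\mathbf m_e(\mathbf z))^2\) of \(\bft\) is therefore a unit. Proposition~\ref{r2:prop:completion-cover} then says the completion cover of \(\bft\) is finite étale over a neighbourhood of the generic point of \(\Gamma_{\bfs}\), so \(L_{\bft}\), and hence \(Q_{\bft}\), is unramified at \(v_{\bfs}\). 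An unramified Artin--Schreier extension has a class representable without poles at \(v_{\bfs}\), which is what ``unramified'' means for \([a_{\bft}]\). This is the argument already used inside the proof of Proposition~\ref{r2:prop:different-projections}, now isolated.

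For part~(2), the translated divisor is \(D=V(\Delta_e(\mathbf p))\) for every pure target, since \(\pi_e(\bfs_\gamma^\T)=0\). I will therefore fix one dominating component \(\Gamma\) above it, independent of \(\gamma\), exactly as in the proof of Proposition~\ref{r2:prop:pure-highest-plane}. Lemma~\ref{r2:lem:pure-highest-defect} gives \(\eta_{\bfs_\gamma}=\gamma\), so the leading polar coefficient is \(\gamma^{k_e}D_R^{2e-1}\). The factor \(D_R^{2e-1}\), the uniformizer \(\varpi\) and the branch are all independent of \(\gamma\). Hence the fixed nonzero factor is \(D_R^{2e-1}\) and the pole order is \(2e-1\).

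For \(e=2,3\), both parts come from Lemma~\ref{r2:lem:boundary-completion-covers}. The polar parts there are \(\eta_{\bfs}\varpi^{-3}\) and \(\eta_{\bfs}^3\varpi^{-5}\), with \(D_R=1\) and \(\eta_{\bfs_\gamma}=\gamma\). The exponents \(3=2e-1\) and \(k_e=1,3\) match the statement. The unramifiedness clause follows from the same distinct-divisor argument. For \(e=2\), \(\Delta_2=p_1\) and there is no collision divisor. For \(e=3\), the translation stabilizer of \(\Delta_3\) was computed there to be trivial.

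The main obstacle I anticipate is the consistency of the valuation in part~(2). The lemma works on a branch after an étale extension of the discrete valuation ring, and I must make sure that the same branch of the normalization, and the same choice of roots \(r_i\), serve every \(\gamma\) simultaneously. Otherwise the factor \(D_R^{2e-1}\) might not be literally common to all \(\gamma\). I will handle this by choosing the branch on \(\Gamma\) from lower-moment data alone, which does not involve \(\gamma\). I will then check in the local calculation of Appendix~\ref{r2:app:local-sign} that \(\gamma\) enters only through \(\eta\).
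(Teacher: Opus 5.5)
Your proposal is correct and follows the paper's own proof. For \(e\geq4\) it combines Lemma~\ref{r2:lem:generic-sign-ramification}, the distinct-translated-divisor and \'etaleness argument used inside Proposition~\ref{r2:prop:different-projections} (which the paper simply cites), and Lemma~\ref{r2:lem:pure-highest-defect} on one common component \(\Gamma\); the cases \(e=2,3\) are taken from Lemma~\ref{r2:lem:boundary-completion-covers}. Your concern about a \(\gamma\)-independent branch is harmless: the adjoined residual roots, \(\varpi\), and \(D_R\) (which is symmetric in characteristic two and so lies in the residue field) depend only on the lower moments, so \(\gamma\) enters only through \(\eta_{\bfs_\gamma}=\gamma\).
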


\begin{proof}
For \(e\geq4\), Lemma~\ref{r2:lem:generic-sign-ramification} and
Proposition~\ref{r2:prop:different-projections} give the private odd-pole
valuation and the unramifiedness of a target with different lower projection;
Lemma~\ref{r2:lem:pure-highest-defect} computes the pure-target defect.  The
boundary calculation in Lemma~\ref{r2:lem:boundary-completion-covers} treats
\(e=2,3\).  Writing \(\tau\) for a uniformizer at the corresponding
pure-target valuation, the reduced leading term is
\(\gamma^3\tau^{-5}\) for \(e=3\) and \(\gamma\tau^{-3}\) for \(e=2\).
For every \(e\geq4\), the generic calculation and the pure-target defect give
a common nonzero factor times
\(\gamma^{2e-3}\tau^{-(2e-1)}\).
\end{proof}

The leading coefficient in Proposition~\ref{prop:sign-input}(2) can vanish in
a sum even when the pure targets are independent.  For at most six errors,
the lower polar terms rule out this cancellation.

\begin{lemma}[Pure sign independence for at most six errors]
\label{lem:pure-sign-independence-small-e}
Let \(2\leq e\leq6\), and let
\(\gamma_1,\ldots,\gamma_u\in F^*\) be linearly independent over \(\F_2\).
Then the sign classes
\[
 [a_{\bfs_{\gamma_1}}],\ldots,[a_{\bfs_{\gamma_u}}]
 \quad\text{in}\quad K/\wp(K)
\]
are linearly independent.
\end{lemma}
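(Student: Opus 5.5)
The plan is to reduce the statement to a claim about one explicit Laurent tail at the common pure-target valuation. Let \(v\) be the valuation supplied by Proposition~\ref{prop:sign-input}(2), with uniformizer \(\tau\). Suppose a nontrivial relation \(\sum_{i\in I}[a_{\bfs_{\gamma_i}}]=0\) holds for some nonempty \(I\). Put \(\gamma\eqdef\sum_{i\in I}\gamma_i\), which is nonzero by linear independence. The first step is to show that the class \([a_{\bfs_\gamma}]\) depends on \(\gamma\) in a controlled way.

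I would compute the full reduced polar part, not only its leading term. Concretely, I would redo the local calculation of Appendix~\ref{r2:app:local-sign} for the pure target \(\bfs_\gamma^\T\). Along the common dominating component \(\Gamma\), the residual lower moments do not depend on \(\gamma\). Only the highest moment is shifted, and by Lemma~\ref{r2:lem:pure-highest-defect} the defect is \(\eta=\gamma\).

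I would expand the Artin--Schreier representative of the sign field in the two-escaping-root chart as a polynomial in \(\eta\) and \(\tau^{-1}\), and reduce it modulo \(\wp\). This should give a reduced polar part of the form
\[
 \sum_{\substack{1\leq n\leq 2e-1\\ n\ \text{odd}}} P_n(\gamma)\,\tau^{-n},
\]
where each \(P_n\) is a polynomial in \(\gamma\) whose coefficients are residue functions on \(\Gamma\) and do not depend on \(\gamma\). The leading coefficient is \(P_{2e-1}(\gamma)=c\,\gamma^{2e-3}\). The lower odd polar coefficients involve the monomials \(\gamma^{j}\) with \(j<2e-3\) that survive the reduction. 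Even-order polar terms are removed by replacing \(a\tau^{-2n}\) with \(a^{1/2}\tau^{-n}\), and this is where Frobenius twists \(\gamma^{j/2}\) can enter.

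With the class written this way, the question becomes one of linear algebra. Addition of Artin--Schreier classes is additive in the representatives. So a relation among the \(\gamma_i\) is equivalent to every odd reduced coefficient of \(\sum_{i\in I}P_n(\gamma_i)\) vanishing, using the residues over \(\kappa(\Gamma)\). Since the coefficient functions are \(\overline{F}\)-linearly independent functions on \(\Gamma\) and are fixed, vanishing amounts to
\[
 \sum_{i\in I}\gamma_i^{j}=0\qquad\text{for every surviving exponent } j.
\]
The crucial point to verify for each \(e\le6\) is that the surviving exponent set contains a \(2\)-power \(j=2^a\), for instance \(j=1\). Granting this, \(\sum_{i\in I}\gamma_i^{2^a}=\gamma^{2^a}\ne0\), which contradicts the relation. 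This is exactly why the scaled-\(\F_4\) and \(\F_{16}\) obstructions to the leading term disappear.

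The main obstacle is the explicit expansion for \(e=4,5,6\). I must show that after reduction some term linear in \(\gamma\), or a Frobenius power of such a term, keeps a nonzero coefficient at an odd pole order and is not cancelled by \(\wp\)-adjustments. Higher pole terms carry \(\gamma^{\text{nonlinear}}\) contributions, and reducing them can create new linear terms, so the adjustments must be tracked carefully.

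I would first try the simplest specialization: restrict to a curve in \(\Gamma\) where \(D_R\) and the lower residues are monomial. On that curve I would expand by computer-algebra-style bookkeeping through the Newton identities, and verify directly that the \(\tau^{-1}\) coefficient (or the \(\tau^{-3}\) coefficient) is a nonzero constant times \(\gamma\) or \(\gamma^{2^a}\). If that fails, the fallback is a second valuation, such as the collision divisor, where the class is linear in \(\gamma\).
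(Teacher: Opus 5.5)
Your framework is the paper's. You fix a boundary valuation, write a binary sum of sign representatives as \(\sum_j f_{n,j}S_j\) with \(S_j=\sum_{i\in I}\gamma_i^j\) and \(\gamma\)-free coefficients \(f_{n,j}\), and use \(S_{2^a}=\delta^{2^a}\neq0\) (where \(\delta=\sum_{i\in I}\gamma_i\)) together with \(S_{2j}=S_j^2\). But the proposal stops exactly where the lemma's content begins. Both decisive claims are granted rather than proved: that the \(f_{n,j}\) at a fixed odd order are \(\overline F\)-linearly independent, and that a \(2\)-power exponent survives at such an order.

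Neither claim is formal. If \(f_{n,2^a}\) lay in the \(\overline F\)-span of the other \(f_{n,j}\), suitable independent \(\gamma_i\) could cancel it. That is precisely the kind of cancellation that already kills the leading term on scaled copies of \(\F_4\) and \(\F_{16}\). The paper establishes these facts by explicit computation and a descending leading-pole case analysis:
\begin{itemize}
\item for \(e=3\) at \(z_1=0\), if \(S_3=0\) the order-\(3\) coefficient is \(\delta(a^5+\delta)/a^4\neq0\);
\item for \(e=4\), if \(S_5=0\) the square \(\ell_6=\rho^2\) is stripped, leaving \(\ell_5=(\delta^4+S_3xv^5w)/(x^{11}vw^{11})\neq0\);
\item for \(e=5\), the order-\(7\) coefficient \(\frac{x}{x+1}S_6+x^4S_5\) forces \(S_6=S_5=S_3=0\), and then \((x^4+x^3+x^2+x+1)\delta^4\neq0\) appears at order \(5\);
\item for \(e=6\), the order-\(9\) coefficient \(\frac{x+1}{r_6}S_8+x^4S_7\) is nonzero because \(S_8=\delta^8\) and the two functions are not proportional.
\end{itemize}

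You also lack a tool that produces exact subleading coefficients. Appendix~\ref{r2:app:local-sign} determines \(P\) and \(s\) only up to factors \(1+O(\varpi)\), so it yields only the leading term. Pushing the chart several orders deeper means inverting \(\Psi\) to high order, and nothing there shows the coefficients are polynomial in \(\gamma\). The paper instead uses an exact element of \(K\): Berlekamp's invariant of the explicit completion polynomial. For \(e=3,4\) that polynomial comes from Newton's identities. For \(e=5,6\) it is \(f_\gamma=y\Phi+\gamma\Psi^*\) with \((\widehat\Phi\Psi)'=t^{2e-2}\), and the invariant is Carmon's integral quotient \(\overline\Xi/\overline\Delta_+\). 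The resulting \(\beta_\gamma\) is visibly a polynomial in \(\gamma\) of degree \(2e-3\). For \(e=5,6\) the expansion is further carried out at a specialized place of the residue field, justified by an integrality argument.

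Two further points in your plan need care.
\begin{itemize}
\item A reduced polar part with only odd poles presupposes that every even coefficient met during reduction is a square in an imperfect residue field. The leading-pole descent never needs to check this beyond the one visible square \(\ell_6\).
\item For \(e=4\) the paper works along the collision divisor \(H=0\), and specifically its component \(z_1+z_2=0\), which is your fallback. There the class is far from linear in \(\gamma\): the leading pole carries \(\gamma^5\) and the next ones \(\gamma^4\) and \(\gamma^3\).
\end{itemize}
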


The proof is given in Appendix~\ref{app:pure-sign-independence}.  It is
kept separate because the common-core argument below uses only the resulting
independence, not the coefficient expansions.

\subsection{From independent signs to an integral component}

We use the standard sign-field compositum criterion for symmetric
extensions.  Bary-Soroker proves this criterion in
\cite[Lemmas~3.2--3.4]{BarySoroker2012Irreducible}; Carmon
\cite[Section~6]{Carmon2015ChowlaCharTwo} uses the same characteristic-two
form.  We record it here only to fix the notation needed below.

\begin{lemma}[Independent signs force the full compositum]
\label{lem:higher-field-compositum}
Let \(e,r\geq2\), and let \(L_1,\ldots,L_r\) be \(S_e\)-Galois
extensions of a field \(K\) of characteristic two.  Let
\(Q_i=L_i^{A_e}\), with Artin--Schreier classes
\([a_i]\in K/\wp(K)\).  If \([a_1],\ldots,[a_r]\) are linearly independent
over \(\F_2\), then
\[
 \Gal(L_1\cdots L_r/K)\simeq S_e^r,
 \qquad
 [L_1\cdots L_r:K]=(e!)^r.
\]
\end{lemma}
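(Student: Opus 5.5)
We argue by induction on \(r\), via Goursat's lemma and the normal subgroup structure of \(S_e\). Put \(G=\Gal(L_1\cdots L_r/K)\). Restriction embeds \(G\) into \(\prod_{i=1}^r\Gal(L_i/K)\simeq S_e^r\), and each coordinate projection is surjective. Since \([L_1\cdots L_r:K]=|G|\), it suffices to show that this embedding is onto. For \(r=1\) there is nothing to prove. Assume that \(M\eqdef L_1\cdots L_{r-1}\) has group \(S_e^{r-1}\) over \(K\). Because every sub-collection of the classes \([a_i]\) is still independent, the case \(r=2\) is the prototype. It remains to show that \(L_r\) and \(M\) are linearly disjoint, that is, \(N\eqdef L_r\cap M=K\).

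The intersection \(N\) is Galois over \(K\), and \(\Gal(N/K)\) is a quotient of \(\Gal(L_r/K)\simeq S_e\). Suppose \(N\neq K\). As in the proof of Proposition~\ref{r2:prop:different-projections}, every proper normal subgroup of \(S_e\) lies in \(A_e\). This holds for \(e\geq5\) and for \(e=4\), where the normal subgroups are \(1,V_4,A_4\); for \(e=3\) the normal subgroups are \(1,A_3\); for \(e=2\) one has \(A_2=1\). Hence \(N\supseteq Q_r\). Since \(Q_r\subseteq N\subseteq M\), the field \(Q_r\) is a quadratic subextension of \(M/K\). Now \(\Gal(M/K)\simeq S_e^{r-1}\), and its quadratic subextensions correspond to the nonzero homomorphisms \(S_e^{r-1}\to\F_2\). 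The abelianization of \(S_e\) is \(C_2\), via the sign, so these homomorphisms are the nonzero \(\F_2\)-combinations of the coordinate signs. It follows that the Artin--Schreier classes of the quadratic subfields of \(M\) are exactly the nonzero elements of \(\Span_{\F_2}\{[a_1],\ldots,[a_{r-1}]\}\). To justify this identification, note that by Artin--Schreier theory the class attached to the composite character \(\chi_1+\chi_2\) is the sum of the classes attached to \(\chi_1\) and \(\chi_2\). Hence the class attached to \(\sum_{i\in I}\operatorname{sgn}_i\) is \(\sum_{i\in I}[a_i]\). Therefore \([a_r]\in\Span\{[a_1],\ldots,[a_{r-1}]\}\), which contradicts independence. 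So \(N=K\).

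Linear disjointness of \(L_r\) and \(M\), both Galois over \(K\), gives \(\Gal(ML_r/K)\simeq\Gal(M/K)\times\Gal(L_r/K)\simeq S_e^r\). The degree \((e!)^r\) follows.

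The main point to check is the identification, in the second paragraph, of the quadratic subfields of \(M\) with the span of the sign classes. It uses the induction hypothesis in an essential way: we need \(\Gal(M/K)\) to be the full product \(S_e^{r-1}\) so that its \(\F_2\)-characters are exactly the sums of coordinate signs. We also need additivity of the Artin--Schreier correspondence \(\operatorname{Hom}(G_K,\F_2)\simeq K/\wp(K)\), which is standard. The small cases \(e=2,3,4\) need the separate check of normal subgroups noted above, and this is already recorded in the paper.
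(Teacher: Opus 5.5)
Your proof is correct. It differs from the paper only in that the paper gives no argument of its own. There the lemma is quoted as Bary-Soroker's sign-field compositum criterion (Lemmas~3.2--3.4, in Carmon's characteristic-two form), with the remark that for \(e=2\) it is the usual Artin--Schreier compositum criterion. You instead prove it directly by induction on \(r\), using three ingredients:
\begin{enumerate}
\item Every proper normal subgroup of \(S_e\) lies in \(A_e\). This is exactly the quotient argument the paper uses at order two in Proposition~\ref{r2:prop:different-projections}.
\item The quadratic subfields of an \(S_e^{r-1}\)-extension correspond to the nonzero sums of coordinate sign characters. This uses that the abelianization of \(S_e\) is \(C_2\), together with the fact that the restriction map is itself the isomorphism.
\item The Artin--Schreier isomorphism \(\operatorname{Hom}(G_K,\F_2)\simeq K/\wp(K)\) is additive.
\end{enumerate}

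What your route buys is a self-contained argument that is uniform in \(e\geq2\) and makes the role of the hypothesis transparent. Pairwise distinct sign fields, which sufficed at order two, are not enough for \(r\geq3\): if \([a_3]=[a_1]+[a_2]\), then \(Q_3\subseteq Q_1Q_2\subseteq L_1L_2\), and the compositum is not full. The same identification of quadratic subfields also shows that linear independence is necessary as well as sufficient. The citation buys brevity.

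A minor point: despite your opening sentence, the argument never actually uses Goursat's lemma. The linear-disjointness criterion for two Galois extensions with trivial intersection does all the work.
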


\begin{proof}
This is exactly the cited sign-field compositum criterion.  For \(e=2\) it
reduces to the usual Artin--Schreier compositum criterion.
\end{proof}

\begin{lemma}[Mixed sign independence]
\label{lem:mixed-power-signs}
Let \(e,r\geq2\).  Suppose that an admissible basis consists of pure targets
\(\bfs_{\gamma_1}^\T,\ldots,\bfs_{\gamma_u}^\T\), where the parameters are
linearly independent and either \(2\leq e\leq6\) or their \(k_e\)-th powers
are linearly independent.  Suppose also that the remaining targets
\(\bft_1^\T,\ldots,\bft_v^\T\) have nonzero pairwise distinct lower
projections.  Then all \(r=u+v\) sign classes are linearly independent in
\(K/\wp(K)\).
\end{lemma}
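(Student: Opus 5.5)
We suppose a nontrivial binary relation
\[
 \sum_{i\in I}[a_{\bfs_{\gamma_i}}]+\sum_{j\in J}[a_{\bft_j}]=0
 \quad\text{in }K/\wp(K),
\]
with \(I\subseteq\{1,\ldots,u\}\), \(J\subseteq\{1,\ldots,v\}\) not both empty, and derive a contradiction in two steps: first eliminate the mixed part \(J\), then the pure part \(I\).

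\emph{Mixed part.} Suppose \(J\neq\emptyset\) and pick \(j_0\in J\). We evaluate the relation at the private valuation \(v_{\bft_{j_0}}\) supplied by Proposition~\ref{prop:sign-input}(1). There \([a_{\bft_{j_0}}]\) has an odd reduced pole. Every other mixed target \(\bft_j\) has a lower projection different from \(\pi_e(\bft_{j_0}^\T)\), by the pairwise distinctness hypothesis. Every pure target has lower projection zero, which also differs, because \(\pi_e(\bft_{j_0}^\T)\neq0\). By the second clause of Proposition~\ref{prop:sign-input}(1), all the remaining classes in the relation are therefore unramified at \(v_{\bft_{j_0}}\). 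Each such class has a representative that is integral at this valuation. Subtracting these representatives, \([a_{\bft_{j_0}}]\) would be represented modulo \(\wp(K)\) by an element with no pole. That contradicts the odd reduced pole, since a class with an odd reduced pole cannot be unramified: \(v(u^2+u)=2v(u)\) is even whenever \(v(u)<0\). Hence \(J=\emptyset\).

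\emph{Pure part.} Now \(I\neq\emptyset\), and the relation involves only pure targets. All of these share one valuation, by Proposition~\ref{prop:sign-input}(2). When \(2\leq e\leq6\), Lemma~\ref{lem:pure-sign-independence-small-e} directly gives linear independence of the pure classes, a contradiction. For general \(e\), we use the hypothesis that \(\gamma_1^{k_e},\ldots,\gamma_u^{k_e}\) are linearly independent. By Proposition~\ref{prop:sign-input}(2), the reduced leading polar term of \([a_{\bfs_{\gamma_i}}]\) at the common valuation is \(c\,\gamma_i^{k_e}\tau^{-(2e-1)}\), with a fixed nonzero \(c\) and uniformizer \(\tau\). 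Adding the chosen representatives gives a representative of the sum with leading term
\[
 c\Bigl(\sum_{i\in I}\gamma_i^{k_e}\Bigr)\tau^{-(2e-1)}.
\]
Its coefficient is nonzero by independence of the powers, and its pole order \(2e-1\) is odd. So the sum has an odd reduced pole and is a nonzero class, again a contradiction.

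\emph{Main obstacle.} The delicate step is the additivity of leading polar terms in the pure case. The reduced representatives of the individual classes must be normalized compatibly at the common valuation, so that the even-order polar cancellations used to reduce each class do not create new terms of order \(2e-1\) or higher. I would handle this by fixing a single uniformizer \(\tau\) and performing the reduction simultaneously on the sum. Removing an even-order pole \(b\tau^{-2n}\) via \(u=\sqrt{b}\,\tau^{-n}\) introduces only a pole of order \(n\), which is smaller. Therefore the odd top term of order \(2e-1\), as long as it is the maximal pole, survives unchanged and adds linearly. If the individual representatives carried even poles of order exceeding \(2e-1\), the same descending reduction applied to the sum would still leave the order-\(2e-1\) coefficient equal to the sum of the individual reduced coefficients. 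This is because each reduction step lowers pole orders and acts additively on the representatives.
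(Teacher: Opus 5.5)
Your proof is correct and follows the paper's argument: a combination containing some \(\bft_j\) is detected at the private valuation \(v_{\bft_j}\), where every other occurring class is unramified (the pure ones included, because \(\pi_e(\bft_j^\T)\neq0\)), and a combination of pure classes alone is handled by Lemma~\ref{lem:pure-sign-independence-small-e} for \(e\leq6\) and otherwise by the odd leading pole \(c\bigl(\sum_{i\in I}\gamma_i^{k_e}\bigr)\tau^{-(2e-1)}\). The additivity point, which the paper leaves implicit, needs no simultaneous reduction: since \(\wp\) is additive, the sum of the individually reduced representatives already represents the sum class, and in the paper's local calculation the representatives \(P/s^2\) have no pole of order above \(2e-1\) anyway.
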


\begin{proof}
Consider a nonzero binary combination of the sign classes.  If it contains
the class of some \(\bft_j^\T\), then at the private valuation
\(v_{\bft_j}\) that class has an odd reduced pole and every other occurring
class is unramified, because all other lower projections are different from
\(\pi_e(\bft_j^\T)\).  If the combination contains only pure classes, its
nonvanishing follows from Lemma~\ref{lem:pure-sign-independence-small-e} when
\(2\leq e\leq6\).  Otherwise, its leading coefficient at the common pure
valuation is a fixed nonzero factor times the corresponding combination of
\(\gamma_1^{k_e},\ldots,\gamma_u^{k_e}\).  This is nonzero, and the pole
order \(2e-1\) is odd.  In either case the combination is not in \(\wp(K)\).
\end{proof}

\begin{lemma}[Refined degree budget]
\label{lem:refined-degree-budget}
Let \(\bfs_1^\T,\ldots,\bfs_r^\T\) be admissible targets, and suppose that
the fiber product of their generic completion covers is integral over the
common \`etale open set.  The closure \(\mathcal C\) of this fiber product
in \(\mathcal Z_{\bfs_1,\ldots,\bfs_r}\) has affine degree at most
\(B_{e,r}\).
\end{lemma}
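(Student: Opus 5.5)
The bound \(B_{e,r}\) is a minimum of two terms, so I will bound \(\deg\mathcal C\) by each term separately.

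\emph{First term.} The set \(\mathcal Z_{\bfs_1,\ldots,\bfs_r}\) is cut out in \(\mathbb A_F^{(r+1)e-1}\) by \(r\) equations of each degree \(1,3,\ldots,2e-1\). The closure \(\mathcal C\) is an irreducible component of \(\mathcal Z\). This follows as in the order-two discussion: the fiber product is finite étale over the \((e-1)\)-dimensional base, so any component containing \(\mathcal C\) has the same dimension. The affine B\'ezout inequality for components (Heintz) then gives \(\deg\mathcal C\leq D_e^r\).

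\emph{Second term.} Here I would use the fibration structure. Let \(\phi\colon\mathcal C\to\mathbb A^{e-1}\) be the projection to the common core. Because the fiber product is integral and étale of degree \((e!)^r\) over \(U=\bigcap_iU_{\bfs_i}\), a generic fiber has exactly \((e!)^r\) points. The affine degree is the number of intersection points with a generic affine-linear subspace \(\Lambda\) of codimension \(e-1\) in the ambient space. I would parametrize such a \(\Lambda\) and eliminate the completion coordinates blockwise. Each block \(\mathbf x_i\) contributes its \(e!\) orderings of the roots of the completion polynomial \(f_i(\mathbf z;t)\). By the Hankel reconstruction (Lemma~\ref{r2:lem:hankel-reconstruction}), the coefficients of \(f_i\) are regular on \(U\) and are rational in \(\mathbf z\) with denominator \(\det\mathscr H_e\).

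This structure gives a multihomogeneous B\'ezout count. Consider \(\mathcal C\) inside \(\mathbb A^{e-1}\times\prod_i\mathbb A^{e}\) with multidegree variables. Using the Newton-identity triangularization in each block, the incidence equations for block \(i\) are equivalent to \(e\) equations of degree \((2j-1)\) in \(\mathbf z\) and \(\le j\) in \(\mathbf x_i\), and these can be replaced by equations whose \(\mathbf x_i\)-degrees multiply to \(e!\) (the elementary-symmetric form). A generic linear form on the ambient space splits as a sum of pieces in the core and block coordinates. The intersection number of \(\mathcal C\) with \(e-1\) generic hyperplanes then expands, by multilinearity, into mixed terms. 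Each term places some number \(a_0\) of hyperplane factors on the core and \(a_i\) on block \(i\), with \(a_0+\sum_i a_i=e-1\).

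Each mixed term is bounded by \((e!)^r\) times a factor. That factor is at most \(\bigl((e-1)D_e/e!\bigr)^{\sum a_i}\), since the cost of trading a block factor for core degree is at most the ratio \((e-1)D_e/e!\). Summing over distributions of the \(e-1\) factors among the \(r\) blocks gives the multinomial-type bound
\[
 (e!)^r\sum_{a}\binom{e-1}{a}\Bigl(\tfrac{(e-1)D_e}{e!}\Bigr)^{|a|}r^{|a|}\cdot(\ldots)
 \leq (e!)^r\Bigl(1+\tfrac{(e-1)D_e}{e!}r\Bigr)^{e-1}.
\]

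The main obstacle is making this mixed-degree bookkeeping rigorous. The multihomogeneous B\'ezout theorem applies to components of a complete intersection in a product of projective spaces. I must therefore check that \(\mathcal C\) remains a component after multiprojective closure, so that no excess intersection at infinity inflates the count. My first approach is to use Lemma~\ref{r2:lem:hankel-reconstruction} to show that all points of \(\mathcal C\) over \(U\) are isolated in their \(\phi\)-fibers. I would then apply the refined B\'ezout inequality, which is valid for any set of components, to the multihomogenized system.
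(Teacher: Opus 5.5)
Your first term (Heintz's B\'ezout inequality for the $r$ equations of each degree $1,3,\ldots,2e-1$) matches the paper. So does the architecture of your second term: multiprojective closure, multinomial expansion of $e-1$ generic hyperplanes into mixed degrees, refined B\'ezout to discard excess at infinity, and the final sum $\sum_U\binom{a}{U}(a\lambda r)^U$, where $a=e-1$ and $\lambda=D_e/e!$.

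The gap is the one inequality that carries the second term. You need the mixed degree with $a-U$ core hyperplanes and $u_i$ hyperplanes in block $i$ to be at most $(e!)^r(a\lambda)^U$. You assert this as a ``trading cost'' and propose to justify it by triangularizing each block into elementary-symmetric equations with $\mathbf x_i$-degrees $1,\ldots,e$. That step fails in characteristic two:
\begin{itemize}
\item Newton's identities contain the term $k e_k$, so the even $e_k$ are not polynomials in the odd power sums. Already for two roots, $e_2=(p_3+p_1^3)/p_1$.
\item The completion coefficients are only rational in $\mathbf z$, with denominator $\det\mathscr H_e(\bfs,\mathbf z)$ coming from the Hankel system. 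Clearing that denominator gives you no control of the $\mathbf z$-degrees that would produce your ratio.
\item The block equations have $\mathbf x_i$-degree $2j-1$, not $\leq j$.
\end{itemize}
Isolating points in $\phi$-fibers does not supply the missing product structure over blocks either.

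The paper gets the bound as follows.
\begin{itemize}
\item Eliminate the linear first-moment equation in each block. Then $\mathbf x_i$ ranges over $\mathbb P^a$, and the remaining equations have bidegrees $(3,3),\ldots,(2e-1,2e-1)$ in $\mathbb P^a_z\times\mathbb P^a_{x_i}$.
\item For the one-target component, let $c_{i,u}$ be the mixed degree with $a-u$ core and $u$ block hyperplanes. Then $c_{i,0}=e!$ is the exact cover degree, and multiprojective B\'ezout gives $c_{i,u}\leq D_e\binom au$ for $u\geq1$.
\item Because $\mathcal C$ is a fiber product over the core, its mixed degree of type $(a-U;u_1,\ldots,u_r)$ is at most $\prod_i c_{i,u_i}$. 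General hyperplanes miss the non-\'etale boundary, so every slice point is a transverse isolated zero, and refined B\'ezout bounds their number by this product.
\end{itemize}
Since $\lambda\geq1$, one has $D_e\binom au\leq e!\,(a\lambda)^u$ for $u\geq1$. This is exactly your per-term bound, and your multinomial summation then completes the argument.
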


\begin{proof}
Put \(a=e-1\), \(E=e!\), and \(D=D_e\).  Eliminate, in each completion
block, the equation of first moments.  The remaining one-target equations
have bidegrees
\[
 (3,3),(5,5),\ldots,(2e-1,2e-1)
\]
 in the common-core and completion variables.  For the \(i\)-th target,
 \(1\leq i\leq r\), let \(c_{i,u}\) denote the mixed degree in
 \(\mathbb P_z^a\times\mathbb P_{x_i}^a\) obtained by intersecting its
 one-target dominating component with \(a-u\) general core hyperplanes and
 \(u\) general completion hyperplanes.  The
 generic completion cover has degree \(E\), while refined
 multiprojective B\'ezout gives
 \[
 c_{i,0}=E,
 \qquad
 c_{i,u}\leq D\binom au\quad(1\leq i\leq r,\ 1\leq u\leq a)
 \]
\cite[Theorem~1.11]{DAndreaKrickSombra2013}.

For integers \(0\leq u_i\leq a\), \(1\leq i\leq r\), put
\(U=\sum_{i=1}^r u_i\).  When
\(U\leq a\), choose the product hyperplanes defining this mixed degree
generally.  There are \(a\) such hyperplanes in total, so their slice of
\(\mathcal C\) avoids the boundary of the common \`etale locus, which has
dimension less than \(a\).  Each resulting point is therefore an isolated
transverse zero, of multiplicity one, of the block equations.  After applying
independent general projective translations in the corresponding factors
(equivalently, a standard moving-lemma deformation), their multidegree
classes \(c_{i,u_i}\) are unchanged and the full intersection is proper;
these isolated zeros persist locally with the same total multiplicity.  Refined
B\'ezout~\cite[Theorem~1.11]{DAndreaKrickSombra2013} then bounds their number by
\(\prod_{i=1}^r c_{i,u_i}\).  Thus vertical excess components at infinity do
not affect this upper bound.  Hence the Segre degree of \(\mathcal C\) is at
most
\[
 \Delta_{e,r}\eqdef
 \sum_{\substack{0\leq u_i\leq a\ (1\leq i\leq r)\\ U\leq a}}
 \frac{a!}{(a-U)!\prod_{i=1}^r u_i!}
 E^{r-k}D^k\prod_{\substack{1\leq i\leq r\\u_i>0}}\binom a{u_i},
 \qquad
 k=\bigl|\{i\in\{1,\ldots,r\}:u_i>0\}\bigr|.
\]
The standard affine degree is no larger.  Explicitly, project the Segre
coordinates to
\[
 [1,z_1,\ldots,z_a,x_{1,1},\ldots,x_{r,a}].
\]
On the common affine chart
this projection is the identity on the affine variables, hence an isomorphism
onto the affine component, and linear projection cannot increase degree.

Write \(\lambda=D/E\geq1\).  For fixed \(U\), use
\(\lambda^k\leq\lambda^U\),
\(\binom a{u_i}\leq a^{u_i}/u_i!\), and
\[
 \sum_{u_1+\cdots+u_r=U}\frac1{\prod_i(u_i!)^2}
 \leq\frac{r^U}{U!}.
\]
The terms with total \(U\) are therefore bounded by
\(E^r\binom aU(a\lambda r)^U\), and summing gives
\[
 \Delta_{e,r}\leq E^r(1+a\lambda r)^a.
\]
Ordinary projective B\'ezout independently gives degree at most \(D^r\).
Taking the smaller of the two bounds proves
\(\deg\mathcal C\leq B_{e,r}\).
\end{proof}

\subsection{The rational point and the return to BCH support}

\begin{proposition}[Higher-order common-core point criterion]
\label{prop:higher-order-point}
Let \(e,r\geq2\), and let
\(\bfs_1^\T,\ldots,\bfs_r^\T\) be linearly independent admissible
syndromes whose sign classes are independent.  If
\eqref{eq:higher-order-threshold} holds, then
\[
 \mu_e\bigl(\Span_{\F_2}
 \{\bfs_1^\T,\ldots,\bfs_r^\T\}\bigr)\leq(r+1)e-1.
\]
\end{proposition}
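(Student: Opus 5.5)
The plan is to follow the second-order template of Proposition~\ref{r2:prop:common-core-point}, replacing the two-target field comparison by Lemma~\ref{lem:higher-field-compositum}.

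\emph{Step 1: the joint function field.} Let \(L_i=L_{\bfs_i}\), the \(S_e\)-Galois extensions of \(K=\overline F(z_1,\ldots,z_{e-1})\) given by Proposition~\ref{r2:prop:completion-cover}. Their sign classes are independent by hypothesis. Lemma~\ref{lem:higher-field-compositum} therefore gives \([L_1\cdots L_r:K]=(e!)^r\). Equivalently, the fields \(L_i\) are jointly linearly disjoint over \(K\).

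\emph{Step 2: an integral dominating component.} Put \(U=\bigcap_i U_{\bfs_i}\). Over \(\overline F\), the restriction of \(\mathcal Z_{\bfs_1,\ldots,\bfs_r}\) above \(U\) is the iterated fiber product of the restricted completion covers. It is finite \'etale of degree \((e!)^r\) over \(U\). Its generic fiber is \(\operatorname{Spec}(L_1\otimes_K\cdots\otimes_K L_r)\), and full compositum degree makes this tensor product a field. Hence the fiber product is geometrically integral of dimension \(e-1\).

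Let \(\mathcal C\) be its closure. I will repeat the dimension argument used before Proposition~\ref{r2:prop:common-core-point}. Any component of \(\mathcal Z\) containing \(\mathcal C\) meets the \'etale open set, so it has dimension at most \(e-1\). Hence \(\mathcal C\) is an irreducible component. It is also the unique component dominating the common-core space, so it is Frobenius-stable. Consequently \(\mathcal C\) is defined over \(F\) and absolutely irreducible.

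\emph{Step 3: the degree budget.} Lemma~\ref{lem:refined-degree-budget} applies, since its integrality hypothesis is exactly Step~2. It gives \(\deg\mathcal C\leq B_{e,r}\).

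\emph{Step 4: a rational point and the support bound.} Apply the Cafure--Matera effective point criterion \cite[Corollary~7.4]{CafureMatera2006} to \(\mathcal C\). This is an absolutely irreducible affine \(F\)-variety of positive dimension \(e-1\) and degree at most \(B_{e,r}\), and \eqref{eq:higher-order-threshold} supplies a point of \(\mathcal C(F)\). Its coordinates give a common tuple \(\mathbf z\) and completing tuples \(\mathbf x_i\) satisfying the equations of Lemma~\ref{lem:common-core-certificate}. That lemma yields \(\mu_e\le(r+1)e-1\). Zeros and coincident coordinates among the locators are harmless because of the reduced-support convention.

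\emph{The main obstacle.} I expect the delicate point to be Step~2: passing from field-theoretic linear disjointness to geometric integrality of the scheme-theoretic fiber product over \(U\), and then excluding non-dominating components from interfering. The first part should follow because a finite \'etale cover over a normal irreducible base is integral exactly when its generic fiber is a field. The second part uses the dimension count already made in the two-target case.
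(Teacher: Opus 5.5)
Your proposal is correct and follows the paper's proof essentially step for step. The compositum lemma makes the fiber product over \(U=\bigcap_i U_{\bfs_i}\) integral and finite, its closure is the unique dominating (hence Frobenius-stable, \(F\)-defined) component, the refined degree budget bounds its degree by \(B_{e,r}\), and Cafure--Matera plus the common-core certificate finish; the only detail you leave implicit is that \(U\) is nonempty, which holds because each \(U_{\bfs_i}\) is a nonempty open subset of the irreducible space \(\mathbb A^{e-1}\).
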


\begin{proof}
Put
\[
 U\eqdef\bigcap_{i=1}^{r}U_{\bfs_i}.
\]
This is a nonempty open subset of
\(\mathbb A_{\overline F}^{e-1}\).  Over \(U\), form the fiber product of
the \(r\) restricted covers
\[
 \rho_{\bfs_i}^{-1}(U)\longrightarrow U\qquad(1\leq i\leq r).
\]
Lemma~\ref{lem:higher-field-compositum} makes this fiber product
geometrically integral and finite over \(U\), which has dimension \(e-1\).

Its closure in \(\mathcal Z_{\bfs_1,\ldots,\bfs_r}\) is irreducible.
Every component dominating the base meets \(U\), where its restriction is a
component of the integral fiber product; hence the displayed closure is the
unique dominating component.  The defining equations and targets are
\(F\)-rational, so uniqueness makes the component Frobenius-stable and hence
defined over \(F\).  Lemma~\ref{lem:refined-degree-budget} bounds its degree
by \(B_{e,r}\).
Under~\eqref{eq:higher-order-threshold}, Cafure and Matera's effective
rational-point criterion~\cite[Corollary~7.4]{CafureMatera2006} gives an
\(F\)-rational point on this positive-dimensional absolutely irreducible
component.  Lemma~\ref{lem:common-core-certificate} converts that point into
a BCH-column support of size at most \((r+1)e-1\).
\end{proof}

\begin{proof}[Proof of Proposition~\ref{prop:higher-order-common-core}]
Lemma~\ref{lem:mixed-power-signs} makes the sign classes of the given basis
independent.  Proposition~\ref{prop:higher-order-point} then gives the
asserted support bound.
\end{proof}

\section{Concluding remarks}
\label{sec:conclusion}

The common-core method extends to every fixed number of simultaneous
syndromes once pairwise sign comparison is replaced by linear independence
in the Artin--Schreier class group.  Matroid intersection measures the defect
of the power map on the pure kernel, and binary labels repair the entire
defect with \(c_{e,m,t}\) shared columns.  This gives a uniform stable bracket
for every \(e,t\geq2\), with a correction term bounded independently of
\(t\) for fixed \(e\).  When \(\gcd(2e-3,2^m-1)=1\), the correction term
vanishes and the upper bound is the bare common-core count \((t+1)e-1\).
For \(2\leq e\leq6\), the full Berlekamp classes make the correction vanish even
when the power map has nontrivial fibers.
For three errors, the separate ten-column construction of
Proposition~\ref{prop:pure-e3-exact-ten} is exact on every pure binary plane
already for \(m\geq18\); consequently, a global value 11 can only come from a
nonpure space containing syndromes of ordinary coset
weight at least three, that is, syndromes requiring at least three BCH
columns.
The multihomogeneous degree estimate improves the stable field-size
condition at large order: for fixed \(e\), its sufficient exponent is
\(4t\log_2(e!)+O_e(\log t)\).
The sphere-covering inequality supplies the complementary asymptotic lower
bound \(et\), while the Griesmer bound remains stronger in the third-order
families treated explicitly here.

Several questions remain.  At general order, it is natural to ask whether
the power-basis defect reflects the true covering radius or only the present
sign-field criterion, and how close the common-core upper bound is to the
maximum of the sphere-covering and Griesmer lower bounds.  At third order,
the exact radii remain open in all four displayed families; for example, the
triple-error radius is 10 or 11 and the four-error radius is 13, 14, or 15 in
their stable ranges.  Finally, the
effective point-count thresholds are deliberately conservative and should be
sharpened.

\appendix

\section{The universal Hankel divisor}
\label{r2:app:universal-hankel}

This appendix proves the properties of the universal Hankel divisor used in
Section~\ref{r2:sec:second-order-engine}.  The generic triangular argument is
needed only for \(e\geq4\); the two smaller cases are included explicitly.
Lemma~\ref{r2:lem:delta-irreducible} and
Proposition~\ref{r2:prop:no-translation-stabilizer-proof} provide, respectively,
the irreducibility and translation assertions quoted there.

For the duration of the appendix, let \(k\) be an algebraically closed field
of characteristic two.  Adopt the conventions
\[
 \Delta_0=\Delta_1=1.
\]
Starting from the independent odd moments
\[
 p_1,p_3,\ldots,p_{2e-3},
\]
define the remaining moments recursively by \(p_{2j}\eqdef p_j^2\), and put
\(p_0\eqdef e\bmod2\).  Let
\[
 \mathscr H_e^{\mathrm{univ}}\eqdef(p_{i+j})_{0\leq i,j\leq e-1}.
\]
We give the moment variable \(p_j\) weight \(j\).  With this convention,
\(\Delta_e\) is weighted homogeneous of weight \(e(e-1)/2\).

\begin{lemma}
\label{r2:lem:delta-irreducible}
For every \(e\geq2\), there is a unique polynomial
\[
 \Delta_e\in\F_2[p_1,p_3,\ldots,p_{2e-3}]
\]
such that
\[
 \det\mathscr H_e^{\mathrm{univ}}=\Delta_e^2.
\]
It is absolutely irreducible and reduced.  Under the moment specialization
\(p_j=\sum_{i=1}^e x_i^j\), one has
\begin{equation}
\label{r2:eq:delta-vandermonde}
 \Delta_e=\prod_{1\leq i<j\leq e}(x_i+x_j).
\end{equation}
Moreover, \(V(\Delta_e)\) is the closure of the lower odd-moment image of
\(e-2\) roots.
\end{lemma}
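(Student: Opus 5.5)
We work throughout in the universal ring \(\mathcal R_e=\F_2[p_1,p_3,\ldots,p_{2e-3}]\) and compare it with the root ring \(k[x_1,\ldots,x_e]\) through the moment specialization \(p_j\mapsto\sum_i x_i^j\).

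\textbf{Step 1: the specialization is injective.} First I will show that the specialization is injective on \(\mathcal R_e\). The map \(\mathbf x\mapsto(\sum x_i,\sum x_i^3,\ldots,\sum x_i^{2e-3})\) goes from \(\mathbb A^e\) to \(\mathbb A^{e-1}\). Fix one coordinate, say \(x_e=0\). The restricted map has Jacobian determinant \(\prod_{i<j<e}(x_i+x_j)^2\neq0\), as in the proof of Proposition~\ref{r2:prop:completion-cover}. So it is dominant, and the odd moments are algebraically independent.

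\textbf{Step 2: the square root \(\Delta_e\).} Under the specialization, and using the Frobenius relations \(p_{2j}=p_j^2\) together with \(p_0=e\bmod2\), one has \(p_j=\sum_i x_i^j\) for all \(0\le j\le 2e-2\). Hence \(\mathscr H_e^{\mathrm{univ}}=V^\T V\), where \(V\) is the Vandermonde matrix. This gives \(\det\mathscr H_e=\prod_{i<j}(x_i+x_j)^2\).

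Put \(P=\prod_{i<j}(x_i+x_j)\). It is \(S_e\)-invariant in characteristic two, so it is a symmetric polynomial. I must show that it lies in the image of \(\mathcal R_e\).
\begin{itemize}
\item The determinant \(\det\mathscr H_e\) lies in the image. Since \(\mathcal R_e\) is a UFD, it suffices to know that \(\det\mathscr H_e\) is a square there.
\item Over \(\F_2\), a polynomial \(f\) is a square exactly when all its partial derivatives vanish. Writing \(f=\det\mathscr H_e\), the chain rule through the injective specialization gives \(\partial f/\partial p_{2j-1}\mapsto\) a combination of \(\partial(P^2)/\partial x_i=0\).
\item The chain-rule coefficients come from an invertible Jacobian on the dominant locus. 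So every \(\partial f/\partial p_{2j-1}\) vanishes, and \(f=g^2\) with \(g\) in the perfect closure. Because \(\mathcal R_e\) is a polynomial ring over the perfect field \(\F_2\), we get \(g\in\mathcal R_e\).
\end{itemize}
Setting \(\Delta_e=g\), uniqueness follows since \(\Delta_e^2\) determines \(\Delta_e\) in characteristic two. Then \(\Delta_e\mapsto P\) by injectivity, because \(g^2\mapsto P^2\) and square roots are unique in a domain of characteristic two.

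\textbf{Step 3: irreducibility and reducedness.} This is the step I expect to be the main obstacle. I will pass to the field \(M=k(x_1,\ldots,x_e)\), which contains \(k(\mathbf p)\). Suppose \(\Delta_e=AB\) over \(k\) with both factors nonconstant.
\begin{itemize}
\item In \(k[\mathbf x]\), the image of \(\Delta_e\) is the product of the distinct linear forms \(x_i+x_j\). So \(A\) and \(B\) map to complementary subproducts indexed by sets \(S,S^c\) of unordered pairs.
\item \(A\) and \(B\) lie in the \(S_e\)-invariant ring, since their images are polynomials in the \(p\)'s, hence symmetric. So \(S\) is stable under the action of \(S_e\) on unordered pairs.
\item That action is transitive. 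Hence \(S\) is empty or everything, contradicting nontriviality.
\end{itemize}
The delicate point is that \(A\) might fail to map to a \(k^\times\)-multiple of a subproduct. Unique factorization in \(k[\mathbf x]\) settles this. Since the linear factors of \(P\) are pairwise non-associate, the same argument gives reducedness.

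\textbf{Step 4: the zero set.} For the last claim, the odd-moment image of \(e-2\) roots equals the image of the \(e\) tuples \((r_1,\ldots,r_{e-2},y,y)\). This is because the pair \(y,y\) cancels in characteristic two. This locus lies in \(V(\Delta_e)\), since there \(x_{e-1}+x_e=0\).

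Its closure is irreducible. Its dimension is \(e-2\) by the Jacobian argument of Step 1 applied with \(e-2\) roots. The set \(V(\Delta_e)\) is an irreducible hypersurface of dimension \(e-2\), so the two closures coincide.
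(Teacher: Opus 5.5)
Your proof is correct. Steps 1, 3 and 4 follow the paper's argument:
\begin{itemize}
\item injectivity of the moment specialization comes from the Vandermonde-square Jacobian;
\item absolute irreducibility comes from unique factorization in \(k[x_1,\ldots,x_e]\) together with transitivity of \(S_e\) on unordered pairs;
\item \(V(\Delta_e)\) is identified by placing the \((e-2)\)-root image inside it and comparing dimensions. You pad with a repeated pair \(y,y\); the paper pads with two zeros.
\end{itemize}

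The genuine difference is Step 2, the existence of the square root in \(\F_2[p_1,p_3,\ldots,p_{2e-3}]\). The paper argues combinatorially. Because \(\mathscr H_e^{\mathrm{univ}}\) is symmetric, the determinant terms for \(\sigma\) and \(\sigma^{-1}\) coincide and cancel in characteristic two unless \(\sigma\) is an involution. Each surviving involution term is a square because \(p_{2i}=p_i^2\).

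You instead use the characteristic-two criterion that a polynomial over \(\F_2\) is a square exactly when all its partial derivatives vanish. You push \(\partial f/\partial p_{2j-1}\) through the injective specialization, where \(\partial(P^2)/\partial x_i=0\) and the matrix \((x_i^{2j-2})\) has full column rank \(e-1\). This is valid, and it neatly reuses Step 1. A small remark: the conclusion that \(\Delta_e\) maps to \(P\) only needs \(k[x_1,\ldots,x_e]\) to be a domain of characteristic two, not injectivity.

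The trade-off is that your route is purely existential. The paper's cancellation argument immediately yields the explicit involution expansion \eqref{r2:eq:delta-involutions}. That expansion is exactly what the paper uses afterward for the recursion \(\Delta_e=\Delta_{e-2}p_{2e-3}+\Phi_e\) and for the triangular-monomial coefficient behind the trivial translation stabilizer. With your approach that expansion would still have to be derived separately, and once it is, your derivative step is no longer needed.
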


\begin{proof}
In characteristic two, the non-involutive terms in the determinant expansion
cancel in inverse pairs.  Every surviving term is a square, so the first
assertion follows.  Alternatively, after the root specialization,
\(\mathscr H_e^{\mathrm{univ}}\)
is the product of the Vandermonde matrix and its transpose, which
gives~\eqref{r2:eq:delta-vandermonde}.

The homomorphism from the lower odd-moment ring to
\(k[x_1,\ldots,x_e]\) is injective.  Indeed, the Jacobian of the first
\(e-1\) odd power sums with respect to \(x_1,\ldots,x_{e-1}\) is
\[
 \det(x_i^{2j-2})_{1\leq i,j\leq e-1}
 =\prod_{i<j}(x_i+x_j)^2,
\]
which is nonzero.  If \(\Delta_e\) factored over \(k\), the image of each
factor in the root polynomial ring would, by unique factorization, be a
product of a subset of the distinct linear factors \(x_i+x_j\) in
\eqref{r2:eq:delta-vandermonde}.  That product is symmetric in the \(e\) roots.
Since the symmetric group is transitive on unordered pairs, the subset is
either empty or the full set.  Thus the factorization is trivial, and
\(\Delta_e\) is absolutely irreducible and reduced.

The lower odd-moment image of \(e-2\) roots is irreducible of dimension
\(e-2\), by the same Vandermonde-square Jacobian.  Padding these roots by
two zeros and applying~\eqref{r2:eq:delta-vandermonde} puts the image in
\(V(\Delta_e)\).  Both closures are irreducible hypersurfaces of dimension
\(e-2\) in the \((e-1)\)-dimensional lower-moment space, so they coincide.
\end{proof}

Taking the square root of the surviving determinant terms gives the useful
involution expansion
\begin{equation}
\label{r2:eq:delta-involutions}
 \Delta_e=
 \sum_{\substack{\tau\in S_{\{0,\ldots,e-1\}}\\ \tau^2=1}}
 \prod_{\tau(i)=i}p_i
 \prod_{\substack{\{i,j\}:\tau(i)=j\\i<j}}p_{i+j}.
\end{equation}
In particular, pairing the last two indices shows that
\begin{equation}
\label{r2:eq:delta-recursion}
 \Delta_e=\Delta_{e-2}p_{2e-3}+\Phi_e,
 \qquad
 \frac{\partial\Delta_e}{\partial p_{2e-3}}=\Delta_{e-2},
\end{equation}
where \(\Phi_e\) is independent of \(p_{2e-3}\).

The remaining ingredient is a coefficient that prevents a translation in
the next-to-last moment.

\begin{lemma}
\label{r2:lem:triangular-monomial}
Put
\[
 N_e\eqdef\frac{e(e-1)}2,
 \qquad
 \kappa_e\eqdef N_e-(2e-5).
\]
For every \(e\geq4\), the coefficient of
\(p_1^{\kappa_e}p_{2e-5}\) in \(\Delta_e\) is one.
\end{lemma}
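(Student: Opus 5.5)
The plan is to read the coefficient directly off the involution expansion~\eqref{r2:eq:delta-involutions}, after rewriting every moment in terms of the odd generators. By the recursive rule \(p_{2j}=p_j^2\), each \(p_n\) with \(n\geq1\) equals \(p_{n'}^{2^{v}}\), where \(n=2^{v}n'\) with \(n'\) odd; in addition \(p_0=e\bmod2\). In~\eqref{r2:eq:delta-involutions}, a fixed point \(i\) of \(\tau\) contributes the label \(i\), and a transposition \(\{i,j\}\) contributes the label \(i+j\). An involution \(\tau\) therefore contributes to the monomial \(p_1^{\kappa_e}p_{2e-5}\) exactly when three conditions hold. First, exactly one label equals \(2e-5\); its power must be one, so its \(2\)-adic valuation is zero. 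Second, every other label is a power of two. Third, the label \(0\) occurs only when \(e\) is odd, where it contributes the factor \(p_0=1\); when \(e\) is even such terms vanish. The exponent of \(p_1\) is then forced to equal \(\kappa_e\) by weighted homogeneity of weight \(N_e\). So the lemma reduces to the combinatorial claim that the number of such ``admissible'' involutions of \(\{0,\ldots,e-1\}\) is odd.

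I would prove that claim in the following order. First, locate the label \(2e-5\). For \(e\geq5\) it cannot be a fixed point, since \(2e-5>e-1\). It must therefore be a transposition \(\{i,2e-5-i\}\) with \(e-4\leq i<2e-5-i\leq e-1\), that is, one of the pairs \(\{e-4,e-1\}\) or \(\{e-3,e-2\}\). The case \(e=4\) also allows the fixed point \(3\), and I would check it by hand from \(\Delta_4\). Second, for each choice, count mod \(2\) the admissible involutions of the remaining \(e-2\) indices whose labels are all powers of two or \(0\). Call \(P(S)\) this parity for an index set \(S\).

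The count \(P(S)\) can be computed by a second application of the same expansion. Such involutions are exactly the terms of \(\Delta\) evaluated at the specialization where every odd moment except \(p_1\) vanishes. That specialization is \(p_n=p_1^n\) for \(n\) a power of two and \(p_n=0\) otherwise, which is the moment sequence of a single root \(x=p_1\), apart from the \(p_0\) entry. The restricted Hankel matrix on an index set \(S\) is then the rank-one matrix \((x^{i+j})\) plus the correction \((e\bmod2)-1\) in the \((0,0)\) entry. Its determinant, and hence its canonical square root, is therefore immediate: it vanishes when \(|S|\geq3\) or when the correction is absent, and is an explicit power of \(x\) in the surviving small cases. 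So I would evaluate \(P(S)\) by this rank-one determinant rather than by enumeration.

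The two candidate pairs leave different complements \(S\). I expect exactly one of them to give an odd count, producing coefficient one. The main obstacle is making this last step rigorous. The leftover set \(S\) has \(e-2\) elements, so the rank-one determinant appears to vanish identically for large \(e\). This suggests that my splitting into ``distinguished pair, then rank-one rest'' is too naive. The Hankel matrix restricted to \(S\) is not a principal submatrix of a moment matrix in a way that lets the labels \(i+j\) be arbitrary powers of two independently.

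If that happens, I would instead fall back on Jacobi's formula. Write \(\partial\Delta_e^2/\partial p_{2e-5}\) as the sum of the cofactors at the Hankel positions \((i,j)\) with \(i+j=2e-5\), evaluated at the one-root specialization above. I would then track the single cofactor whose minor is triangular with unit-coefficient diagonal, which is where the ``triangular monomial'' of the lemma's name should come from.
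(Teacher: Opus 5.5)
\textbf{There is a genuine gap: the parity count, which is the whole content of the lemma, is never carried out.} Your setup matches the paper's: the involution expansion at \(p_1=1\) with the other odd moments set to zero, the two candidate edges \(\{e-4,e-1\}\) and \(\{e-3,e-2\}\), and the case \(e=4\) read off from \(\Delta_4\). This reduces the lemma to the parity of admissible completions on the complements, which in the paper's notation are \(U_{e-2}\cup\{0\}\) and \(V_{e-1}\cup\{0\}\). Both tools you propose for that count fail.

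\emph{The rank-one claim.} The specialization \(p_n=1\) for \(n\) a power of two and \(p_n=0\) for other \(n\geq1\) is not the moment sequence of a single root, since one root \(x\) has \(p_3=x^3\neq0\). So the matrix with entries \([\,i+j\text{ is a power of two}\,]\) is not rank one. Already on \(S=\{1,2\}\) it is the \(2\times2\) identity, since \(p_2=p_4=1\) and \(p_3=0\). The vanishing you noticed comes from this misidentification, not from the principal-submatrix idea. That idea is actually sound: in characteristic two, the involution sum over \(S\) is exactly the square root of the principal minor on \(S\).

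\emph{The Jacobi fallback.} It is degenerate in characteristic two: \(\partial(\Delta_e^2)/\partial p_{2e-5}=2\Delta_e\,\partial\Delta_e/\partial p_{2e-5}=0\). Concretely, the symmetric cofactors at \((i,j)\) and \((j,i)\) coincide and cancel. Moreover, the coefficient of an edge label in \(\Delta_e\) is governed by the principal minor with both endpoints deleted, not by a cofactor.

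\textbf{The missing idea is the paper's forced-pair reduction.}
\begin{itemize}
\item In a complement \(S\), let \(M\) be the largest element that is not a power of two. It cannot be fixed. It cannot be matched with a larger element, because any larger element is a power of two \(2^a>M\) and \(2^a<M+2^a<2^{a+1}\). Hence \(M\) is forced to pair with \(2^{\lceil\log_2M\rceil}-M\).
\item Peeling such forced pairs either fails, giving zero completions, or leaves \(k(S)\) powers of two. Each of these is either fixed or matched with vertex \(0\), and \(0\) may be fixed only when \(e\) is odd. The parity of completions is therefore \(k(S)+(e\bmod2)\).
\item A strong induction over dyadic blocks \(P\leq n<2P\) evaluates \(U_n\) and \(V_n\). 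The boundary values \(n=P,P+1,P+2\) are checked directly. For \(n\geq P+3\), removing the forced pairs with sum \(2P\) sends \(U_n\) to \(V_{2P+2-n}\cup\{P\}\) and \(V_n\) to \(U_{2P+2-n}\cup\{P\}\).
\item The induction shows that \(U_n\) and \(V_n\) reduce exactly when \(n\) is even, with \(k(U_n)\) odd and \(k(V_n)\) even.
\item For even \(e\), only \(U_{e-2}\) reduces, and its odd \(k\) gives parity one. For odd \(e\), only \(V_{e-1}\) reduces, and its even \(k\) plus the fixed zero gives parity one.
\end{itemize}
Without this argument, or an equivalent evaluation of the \(0/1\) determinant (which amounts to the same leaf-stripping on the tree \(i\sim j\) iff \(i+j\) is a power of two), your expectation that exactly one complement gives an odd count is correct but unproved.
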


\begin{proof}
Set \(p_1=1\), retain \(p_{2e-5}\), and set every other independent odd
moment to zero in~\eqref{r2:eq:delta-involutions}.  A positive fixed vertex
survives precisely when its label is a power of two.  An ordinary edge
survives precisely when the sum of its endpoints is a power of two.  To
obtain the coefficient of \(p_{2e-5}\), exactly one edge must instead have
endpoint sum \(2e-5\).  There are only two possible such edges:
\[
 \{e-4,e-1\},
 \qquad
 \{e-3,e-2\}.
\]

\medskip
\noindent\emph{Forced-pair reduction.}
We record the parity of the remaining involutions.  For a finite set \(S\)
of positive integers, repeatedly take its largest element \(M\) that is not
a power of two.  Its only possible smaller partner with power-of-two sum is
\[
 2^{\lceil\log_2 M\rceil}-M.
\]
If that partner is absent, the reduction fails; otherwise remove the pair
and continue.  At the end, suppose that \(k(S)\) powers of two remain.  When
vertex zero is also present, it can be paired with any one of these \(k(S)\)
vertices, while all the others are fixed; in addition, zero itself may be
fixed exactly when \(e\) is odd.  The parity of completions is therefore
\begin{equation}
\label{r2:eq:completion-parity}
 k(S)+(e\bmod2).
\end{equation}

For \(n\geq3\), define
\[
 U_n\eqdef\{1,\ldots,n\}\setminus\{n-2\},
 \qquad
 V_n\eqdef\{1,\ldots,n-3\}\cup\{n\}.
\]
Strong induction gives the following elementary reduction facts: both
\(U_n\) and \(V_n\) reduce successfully exactly when \(n\) is even; in that
  case \(k(U_n)\) is odd and \(k(V_n)\) is even.  For \(n=3\), both sets fail
  immediately: their largest vertex is \(3\), whose forced partner for sum
  \(4\) is the absent vertex \(1\).  Let \(P\leq n<2P\), where
  \(P\) is a power of two.  The boundary sets for \(n=P,P+1,P+2\) reduce
directly and give, respectively, success, failure, and success with the
  asserted parities.  Indeed, for \(P\geq8\), the forced reduction leaves
  \(\{2,P/2,P\}\) and \(\{1,2,P/2,P\}\) from \(U_P\) and \(V_P\),
  respectively, and leaves \(\{1,2,P/2\}\) and \(\{2,P/2\}\) from
  \(U_{P+2}\) and \(V_{P+2}\); the case \(P=4\) gives the same parities
  directly.  For \(n=P+1\), the largest vertex \(P+1\) requires the absent
  partner \(P-1\) in both sets, so both reductions fail.  If \(n\geq P+3\),
peeling the forced pairs whose sums
  are \(2P\) leaves
\[
   U_n\longrightarrow V_{2P+2-n}\cup\{P\},
 \qquad
   V_n\longrightarrow U_{2P+2-n}\cup\{P\}.
\]
  The smaller index \(2P+2-n\) has the same parity as \(n\).  The induction
  hypothesis therefore proves success exactly for even \(n\), while the extra
  singleton \(P\) interchanges the two asserted parities.

\medskip
\noindent\emph{Application to the coefficient.}
For \(e\geq5\), removing the first special edge leaves the positive set
\(U_{e-2}\), while removing the second leaves \(V_{e-1}\); vertex zero
remains in both cases.  If \(e\) is even, only \(U_{e-2}\) reduces, and its
odd value of \(k\) makes~\eqref{r2:eq:completion-parity} equal to one.  If
\(e\) is odd, only \(V_{e-1}\) reduces, and its even value of \(k\), together
with the fixed-zero contribution, again gives one.  The boundary case is
best read directly from the involution expansion:
\[
 \Delta_4=p_1p_5+p_3^2+p_1^3p_3+p_1^6.
\]
Thus the coefficient of \(p_1^3p_3\) is one.  This proves the assertion for
every \(e\geq4\).
\end{proof}

\begin{proposition}
\label{r2:prop:no-translation-stabilizer-proof}
For every \(e\geq2\), the additive translation stabilizer of the universal
Hankel divisor is trivial:
\[
 \{\mathbf a:V(\Delta_e)+\mathbf a=V(\Delta_e)\}=\{\bm 0\}.
\]
\end{proposition}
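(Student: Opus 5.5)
Since \(\Delta_e\) is absolutely irreducible and reduced (Lemma~\ref{r2:lem:delta-irreducible}), \(V(\Delta_e)+\mathbf a=V(\Delta_e)\) holds if and only if the translated polynomial \(\Delta_e(\mathbf p+\mathbf a)\) is a nonzero scalar multiple of \(\Delta_e(\mathbf p)\). Translation preserves the top weighted-homogeneous part, so that scalar is \(1\). Thus the plan is to show that \(\Delta_e(\mathbf p+\mathbf a)=\Delta_e(\mathbf p)\) forces \(\mathbf a=\bm 0\).

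The cases \(e=2,3\) are direct. For \(\Delta_2=p_1\), the identity gives \(a_1=0\). For \(\Delta_3\), the displayed difference \(a_1p_1^2+a_1^2p_1+a_1^3+a_3\) must vanish identically, so \(a_1=0\) and then \(a_3=0\).

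For \(e\geq4\) I will eliminate the coordinates of \(\mathbf a=(a_1,a_3,\ldots,a_{2e-3})\) from the top down, using the recursion \eqref{r2:eq:delta-recursion} and induction on \(e\) in steps of two.
\begin{itemize}
\item Differentiate the identity with respect to \(p_{2e-3}\). The recursion gives \(\Delta_{e-2}(\mathbf p'+\mathbf a')=\Delta_{e-2}(\mathbf p')\), where \(\mathbf p',\mathbf a'\) are the first \(e-2\) coordinates. By induction (\(e-2\geq2\)), \(a_1=a_3=\cdots=a_{2e-7}=0\). When \(e=4\), \(\Delta_2=p_1\) gives only \(a_1=0\). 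In both cases the next-to-last coordinate \(a_{2e-5}\) and the last coordinate \(a_{2e-3}\) are still undetermined.
\item Kill \(a_{2e-5}\) with the triangular monomial of Lemma~\ref{r2:lem:triangular-monomial}. With only \(p_{2e-5}\) and \(p_{2e-3}\) shifted, expand \(\Delta_e(\mathbf p+\mathbf a)-\Delta_e(\mathbf p)\) to first order in \(a_{2e-5}\). The coefficient of \(a_{2e-5}p_1^{\kappa_e}\) comes from \(\partial\Delta_e/\partial p_{2e-5}\). By the lemma it equals \(1\), because \(p_{2e-5}\) appears only linearly in that monomial. I still need to check that the \(a_{2e-3}\)-contribution, which is \(a_{2e-3}\Delta_{e-2}(\mathbf p)\), cannot produce \(p_1^{\kappa_e}\). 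Weight comparison settles this: \(p_1^{\kappa_e}\) has weight \(N_e-(2e-5)\), while the \(\Delta_{e-2}\) terms have weight \(N_{e-2}=N_e-(2e-3)\). Higher-order terms in \(a_{2e-5}\) involve \(p_{2e-5}\) appearing with degree at least \(2\). This would come from the square \(p_{4e-10}=p_{2e-5}^2\) if \(4e-10\leq 2e-2\), i.e. \(e\leq4\), and that case is handled by the explicit \(\Delta_4\). They also carry different weight. Comparing coefficients therefore forces \(a_{2e-5}=0\).
\item Once all other coordinates vanish, the difference is exactly \(a_{2e-3}\Delta_{e-2}(\mathbf p)\), and \(\Delta_{e-2}\neq0\), so \(a_{2e-3}=0\).
\end{itemize}

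The main obstacle is the middle step. There I must confirm that no other monomial in the expansion of the translated polynomial contributes to \(a_{2e-5}p_1^{\kappa_e}\). The cleanest approach is weighted homogeneity: assign \(a_j\) weight \(j\), so each term of the difference has total weight \(N_e\), then isolate the part that is linear in \(a_{2e-5}\) and free of all other \(p\)'s except \(p_1\). For \(e=4\) I will do this explicitly with \(\Delta_4=p_1p_5+p_3^2+p_1^3p_3+p_1^6\). The translation \(p_3\mapsto p_3+a_3\) produces \(a_3p_1^3\) together with the constant \(a_3^2\), and \(p_5\mapsto p_5+a_5\) produces \(a_5p_1\). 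The coefficient of \(p_1^3\) gives \(a_3=0\), and after that the coefficient of \(p_1\) gives \(a_5=0\).
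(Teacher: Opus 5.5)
Your proposal is correct and follows the paper's proof essentially step for step. The scalar is forced to be $1$ by the top weighted part, comparing $p_{2e-3}$-coefficients reduces to translation invariance of $\Delta_{e-2}$ and induction, the triangular monomial $p_1^{\kappa_e}p_{2e-5}$ plus a weight count kills $a_{2e-5}$, and the residual difference $a_{2e-3}\Delta_{e-2}$ kills $a_{2e-3}$.

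Your side remark that a factor $p_{2e-5}^2$ can arise only through the entry $p_{4e-10}$ (hence only when $e\leq4$) is inaccurate. For instance, $\Delta_5$ contains $p_5^2$, coming from the involution $(1\,4)(2\,3)$ fixing $0$. This does not affect the proof, because the weight count alone already excludes every term of degree at least two in $a_{2e-5}$ from the coefficient of $p_1^{\kappa_e}$.
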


\begin{proof}
The base cases are direct:
\[
 \Delta_2=p_1,
 \qquad
 \Delta_3=p_1^3+p_3.
\]
For \(e\geq4\), Lemma~\ref{r2:lem:delta-irreducible} shows that a translation
preserving the divisor satisfies
\(\Delta_e(\mathbf p+\mathbf a)=c\Delta_e(\mathbf p)\) for some nonzero
constant \(c\).  The top weighted part is unchanged by translation, so
\(c=1\).  Comparing the coefficient of \(p_{2e-3}\) in
\eqref{r2:eq:delta-recursion} gives
\[
 \Delta_{e-2}(\mathbf p_{<2e-5}+\mathbf a_{<2e-5})
 =\Delta_{e-2}(\mathbf p_{<2e-5}).
\]
Thus the truncated translation stabilizes \(V(\Delta_{e-2})\); induction
from the displayed base cases forces all translation coordinates through
\(a_{2e-7}\) to vanish.

Only \(a_{2e-5}\) and \(a_{2e-3}\) remain.  Since \(\Delta_e\) is weighted
homogeneous of weight \(N_e\), Lemma~\ref{r2:lem:triangular-monomial} makes the
coefficient of \(p_1^{\kappa_e}\) in
\(\Delta_e(\mathbf p+\mathbf a)-\Delta_e(\mathbf p)\) equal to
\(a_{2e-5}\); no other translated monomial has the required residual
weight.  Thus \(a_{2e-5}=0\).  Finally,~\eqref{r2:eq:delta-recursion} reduces
the difference to \(a_{2e-3}\Delta_{e-2}\), so \(a_{2e-3}=0\) as well.
\end{proof}

Lemma~\ref{r2:lem:delta-irreducible} proves parts~(i) and~(ii) of
Lemma~\ref{r2:lem:universal-hankel},
\eqref{r2:eq:delta-recursion} proves part~(iii), and
Proposition~\ref{r2:prop:no-translation-stabilizer-proof} proves part~(iv).
This completes the proof of Lemma~\ref{r2:lem:universal-hankel}.

\section{The local sign calculation}
\label{r2:app:local-sign}

We prove Lemma~\ref{r2:lem:generic-sign-ramification} and the identity used for
pure highest-coordinate planes.  All local rings in this appendix are taken
after extending the constant field to an algebraic closure of characteristic
two.  The calculation below is written for \(e\geq4\); the direct boundary
calculations for \(e=2,3\) appear in
Lemma~\ref{r2:lem:boundary-completion-covers}.

\subsection{Nonvanishing of the highest-moment defect}

At the generic point of the lower Hankel divisor, the lower residual moments
are represented by \(e-2\) distinct finite roots
\(r_1,\ldots,r_{e-2}\).  Recall the residual Vandermonde product
\[
        D_R\eqdef\prod_{i<j}(r_i+r_j).
\]
We first justify the assertion that
\[
 \eta_{\bfs}
 =p_{2e-1}-\sum_{i=1}^{e-2}r_i^{2e-1}
\]
is not identically zero for a nonzero, non-one-column target.

Suppose otherwise.  Consider a component of the pulled-back lower Hankel
divisor which dominates the generic point of \(V(\Delta_e)\) and is
generically disjoint from the collision divisor of the common-core moment
map; this is the component used in the local calculation below.  The
common-core lower-moment map is generically finite there, since its Jacobian
is a nonzero Vandermonde square.  The map
\[
 (r_1,\ldots,r_{e-2})\longmapsto
 \left(\sum_i r_i,\sum_i r_i^3,\ldots,
       \sum_i r_i^{2e-3}\right)
\]
is likewise generically finite onto \(V(\Delta_e)\): the Jacobian minor
formed by its first \(e-2\) coordinates is a nonzero Vandermonde square.
Hence the corresponding lower-moment incidence has an
\((e-2)\)-dimensional component in the ordered variables
\[
 (z_1,\ldots,z_{e-1},r_1,\ldots,r_{e-2}).
\]
If \(\eta_{\bfs}\) vanished identically on this component, the final
odd-moment equation would hold there as well.  We would therefore obtain an
\((e-2)\)-dimensional family of representations of \(\bfs^{\T}\) by
\(2e-3\) ordered locator slots.  We show that no such family exists.  In any
multiplicity stratum, let \(r\) distinct
locator values occur with odd multiplicity and let \(s\) further values
occur with positive even multiplicity.  Then
\[
 r+2s\leq2e-3.
\]
The even-multiplicity values cancel from all power sums and contribute only
\(s\) free parameters.  The Jacobian of the \(e\) odd power sums in the
\(r\) odd-multiplicity values is a Vandermonde matrix in their squares and
has rank \(\min\{r,e\}\).  The stratum dimension is therefore at most
\[
 s+\max\{r-e,0\}\leq e-3,
\]
except when \(r=0\) or \(r=1\).  The first exception gives the zero target,
and the second gives a one-column target.  Both are excluded.  This proves
that \(\eta_{\bfs}\neq0\) at the generic point.

\subsection{The two-escaping-root chart}

For two roots \(x,y\), put
\[
 s\eqdef x+y,
 \qquad
 P\eqdef xy,
 \qquad
 U_j(s,P)\eqdef x^{2j-1}+y^{2j-1}.
\]
The quadratic power-sum recurrence and weighted homogeneity give coefficients
\(c_{j,\ell}\in\F_2\) such that
\begin{equation}
\label{r2:eq:pair-power-expansion}
 U_j(s,P)=sP^{j-1}
 +\sum_{\ell=1}^{j-1}c_{j,\ell}s^{2\ell+1}P^{j-1-\ell}.
\end{equation}
The coefficient of
\(sP^{j-1}\) is one; equivalently, after writing \(x=y+s\), the expression
modulo \(s^2\) is \(sy^{2j-2}=sP^{j-1}\).

Compactify the pair parameters by
\[
 P=u^{-1},
 \qquad
 s=wu^{e-1}.
\]
Then~\eqref{r2:eq:pair-power-expansion} becomes the regular polynomial
\begin{equation}
\label{r2:eq:regularized-pair-moments}
 \widetilde U_j(u,w)
 =wu^{e-j}
 +\sum_{\ell=1}^{j-1}c_{j,\ell}w^{2\ell+1}
 u^{e-j+\ell(2e-1)},
 \qquad 1\leq j\leq e.
\end{equation}
Consequently, the moment map extends through \(u=0\) as the morphism
\[
 \Psi\colon\mathbb A^{e}_{\overline F}\longrightarrow
 \mathbb A^{e}_{\overline F}
\]
whose ordered source coordinates are
\((r_1,\ldots,r_{e-2},u,w)\), whose target coordinates are the odd moments
of degrees \(1,3,\ldots,2e-1\), and whose components are
\[
 \Psi(r_1,\ldots,r_{e-2},u,w)_j
 \eqdef
 \sum_{i=1}^{e-2}r_i^{2j-1}+\widetilde U_j(u,w).
\]
At the boundary point \(u=0,w=\eta_{\bfs}\), its Jacobian determinant is
\begin{equation}
\label{r2:eq:local-chart-jacobian}
 \det(d\Psi)
 =\eta_{\bfs}\prod_{i<j}(r_i+r_j)^2
 =\eta_{\bfs}D_R^2\neq0.
\end{equation}
Indeed, the root columns form a Vandermonde-square block, the \(u\)-column
has its sole nonzero entry \(\eta_{\bfs}\) in row \(e-1\), and the
\(w\)-column has its sole nonzero entry one in row \(e\).  Thus \(\Psi\) is
\'{e}tale there and supplies an actual branch of the normalization on which
the displayed finite roots remain distinct and two roots escape.

We also need transversality after pulling this chart back to the common-core
space.  Recall that \(D_{\bfs}^{\circ}=D_{\bfs}\cap U_{\mathrm{core}}\), and
that \(\Gamma_{\bfs}\) is the closure of a component of the finite \'{e}tale
base change over \(D_{\bfs}^{\circ}\).  We may therefore work at its generic
 point, where the common-core moment map is \'{e}tale and
 \[
 \varpi\eqdef
 \Delta_e\bigl(\pi_e(\bfs^\T)+\mathbf m_e(\mathbf z)\bigr)
\]
remains a uniformizer after the unramified base change.

The Vandermonde product factors exactly as
\begin{align}
\label{r2:eq:local-vandermonde-factorization}
 \varpi
 &=D_Rs\prod_{i=1}^{e-2}(P+sr_i+r_i^2)\notag\\
 &=D_Rwu\prod_{i=1}^{e-2}(1+r_i^2u+wr_iu^e).
\end{align}
The final product is a unit with residue one.  Hence \(u\) and \(\varpi\) are
uniformizers, and~\eqref{r2:eq:local-vandermonde-factorization} gives
\begin{equation}
\label{r2:eq:escaping-asymptotics}
 P=\frac{\eta_{\bfs}D_R}{\varpi}(1+O(\varpi)),
 \qquad
 s=\frac{\varpi^{e-1}}
 {\eta_{\bfs}^{\,e-2}D_R^{\,e-1}}(1+O(\varpi)).
\end{equation}
The Newton polygon of \(z^2+sz+P\) has one segment from \((0,-1)\) to
\((2,0)\).  Its two roots have negative valuation and are exactly the two
escaping roots.

\subsection{Identification of the sign quadratic}

After the substitution \(z=x/s\), the quadratic interchanging the escaping
pair becomes
\[
 z^2+z=\frac{P}{s^2}.
\]
Let \(L/K\) be the global \(S_e\)-completion field, let
\(\tau\in S_e\) be the transposition of the escaping pair, and put
\[
 E\eqdef L^{\langle\tau\rangle},
 \qquad
 Q\eqdef L^{A_e}.
\]
The local chart orders the finite roots but leaves the escaping pair
unordered, so it describes the unramified intermediate field \(E\); the
quadratic extension \(L/E\) orders that pair.  More precisely, the \'{e}tale
boundary chart shows that the inertia action fixes the finite roots and can
only interchange the escaping pair.  Hence the inertia group is contained
in \(\langle\tau\rangle\), and its fixed field \(E\) is unramified at the
chosen base valuation.  The odd Artin--Schreier pole computed below shows
that the containment is in fact an equality.  Since
\[
 \langle\tau\rangle\cap A_e=1,
 \qquad
 \langle\langle\tau\rangle,A_e\rangle=S_e,
\]
Galois correspondence gives \(EQ=L\) and \(E\cap Q=K\).  Therefore the
displayed local quadratic is exactly the base change of the global sign
field, not an unrelated unramified twist.

Substitution of~\eqref{r2:eq:escaping-asymptotics} yields
\begin{equation}
\label{r2:eq:sign-leading-pole}
 \frac{P}{s^2}
 =\eta_{\bfs}^{\,2e-3}D_R^{\,2e-1}\varpi^{-(2e-1)}
 +O\!\left(\varpi^{-(2e-2)}\right).
\end{equation}
The leading pole order \(2e-1\) is odd and its coefficient is nonzero.  An
Artin--Schreier coboundary cannot remove an odd leading pole
\cite[Proposition~3.7.8]{Stichtenoth2009}.  This proves
Lemma~\ref{r2:lem:generic-sign-ramification}.

\subsection{The pure highest-coordinate defect}
\label{r2:subsec:pure-highest-defect}

Let \(\bfs_\gamma^{\T}=(0,\ldots,0,\gamma)^{\T}\).  On the common Hankel
divisor, combine the \(e-1\) common roots with the \(e-2\) finite residual
roots and cancel repeated nonzero values in pairs.  The resulting binary
support has size at most \(2e-3\), and all its odd moments through degree
\(2e-3\) vanish.  Frobenius powers then give the vanishing of every moment
of degree \(1,\ldots,2e-2\).  If the reduced support had size \(v>0\), its
first \(v\) moment equations would form an invertible Vandermonde system in
its distinct nonzero locators, a contradiction.  The support is empty, so
its highest odd moment also vanishes.  It follows that
\[
 \eta_{\bfs_\gamma}=\gamma.
\]
Thus the coefficient in~\eqref{r2:eq:sign-leading-pole} depends on a pure
highest target through \(\gamma^{2e-3}\), exactly as used in
Proposition~\ref{r2:prop:pure-highest-plane}.

\section{A ten-locator curve for pure three-error planes}
\label{app:pure-e3-exact-ten}

For locators \(y_1,\ldots,y_{10}\in F\) and labels
\(\epsilon_j=(\epsilon_{j,1},\epsilon_{j,2},\epsilon_{j,3})\in\F_2^3\),
\(1\leq j\leq10\), define, for \(1\leq k\leq3\), the \(k\)-th selector-row
syndrome by
\(\sum_{j=1}^{10}\epsilon_{j,k}h_3(y_j)\).  Thus the following proposition
asserts the existence of locators and labels that realize the three displayed
pure targets simultaneously.

\begin{proposition}[Ten-locator realization]
\label{prop:pure-e3-ten-locator}
Let \(F=\F_{2^m}\) with \(m\geq18\), and let
\(\gamma_1,\gamma_2,\gamma_3\in F\) be linearly independent over
\(\F_2\).  There are ten locator slots over \(F\) whose three binary
selector rows have syndromes
\[
 (0,0,\gamma_1)^\T,\qquad
 (0,0,\gamma_2)^\T,\qquad
 (0,0,\gamma_3)^\T.
\]
Consequently, their span has BCH-column support at most \(10\).
\end{proposition}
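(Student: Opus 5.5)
The plan is to make the ten slots realize a binary $[10,3,5]$ code that meets the Griesmer bound, and then to solve the resulting moment equations on an explicit curve. For the labels I take all seven nonzero vectors of $\F_2^3$ once each, plus the three unit vectors a second time. For a nonzero functional $\lambda$ on $\F_2^3$, the number of slots with $\lambda\cdot\epsilon_j=1$ is $4$ from the seven-point part plus the Hamming weight of $\lambda$ from the repeated unit vectors, so it is at least $5$. This matches the lower bound in Proposition~\ref{prop:pure-e3-exact-ten}, and each selector row uses exactly five slots. Writing $S_k$ for the five slots in row $k$, the requirement is the nine equations
\[
 \sum_{j\in S_k}y_j=0,\qquad \sum_{j\in S_k}y_j^3=0,\qquad \sum_{j\in S_k}y_j^5=\gamma_k\qquad(k=1,2,3)
\]
in ten unknowns. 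These cut out a curve $\mathcal W_\Gamma$, and it suffices to find one $F$-rational point on it. Collisions and zero slots are harmless, because of the reduced-support convention and because the conclusion is only an upper bound.

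A generic B\'ezout bound (degree $\leq15^3$) combined with Cafure--Matera would need a field exponent far above $18$. So I will not use the general machinery and will instead reduce $\mathcal W_\Gamma$ to a low-degree plane model. The reduction has three steps.

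\begin{enumerate}
\item Use the translation map $\mathcal T_b$ from the proof of Proposition~\ref{r2:prop:four-upper-twelve}, and the fact that every row has odd size five, to normalize away the first-moment equations linearly. This eliminates three variables.
\item Let $y_7$ be the slot labelled $(1,1,1)$, which is shared by all three rows. Use $y_7$ and one doubly shared slot as parameters, and for each row solve the cubic equation for the remaining variables as in the $e=2$ completion, where pair sums and products are governed by $p_1,p_3$.
\item Take the three fifth-moment equations, which are affine in $\gamma_k$. After the previous steps they become the equations of a plane curve $\mathcal C_\Gamma$ of explicitly bounded degree $d$.
\end{enumerate}

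Since the quantity $\gamma_k^{2e-3}=\gamma_k^3$ is only a leading-term invariant, the construction should use all of $\gamma_1,\gamma_2,\gamma_3$ through their linear independence, not through their cubes.

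The main obstacle is proving that $\mathcal C_\Gamma$ is absolutely irreducible for every independent triple $\gamma_1,\gamma_2,\gamma_3$, including when $\Gamma$ lies in a scaled copy of $\F_8$ or is built from scaled copies of $\F_4$. My first approach is an Artin--Schreier/Eisenstein argument at a place where one pair of slots escapes, in the spirit of Lemma~\ref{r2:lem:generic-sign-ramification}. I expect the polar coefficient there to be a nonzero binary combination of the $\gamma_k$, which linear independence forces to be nonzero, and an odd reduced pole then rules out any splitting. If a single place does not suffice, I would compare two escaping places, as in Lemma~\ref{lem:pure-sign-independence-small-e}.

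Given irreducibility, the point count is routine. Hasse--Weil gives at least $q+1-(d-1)(d-2)\sqrt q-d$ points on $\mathcal C_\Gamma$. From these I subtract the $O(d^2)$ points on the degenerate loci, namely vanishing discriminants in the quadratic-completion step and points at infinity. The remaining estimate should be positive once $q\geq2^{18}$, and the threshold $m\geq18$ is what I expect this to produce for the explicit $d$.
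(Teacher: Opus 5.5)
Your label pattern is a legitimate Griesmer $[10,3,5]$ code. However, the proposal has a genuine gap at the step that carries most of the work: splitting the private pairs over $F$.

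With your labels, let $a,b,c,d$ be the slots labelled $(1,1,0),(1,0,1),(0,1,1),(1,1,1)$. The first and third moments force the private pair of row $1$ to have sum $a+b+d$ and product $(a+b)(a+d)(b+d)/(a+b+d)$. Hence $\gamma_1=abd(a+b)(a+d)(b+d)/(a+b+d)$, with analogous formulas for rows $2$ and $3$. That pair has entries in $F$ only when $y^2+y=(a+b)(a+d)(b+d)/(a+b+d)^3$ is solvable in $F$. This is a trace condition that holds at roughly half of the base points; it is not the complement of a ``vanishing discriminant'' locus.

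So Hasse--Weil must be applied to the $(\mathbb Z/2\mathbb Z)^3$ Artin--Schreier cover of your base curve. You must prove that the three classes are independent over the base function field, after the scaling (Kummer) step. This is the core of the paper's argument:
\begin{itemize}
\item it computes four classes $\alpha_u,\alpha_v,\alpha_A,\alpha_C$, each with a private simple pole;
\item the residue table shows that at most one private residue can vanish, because two vanishings force $a^2+a+1=0$ and then $b=a$;
\item the common simple pole at $\xi_D$ handles that one possible exceptional class;
\item the conductor--discriminant formula then bounds the genus of the compositum by $254$.
\end{itemize}

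Second, you only expect absolute irreducibility and the degree or genus bound; you do not prove them. After eliminating the private pairs, your base is the fiber of a rational map $\mathbb A^4\to\mathbb A^3$ that is homogeneous of degree five. This is a space curve, not a plane curve. Fixing $y_7$ and one doubly shared slot still leaves shared slots undetermined in every row, so step 2 does not solve the rows one at a time. Nothing in the proposal controls this curve's irreducibility for every independent triple, or its genus. The claim that $m\geq18$ will come out is therefore unsupported.

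Your polar-coefficient heuristic is borrowed from the sign-field setting over a common core. At best it addresses independence of Artin--Schreier covers, not irreducibility of the base.

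The paper avoids all of this by choosing a different $[10,3,5]$ pattern, with four weight-five words instead of your three.
\begin{itemize}
\item The fourth word gives $\gamma_4=\Pi_AR\Pi_C$, hence $(PQw)^2=\gamma_1\gamma_2\gamma_3/\gamma_4$.
\item This permits $w=\lambda R$ with $\lambda=\sqrt s$, and an explicit rational parametrization of the base by $x$.
\item Geometric integrality then follows from the explicit Kummer step $R^5=\gamma_2/(pq)$, whose divisor is computed (genus $8$), together with the Artin--Schreier independence.
\item The genus bound $254$ gives $|F|+1-508\sqrt{|F|}>96$ precisely for $|F|\geq2^{18}$.
\end{itemize}

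Step 1 is also misdirected. A common translation shifts all three first moments by the same amount and destroys purity. The three first-moment equations are linear, so simply eliminate them.
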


\begin{proof}
We first display the support pattern.  The ten slots and their selector
labels are
\[
\begin{array}{c|c|c}
 \text{slots} & \text{multiplicity} & \text{selector label}\\
 \hline
 u_1,u_2 & 2 & (1,1,0)\\
 v_1,v_2 & 2 & (0,1,1)\\
 w       & 1 & (1,0,1)\\
 A_1,A_2 & 2 & (1,0,0)\\
 R       & 1 & (0,1,0)\\
 C_1,C_2 & 2 & (0,0,1).
\end{array}
\]
Thus the three weight-five selector rows are
\[
 \mathcal A=\{u_1,u_2,w,A_1,A_2\},\quad
 \mathcal B=\{u_1,u_2,v_1,v_2,R\},\quad
 \mathcal C=\{v_1,v_2,w,C_1,C_2\}.
\]
Here sums of selector sets mean symmetric differences.  Their sum is the
fourth weight-five set
\[
 \{A_1,A_2,R,C_1,C_2\}.
\]
The other three nonzero words have weights \(6,6,8\).

For a five-element selector set \(S\), write
\(p_j(S)=\sum_{\xi\in S}\xi^j\) and
\(e_5(S)=\prod_{\xi\in S}\xi\), omitting \(S\) when it is clear.  Newton's
identities in characteristic two give
\begin{equation}
\label{eq:pure-five-product}
 p_1=p_3=0\quad\Longrightarrow\quad p_5=e_5,
\end{equation}
so the fifth moment is their product.  Put
\[
 U=u_1+u_2,\quad P=u_1u_2,\qquad
 V=v_1+v_2,\quad Q=v_1v_2.
\]
The first, third, and fifth moment equations for \(\mathcal B\) are
\begin{equation}
\label{eq:pure-B-equations}
 R=U+V,\qquad UP+VQ=UVR,\qquad PQR=\gamma_2.
\end{equation}
Indeed, \(u_1^3+u_2^3=U^3+UP\) and
\(U^3+V^3+(U+V)^3=UV(U+V)\).

The private pairs in \(\mathcal A\) and \(\mathcal C\) have products
\[
 \Pi_A\eqdef A_1A_2
 =\frac{U(P+Uw+w^2)}{U+w},\qquad
 \Pi_C\eqdef C_1C_2
 =\frac{V(Q+Vw+w^2)}{V+w},
\]
and sums \(U+w\) and \(V+w\), respectively.  Hence
\begin{equation}
\label{eq:pure-products}
 \gamma_1=Pw\Pi_A,\qquad
 \gamma_2=PQR,\qquad
 \gamma_3=Qw\Pi_C.
\end{equation}
Put \(\gamma_4\eqdef\gamma_1+\gamma_2+\gamma_3\).  The fourth
weight-five word and~\eqref{eq:pure-five-product} give
\(\gamma_4=\Pi_A R\Pi_C\), and therefore
\[
 \frac{\gamma_1\gamma_2\gamma_3}{\gamma_4}=(PQw)^2.
\]

We now parametrize the resulting curve.  Define
\begin{equation}
\label{eq:pure-constant-definitions}
 a\eqdef\frac{\gamma_1}{\gamma_2},\qquad
 b\eqdef\frac{\gamma_3}{\gamma_2},\qquad
 s\eqdef\frac{\gamma_1\gamma_3}{\gamma_2\gamma_4},\qquad
 \lambda\eqdef\sqrt{s}\in F.
\end{equation}
Frobenius makes the square root unique.  Independence gives
\[
 a,b\notin\{0,1\},\quad a+1+b\neq0,\quad
 s\notin\{0,1,a\},\quad a+s\neq0,
\]
and
\begin{equation}
\label{eq:pure-b-relation}
 b=\frac{s(a+1)}{a+s}.
\end{equation}
For a free parameter \(x\), set
\begin{equation}
\label{eq:pure-D-p-q}
\begin{split}
 D&\eqdef(a+\lambda)x+\lambda(a+1),\\
 p&\eqdef\frac{(a+s)(x+1)(x+\lambda)}{D},\\
 q&\eqdef\frac{\lambda(\lambda+1)x(x+\lambda+1)}{D}.
\end{split}
\end{equation}
Adjoin \(R\) subject to
\begin{equation}
\label{eq:pure-Kummer}
 R^5=\frac{\gamma_2}{pq},
\end{equation}
and put
\begin{equation}
\label{eq:pure-normalized-core}
 U=xR,\quad V=(1+x)R,\quad w=\lambda R,\quad
 P=pR^2,\quad Q=qR^2.
\end{equation}
Then~\eqref{eq:pure-B-equations} reduces to
\[
 xp+(1+x)q=x(1+x),\qquad pqR^5=\gamma_2.
\]
Moreover,
\[
 a=\frac{x\lambda(p+x\lambda+\lambda^2)}{q(x+\lambda)}.
\]
Solving these two normalized equations gives exactly
\eqref{eq:pure-D-p-q}.  Finally,
\eqref{eq:pure-b-relation} gives the identity
\[
 b=\frac{(1+x)\lambda
       (q+(1+x)\lambda+\lambda^2)}{p(1+x+\lambda)},
\]
which is the normalized \(\mathcal C\)-product equation.

It remains to split four quadratic pairs simultaneously.  A pair with
nonzero sum \(S_0\) and product \(P_0\) can be written as
\(S_0y,S_0(y+1)\); it splits precisely when
\(y^2+y=P_0/S_0^2\) has a solution.  Put \(K_0=F(x)\) and
\(\overline K=\overline F(x)\).  The four resulting Artin--Schreier
equations are defined over \(K_0\); after extension of
constants, their classes in
\(\overline K/\wp(\overline K)\), where
\(\wp(z)=z^2+z\), are
\begin{equation}
\label{eq:pure-AS-classes}
\begin{split}
 \alpha_u&=\frac{p}{x^2},\\
 \alpha_v&=\frac{q}{(x+1)^2},\\
 \alpha_A&=\frac{x(p+x\lambda+s)}{(x+\lambda)^3}
          =\frac{a}{\lambda}\frac{q}{(x+\lambda)^2},\\
 \alpha_C&=\frac{(1+x)(q+(1+x)\lambda+s)}{(x+\lambda+1)^3}
          =\frac{b}{\lambda}\frac{p}{(x+\lambda+1)^2}.
\end{split}
\end{equation}
If
\(\alpha=c_{-2}\tau^{-2}+c_{-1}\tau^{-1}+O(1)\), subtracting
\(\wp(\sqrt{c_{-2}}\tau^{-1})\) leaves simple-pole coefficient
\(\rho=c_{-1}+\sqrt{c_{-2}}\), so
\(\rho^2=c_{-1}^2+c_{-2}\).  Direct simplification gives
\begin{equation}
\label{eq:pure-residue-table}
\begin{array}{c|c|c}
 \text{class} & \text{private point} & \rho^2\\
 \hline
 \alpha_u & 0 &
 \dfrac{a(s+1)(s+a)(as+1)}{s(a+1)^4}\\[6pt]
 \alpha_v & 1 &
 \dfrac{s(a+1)(s+a)((a+1)s+a)}{a^4(s+1)}\\[6pt]
 \alpha_A & \lambda &
 \dfrac{a(a+1)(s+a)(s+a+1)}{s(s+1)}\\[6pt]
 \alpha_C & \lambda+1 &
 \dfrac{sa(a+1)(s+1)(s^2+a)}{(s+a)^4}.
\end{array}
\end{equation}
Thus the four private residues can vanish only under, respectively,
\begin{equation}
\label{eq:pure-exceptional-values}
 s=\frac1a,\qquad
 s=\frac a{a+1},\qquad
 s=a+1,\qquad
 s^2=a.
\end{equation}
Any two conditions in~\eqref{eq:pure-exceptional-values} imply
\(a^2+a+1=0\).  Their common value is then \(s=a+1=a^2\), and
\eqref{eq:pure-b-relation} gives \(b=a\), contrary to independence.
Hence at most one private residue vanishes.

The polynomial \(D\) is nonconstant: \(a+\lambda=0\) would imply
\(s=a^2\) and then \(b=a\).  Its root is
\[
 \xi_D\eqdef\frac{\lambda(a+1)}{a+\lambda}.
\]
Also
\[
 D(0)=\lambda(a+1),\quad
 D(1)=a(\lambda+1),\quad
 D(\lambda)=\lambda(\lambda+1),\quad
 D(\lambda+1)=s+a,
\]
so \(\xi_D\) is distinct from the four private points.  Every class in
\eqref{eq:pure-AS-classes} has a simple pole at \(\xi_D\).  In a binary
combination, each nonexceptional coefficient is killed by its unique
private odd pole; if the one possible exceptional class remains, its
simple \(\xi_D\)-pole is nonzero.  The four classes are therefore linearly
independent in \(\overline K/\wp(\overline K)\).

For the effective point count, we use the standard Kummer,
Artin--Schreier, conductor--discriminant, and Hasse--Weil facts in
\cite[Chaps.~III and V]{Stichtenoth2009}.  Observe that
\begin{equation}
\label{eq:pure-Kummer-divisor}
 \frac{\gamma_2}{pq}
 =\text{(nonzero constant)}\,
 \frac{D^2}{x(x+1)(x+\lambda)(x+\lambda+1)}.
\end{equation}
This function has simple poles at \(0,1,\lambda,\lambda+1\) and double
zeros at \(\xi_D\) and \(\infty\).  Let \(L_0=K_0(R)\) and
\(\overline L=\overline K(R)\).  The displayed function is not a fifth
power even over \(\overline K\).  Thus
\(\overline L/\overline K\) has degree \(5\), is totally and tamely
ramified at these six points, and has genus \(8\):
\[
 2g(\overline L)-2=5(-2)+6(5-1)=14.
\]
The four Artin--Schreier classes remain independent over
\(\overline L\): otherwise the quadratic extension of \(\overline K\)
defined by a nonzero binary combination would embed in the odd-degree
extension \(\overline L/\overline K\).  Let \(M_0/L_0\) be the
compositum defined by the four equations \(y_i^2+y_i=\alpha_i\), and put
\(\overline M=M_0\overline F\).  The preceding independence shows that
\(\overline M/\overline L\) has group
\((\mathbb Z/2\mathbb Z)^4\); in particular, \(M_0\) is geometrically
integral.

After reduction over \(\overline K\), every nonzero character has only
simple poles among the four private points and \(\xi_D\).  Pullback to
\(\overline L\) changes each such pole to the reduced odd order \(5\),
hence conductor exponent \(6\).  At a fixed private point, at most the
eight characters containing the corresponding basis class have a pole;
at \(\xi_D\), at most all fifteen nontrivial characters do.  The total
number of character--pole incidences is therefore at most
\[
 4\cdot8+15=47.
\]
The conductor--discriminant formula for
\(\overline M/\overline L\) and Riemann--Hurwitz give
\[
 \deg\operatorname{Diff}(\overline M/\overline L)
 \leq6\cdot47=282,
\]
\[
 2g(\overline M)-2\leq16(2\cdot8-2)+282=506,
 \qquad g(\overline M)\leq254.
\]

Let \(\widetilde{\mathcal C}\) be the smooth projective \(F\)-curve with
function field \(M_0\).  It is geometrically integral
by the preceding paragraph.  The only base values excluded by the formulas
are
\[
 0,1,\lambda,\lambda+1,\xi_D,\infty.
\]
Indeed, the first four are the zeros of \(p\) or \(q\) at which one of the
four pair sums \(xR,(1+x)R,(x+\lambda)R,(1+x+\lambda)R\) vanishes;
\(\xi_D\) is the zero of \(D\), hence a pole of \(p\) and \(q\), and
\(\infty\) is their remaining pole.  Thus the list accounts for every zero
or pole used in the reconstruction formulas.
Each is totally ramified in the Kummer step, so there is one point of
\(L_0\) above it and at most \(16\) points of \(M_0\).  Thus at most \(96\)
\(F\)-points are excluded.  Hasse--Weil gives
\[
 \#\widetilde{\mathcal C}(F)\geq |F|+1-508\sqrt{|F|}.
\]
For \(|F|\geq2^{18}\), the right-hand side is at least \(2049>96\).
There is an \(F\)-rational point away from the excluded fibers.

At such a point, choose \(y_u,y_v,y_A,y_C\in F\) satisfying
\(y_i^2+y_i=\alpha_i\), and recover
\begin{align*}
 u_1&=Uy_u,&u_2&=U(y_u+1),\\
 v_1&=Vy_v,&v_2&=V(y_v+1),\\
 A_1&=(U+w)y_A,&A_2&=(U+w)(y_A+1),\\
 C_1&=(V+w)y_C,&C_2&=(V+w)(y_C+1).
\end{align*}
Together with \(w=\lambda R\) and the central locator \(R\), these are the
ten labeled slots above.  Their pair sums and products are
\(U,P,V,Q,U+w,\Pi_A,V+w,\Pi_C\), respectively, so
\eqref{eq:pure-B-equations}--\eqref{eq:pure-products} verify all three
target syndromes.  Any zero or cross-slot coincidence only decreases the
reduced binary support.
\end{proof}

\section{Pure sign calculations for at most six errors}
\label{app:pure-sign-independence}

This appendix proves Lemma~\ref{lem:pure-sign-independence-small-e}.  The
proof computes the Berlekamp sign classes of pure completion polynomials.
These coefficient calculations are necessary for the low-error refinement,
but the main common-core argument uses only their independence conclusion.
Recall that
\[
 K=\overline F(z_1,\ldots,z_{e-1}),
 \qquad
 \wp(K)=\{u^2+u:u\in K\},
\]
and that \([a_{\bfs_\gamma}]\in K/\wp(K)\) denotes the quadratic sign
class associated with the pure target
\(\bfs_\gamma^\T=(0,\ldots,0,\gamma)^\T\).

\begin{proof}[Proof of Lemma~\ref{lem:pure-sign-independence-small-e}
for \(2\leq e\leq4\)]
For a separable monic polynomial \(f=\prod_i(t-r_i)\) in characteristic two,
write
\[
 \operatorname{Berl}(f)\eqdef
 \sum_{i<j}\frac{r_ir_j}{(r_i+r_j)^2}.
\]
Berlekamp's criterion~\cite{Berlekamp1976Discriminant} identifies its class
modulo \(\wp(K)\) with the quadratic sign subfield of the splitting field.
For \(e=2\), Proposition~\ref{prop:sign-input}(2) already proves the claim:
the leading coefficient of a nonempty sum is
\(\sum_i\gamma_i\neq0\).

Let \(e=3\).  Put
\[
 a=z_1+z_2,\qquad b=z_1z_2.
\]
On the dense open set \(ab\neq0\), Newton identities give the pure-target
completion polynomial
\[
 f_\gamma(t)=t^3+at^2+
 \left(b+\frac{\gamma}{ab}\right)t+\frac{\gamma}{b}.
\]
For \(f=t^3+\alpha_1t^2+\alpha_2t+\alpha_3\), direct expansion of the Berlekamp invariant gives
\[
 \operatorname{Berl}(f)=
 \frac{\alpha_3^2+\alpha_1\alpha_2\alpha_3+\alpha_1^3\alpha_3+\alpha_2^3}
 {(\alpha_3+\alpha_1\alpha_2)^2}.
\]
Substitution yields the following representative of the sign class:
\begin{equation}
\label{eq:cubic-pure-berlekamp}
 \beta_\gamma=
 \frac b{a^2}+\frac{\gamma}{ab^2}+\frac{a\gamma}{b^3}
 +\frac{\gamma}{a^3b}+\frac{\gamma^2}{a^4b^3}
 +\frac{\gamma^3}{a^5b^5}.
\end{equation}
Although derived on \(ab\neq0\), this is an identity in \(K\), so we may
evaluate it at boundary valuations.  For a nonempty set
\(I\subseteq\{1,\ldots,u\}\), put
\[
 \delta\eqdef\sum_{i\in I}\gamma_i\neq0,
 \qquad S_3\eqdef\sum_{i\in I}\gamma_i^3.
\]
At the generic divisor \(z_1=0\), the element \(b\) is a uniformizer and
\(a\) is transcendental in the residue field.  If \(S_3\neq0\), the sum of
the representatives in~\eqref{eq:cubic-pure-berlekamp} has an odd leading
pole of order five.  If \(S_3=0\), its leading pole has order three and
coefficient
\[
 \frac{\delta(a^5+\delta)}{a^4}\neq0
 \quad\text{in}\quad\overline F(a).
\]
An Artin--Schreier coboundary with a pole has even pole order.  Hence no
nonempty sum of the cubic sign classes lies in \(\wp(K)\).

Let \(e=4\), and let \(a,b,c\) be the elementary symmetric functions of
\(z_1,z_2,z_3\).  Put
\[
 H\eqdef ab+c=(z_1+z_2)(z_1+z_3)(z_2+z_3).
\]
On the dense open set \(cH\neq0\), Newton identities give
\begin{equation}
\label{eq:quartic-pure-completion}
 f_\gamma(t)=t^4+at^3+(b+u_\gamma)t^2
 +(c+au_\gamma)t+\frac{\gamma}{c},
 \qquad
 u_\gamma\eqdef\frac{a\gamma}{cH}.
\end{equation}
For \(f=t^4+\alpha_1t^3+\alpha_2t^2+\alpha_3t+\alpha_4\), direct symmetric expansion of the Berlekamp
invariant gives
\[
 \operatorname{Berl}(f)=
 \frac{n(\alpha_1,\alpha_2,\alpha_3,\alpha_4)}
 {(\alpha_1\alpha_2\alpha_3+\alpha_1^2\alpha_4+\alpha_3^2)^2},
\]
where
\[
 \begin{split}
 n={}&\alpha_1^2\alpha_2^3\alpha_4+\alpha_1^3\alpha_3^3
       +\alpha_1^3\alpha_2\alpha_3\alpha_4+\alpha_1^4\alpha_4^2\\
    &{}+\alpha_2^3\alpha_3^2+\alpha_1\alpha_2\alpha_3^3+\alpha_3^4.
 \end{split}
\]
After substituting~\eqref{eq:quartic-pure-completion}, the denominator is
\(c^2H^2\), and, writing \(\theta=u_\gamma\),
\[
 n=n_{\leq2}+(a^6+cH)\theta^3+ac\theta^4+a^2\theta^5,
 \qquad \deg_\theta n_{\leq2}\leq2.
\]
Consequently,
\begin{equation}
\label{eq:quartic-pure-berlekamp-leading}
 \beta_\gamma=
 \frac{a^7\gamma^5}{c^7H^7}
 +\frac{a^5\gamma^4}{c^5H^6}
 +\left(\frac{a^9}{c^5H^5}+\frac{a^3}{c^4H^4}\right)\gamma^3
 +O(H^{-4}),
\end{equation}
where the last term has pole order at most four along \(H=0\) with \(a,c\)
units.

For a nonempty set \(I\subseteq\{1,\ldots,u\}\), again put
\[
 \delta=\sum_{i\in I}\gamma_i\neq0,
 \qquad S_j=\sum_{i\in I}\gamma_i^j.
\]
If \(S_5\neq0\), equation~\eqref{eq:quartic-pure-berlekamp-leading} has an
odd leading pole of order seven.  Suppose that \(S_5=0\).  Then
\(S_4=\delta^4\neq0\).  At the component \(z_1+z_2=0\), use local
coordinates
\[
 z_1=x,\qquad z_2=x+\tau,\qquad z_3=v,
 \qquad w=x+v,
\]
and localize further at \(xvw\neq0\).  Thus
\[
 a=v+\tau,\qquad c=xv(x+\tau),\qquad H=\tau w(w+\tau).
\]
Expanding~\eqref{eq:quartic-pure-berlekamp-leading} gives
\[
 \sum_{i\in I}\beta_{\gamma_i}
 =\ell_6\tau^{-6}+\ell_5\tau^{-5}+O(\tau^{-4}),
\]
where
\[
 \ell_6=\frac{\delta^4}{x^{10}w^{12}}
 =\left(\frac{\delta^2}{x^5w^6}\right)^2,
 \qquad
 \ell_5=\frac{\delta^4+S_3xv^5w}{x^{11}vw^{11}}\neq0.
\]
Indeed, \(xv^5(x+v)\) is nonconstant in the algebraically independent
residues \(x,v\), whereas \(\delta^4\neq0\).  With
\(\rho=\delta^2/(x^5w^6)\), subtracting
\[
 \wp(\rho\tau^{-3})=\rho^2\tau^{-6}+\rho\tau^{-3}
\]
cancels the pole of order six without changing the coefficient of
\(\tau^{-5}\).  The reduced class therefore has an odd leading pole of order
five.  Thus no nonempty sum lies in \(\wp(K)\).  This proves the cases
\(2\leq e\leq4\).
\end{proof}

\begin{proof}[Proof of Lemma~\ref{lem:pure-sign-independence-small-e}
for \(e=5,6\)]
Write the generic core polynomial and its reciprocal as
\[
 \Phi(y)=\prod_{i=1}^{e-1}(y+z_i)
 =y^{e-1}+b_1y^{e-2}+\cdots+b_{e-1},
 \qquad
 \widehat\Phi(t)=t^{e-1}\Phi(t^{-1}).
\]
There is a unique polynomial
\(\Psi(t)=\sum_{j=1}^{e}d_jt^j\) over
\(\overline F(b_1,\ldots,b_{e-1})\) satisfying
\begin{equation}
\label{eq:pure-completion-reciprocal}
 (\widehat\Phi\Psi)'=t^{2e-2}.
\end{equation}
Put
\[
 \Psi^*(y)=y^e\Psi(y^{-1}),
 \qquad
 f_\gamma(y)=y\Phi(y)+\gamma\Psi^*(y).
\]
The reciprocal polynomial of \(\Phi f_\gamma\) is
\(\widehat\Phi^2+\gamma\widehat\Phi\Psi\), whose derivative is
\(\gamma t^{2e-2}\).  Newton's identities therefore show that
\(f_\gamma\) is the completion polynomial of the pure target
\(\bfs_\gamma^\T\).

We use an exact coefficient form of the Berlekamp invariant.  For the
universal monic polynomial
\(f_0(y)=y^e+a_1y^{e-1}+\cdots+a_e\) over \(\mathbb Z[a_1,\ldots,a_e]\),
put
\[
 \Delta_+(f_0)=
 \frac{\operatorname {Res}_y(f_0(y),(-1)^ef_0(-y))}
      {2^e(-1)^ea_e},
 \qquad
 \Xi(f_0)=\frac{\Delta_+(f_0)-\operatorname {Disc}(f_0)}4.
\]
These are integral polynomials
\cite[Section~2.2]{Carmon2015ChowlaCharTwo}.  Reducing their coefficients
modulo two and substituting those of \(f_\gamma\) gives
\begin{equation}
\label{eq:universal-berlekamp-quotient}
 \beta_\gamma=
 \frac{\overline\Xi(f_\gamma)}{\overline\Delta_+(f_\gamma)}.
\end{equation}
It equals Berlekamp's symmetric root expression used above and hence
represents the sign class.

We apply~\eqref{eq:universal-berlekamp-quotient} at the generic divisor
\(b_{e-1}=0\).  To prove that the required Laurent coefficients in its
residue field are nonzero, specialize the remaining coefficients as follows.
For \(e=5\), take
\[
 \Phi(y)=y^4+xy^3+y^2+y+\tau,
 \qquad q_5=x^2\tau+x+1.
\]
Equation~\eqref{eq:pure-completion-reciprocal} gives
\[
 d_1=0,\quad d_2=\frac{x+1}{\tau q_5},\quad
 d_3=\frac{x(x+1)}{\tau q_5},\quad
 d_4=\frac{x+x\tau+1}{\tau q_5},\quad d_5=\frac1\tau.
\]
For \(e=6\), take
\[
 \Phi(y)=y^5+xy^4+y^3+y^2+y+\tau,
 \quad r_6=x^2+x+1,
 \quad q_6=r_6+(x+1)\tau+\tau^2,
\]
and obtain
\[
 \begin{split}
 d_1&=0,\qquad d_2=\frac{r_6+x\tau}{\tau q_6},\qquad
 d_3=\frac{xr_6+x^2\tau}{\tau q_6},\\
 d_4&=\frac{r_6+\tau}{\tau q_6},\qquad
 d_5=\frac{r_6+x^2\tau}{\tau q_6},\qquad d_6=\frac1\tau.
 \end{split}
\]

Write
\[
 \beta_\gamma=\sum_{j=0}^{2e-3}c_{e,j}(x,\tau)\gamma^j,
 \qquad c_{e,j}=0\quad\text{outside }0\leq j\leq2e-3.
\]
The coefficient \(c_{e,0}\) is regular at \(\tau=0\) and is therefore omitted
from the polar table.  For \(1\leq j\leq2e-3\), define \(\nu_{e,j}\) and
\(\lambda_{e,j}\) by
\[
 c_{e,j}(x,\tau)=\lambda_{e,j}(x)\tau^{-\nu_{e,j}}
 +O(\tau^{-\nu_{e,j}+1}).
\]
Exact expansion of~\eqref{eq:universal-berlekamp-quotient} gives the table
below.  The factors \(u_5=(x+1)^9\) and \(u_6=r_6^{11}\) clear residue-field
units.
\[
\begin{array}{c|c|c|c@{\qquad}c|c|c}
 e&j&\nu_{e,j}&u_e\lambda_{e,j}&j&\nu_{e,j}&u_e\lambda_{e,j}\\ \hline
5&7&9&u_5&6&7&x(x+1)^8\\
 &5&7&x^4u_5&4&5&(x^4+x^3+x^2+x+1)u_5\\
 &3&5&u_5&2&3&x(x+1)^8(x^2+x+1)\\
 &1&3&u_5&&&\\ \hline
6&9&11&u_6&8&9&(x+1)r_6^{10}\\
 &7&9&x^4u_6&6&7&(x^3+x+1)u_6\\
 &5&7&u_6&4&5&(x^4+x^3+x^2+x+1)r_6^{10}\\
 &3&5&u_6&2&3&x^2(x+1)r_6^{10}\\
 &1&3&u_6&&&
\end{array}
\]
An exact-arithmetic verification of the completion identities and every
entry of this table is archived in the accompanying Zenodo
record~\cite{BelinskyYohananov2026LaurentVerification}.

Let \(\varnothing\neq I\subseteq\{1,\ldots,u\}\), and put
\[
 \delta=\sum_{i\in I}\gamma_i\neq0,
 \qquad S_j=\sum_{i\in I}\gamma_i^j.
\]
For \(e=5\), if \(S_7\neq0\), the sum of the sign representatives has an
odd leading pole of order nine.  If \(S_7=0\), its coefficient at order
 seven, after division by \(u_5\), is
\[
 \frac{x}{x+1}S_6+x^4S_5.
\]
This rational function vanishes only when \(S_6=S_5=0\).  Then
\(S_3^2=S_6=0\), while \(S_4=\delta^4\neq0\), so the coefficient at order
five is \((x^4+x^3+x^2+x+1)\delta^4\neq0\).

For \(e=6\), the pole order is eleven when \(S_9\neq0\).  If \(S_9=0\),
 the coefficient at order nine, after division by \(u_6\), is
\[
 \frac{x+1}{r_6}S_8+x^4S_7.
\]
Here \(S_8=\delta^8\neq0\), and the two coefficient functions are not
constant-proportional.  Thus this rational function is nonzero.  Every
nonempty sum therefore retains an odd pole.

It remains to justify the residue specializations.  First take the generic
valuation \(v_{b_{e-1}}\) in
\(\overline F(b_1,\ldots,b_{e-1})\), and let \(P\) be the displayed place of
its residue field, extended by \(v_P(\tau)=0\).  At \(\tau=0\), the residual
polynomials \(\Phi(y)/y\) are separable over \(\overline F(x)\), have
nonzero constant term, and \(q_5(0)=x+1\), \(q_6(0)=r_6\) are \(P\)-units.
Thus every summed representative above is \(P\)-integral and its reduction
is exactly the displayed specialized Laurent series.  If a generic sum were
\(w^2+w\) and \(v_P(w)<0\), then
\(v_P(w^2+w)=2v_P(w)<0\), contradicting integrality.  Hence \(w\) would be
\(P\)-integral, and reduction modulo \(P\) would express the specialized sum
as an Artin--Schreier coboundary.  Its odd leading pole rules this out.
Finally, on the ordered-root component \(z_1=0\),
\[
 b_{e-1}=z_1\prod_{i=2}^{e-1}z_i
\]
has valuation one.  Pullback to \(K\) therefore preserves each odd pole.
No nonempty binary sum lies in \(\wp(K)\), which completes the proof.
\end{proof}

\def\bysame{Gerard van der Geer and Marcel van der Vlugt}
\enlargethispage{6\baselineskip}
\bibliographystyle{amsplain}
\bibliography{references}

\end{document}